\def\BState{\State\hskip-\ALG@thistlm}
\theoremstyle{definition}
\newtheorem{exmp}{Example}[section]
\newcommand*{\boxednumber}[1]{%
    \expandafter\readdigit\the\numexpr#1\relax\relax
}
\newcommand*{\readdigit}[1]{%
    \ifx\relax#1\else
        \boxeddigit{#1}%
        \expandafter\readdigit
    \fi
}
\newcommand*{\boxeddigit}[1]{\fbox{#1}\hspace{-\fboxrule}}
\newcommand{\ket}[1]{\ensuremath{\left|#1\right\rangle}}
\newtheorem{theorem}{Theorem}[section]
\newtheorem{definition}{Definition}[section]
\newtheorem{corollary}{Corollary}[section]
\newtheorem{lemma}{Lemma}[section]
\newtheorem{conjecture}[theorem]{Conjecture}
\title{\textbf{A Gentle Introduction to \\Quantum Computing Algorithms with \\
Applications to Universal Prediction}}
\author{Elliot Catt$^1$ \and Marcus Hutter$^{1,2}$}
\date{%
     $^1$Australian National University, $^2$Deepmind\\[2ex]%
     \{elliot.carpentercatt,marcus.hutter\}@anu.edu.au\\[2ex]%
     \today \\
}
\begin{document}
 
\maketitle

\let\cleardoublepage\clearpage

\section*{Abstract}
	In this technical report we give an elementary introduction to Quantum Computing for non-physicists. In this introduction we describe in detail some of the foundational Quantum Algorithms including: the Deutsch-Jozsa Algorithm, Shor's Algorithm, Grocer Search, and Quantum Counting Algorithm and briefly the Harrow-Lloyd Algorithm. Additionally we give an introduction to Solomonoff Induction, a theoretically optimal method for prediction. We then attempt to use Quantum computing to find better algorithms for the approximation of Solomonoff Induction. This is done by using techniques from other Quantum computing algorithms to achieve a speedup in computing the speed prior, which is an approximation of Solomonoff's prior, a key part of Solomonoff Induction. The major limiting factors are that the probabilities being computed are often so small that without a sufficient (often large) amount of trials, the error may be larger than the result. If a substantial speedup in the computation of an approximation of Solomonoff Induction can be achieved through quantum computing, then this can be applied to the field of intelligent agents as a key part of an approximation of the agent AIXI.

 \newpage
 
\parskip=-0.5ex\tableofcontents

\let\cleardoublepage\clearpage
    
\listoffigures 

\listofalgorithms
 
\section{Introduction}
Quantum Computing (QC) is a form of computation that has been shown to outperform classical computing on several tasks. These tasks include factoring numbers, querying databases, and some aspects of machine learning. There has been little work done on using Quantum Computing to achieve speedups in areas relevant to the study of inductive reasoning, reinforcement learning and algorithmic information theory. 

By itself, inductive reasoning is a powerful tool that can solve a myriad of problems. One of the greatest of these, the problem of Artificial General Intelligence (AGI), is unlikely to be solved by inductive reasoning alone. \cite{Hutter:04uaibook} presented a theoretical solution to the problem of AGI with the optimal agent AIXI. The agent AIXI is a combination of reinforcement learning and algorithmic information theory. AIXI is unfortunately incomputable, however it can be approximated.

In this technical report we present an introduction to Quantum Computing from a Computer Science/Mathematics perspective avoiding physics terminology as well as a gentle introduction to Universal Induction. We conclude with a presentation of putting these two ideas together in the form of Quantum Algorithms which improve on the classical method of approximating AIXI.

 In section 2, we will give a short description of the background required for this technical report; this includes computability, probability, 
 and computational complexity theory. 

 In section 3, we will describe quantum computing, from its inception to the present; the foundation of quantum complexity theory; advances in quantum computability; and some well-known quantum algorithms which provide large speedups over their classical counterparts, these include: the Deutsch-Jozsa Algorithm, Shor's Algorithm, Grocer Search, and Quantum Counting Algorithm and briefly the Harrow-Lloyd Algorithm. 
 
 In section 4, we go over the hardness of counting, specifically the implications of a fast classical or quantum algorithm. 
 
 In section 5, we will describe the problem of induction, specifically inductive reasoning, as well as some approaches to the problem of induction. This will include an explanation of Kolmogorov complexity and Solomonoff Induction.

 In section 6, we discuss the Speed prior, and how computing the Speed prior relates to counting, and that therefore it is unlikely that there exists an exponential speedup for it with quantum computing; we then describe some of the classical algorithms which could compute the Speed prior, and go on to present two quantum algorithms to compute the Speed prior.
 
 The advantage in using the quantum algorithms to compute the Speed prior 
 is that there is a potential speedup in the time taken. The first quantum algorithm provides a quadratic speedup over the classical method, while the second provides an exponential speedup, however the approximation is more crude.
 
 By providing quantum algorithms to the problem of approximating Solomonoff induction, via the speed prior we have also provided an approach to approximate AIXI using quantum computing, which we call AIXIq.

\section{Preliminaries}
 
In this section we will cover some of the prerequisites of this paper, namely the basics of probability, computability and complexity.

\subsection{Probability}

Probability is the study of events and their likelihoods. To describe later results we will need to define probability measures on binary strings. We use the notation for binary strings: $\mathbb{B} = \{ 0,1\}$, $\mathbb{B}^* = \bigcup_n \{0,1\}^n$, $\epsilon$ is the empty string, and $\mathbb{B}^{\infty}$ is the set of all one way infinite sequences of $\mathbb{B}$.

\begin{definition}[Cylinder set \citep{ming2014kolmogorov}]
	A cylinder set $\Gamma_x \subseteq \mathbb{B}^{\infty}$, for $x\in\mathbb{B}^*$, is the set
	\[ \Gamma_x = \{ x\omega : \omega \in \mathbb{B}^{\infty} \} \]
\end{definition}

Let $\mathbb{G} = \{ \Gamma_x : x\in \mathbb{B}^* \}$ then a function $\mu' :\mathbb{G}\to \mathbb{R}$ is defines a probability measure if
\begin{align*}
		1. &\ \mu'(\Gamma_{\epsilon} ) = 1 \\
		2. &\ \mu' (\Gamma_x) = \sum_{b\in\mathbb{B}^*} \mu' (\Gamma_{xb})
	\end{align*} We will use the notation where $\mu (x) = \mu'(\Gamma_x )$, then the definition of a measure can be written as
\begin{definition}[Probability Measure \citep{ming2014kolmogorov}]
	The a function $\mu :\mathbb{B}^*\to \mathbb{R}$ is defines a probability measure if
\begin{align*}
		1. &\ \mu(\epsilon ) = 1 \\
		2. &\ \mu (x) = \sum_{b\in\mathbb{B}^*} \mu (xb)
	\end{align*}
\end{definition} 

Expected value is another tool from probability which we will be using.

\begin{definition}
	Given a probability mass function $P(x)$ the (discrete) expected value of a function $f$ is defined as follows
	\begin{equation}
		\mathbb{E}[f] = \sum_x f(x)P(x)
	\end{equation}
\end{definition}

An example of the expected value is given below,

\begin{exmp}
	The expected value of the sum of rolling two fair 6-sided dice with the probability mass function $P(x)=1/36$, the expected value formula gives us the following formula
\[ \sum_{a,b\in \{1,2,3,4,5,6\} }(a+b) \cdot \frac{1}{36} = 7 \]
\end{exmp}

\subsection{Computability}
First proposed by Alan Turing in \cite{turing1937computable}, Turing machines are a class of machines with a tape of zeros and ones, a head which moves along and writes on the tape, and a program determining where the head should go. This simple form of computation is the foundation of computer science, which as a field could be described as ``things that can be done with Turing machines''.

Turing machines are not the only form of computation, another equal form of computation is Lambda calculus (Church), which takes a much more functional (in the mathematical sense) approach to computing. Other forms of equivalent computation include (but are not limited to) partial recursive functions, register machines, and Markov algorithms. The idea that all forms of computation are ``equivalent'' is called the Church-Turing thesis.

Formally we define a Turing Machine as follows,
\begin{definition}[\cite{bernstein1997quantum}]
\label{tmdef}
	A deterministic Turing machine is a triplet $(\Sigma,Q,\delta )$, where $\Sigma$ is a finite alphabet with an identified blank symbol $\#$, $Q$ is a finite set of states with identified initial state $q_0$ and finial state $q_f\neq q_0$, and $\delta$, a deterministic transition function, is a function \begin{equation}
		\delta \ : \ Q\times \Sigma \to \Sigma \times Q \times \{ L,R \}
	\end{equation} 
	Here $\{L,R\}$ denote left and right, directions to move on the tape. The state $q_f$ is also called the Halting state.
\end{definition}
Turing machines can be thought of as a head moving along an infinite tape, and on this tape are elements from the alphabet $\{\# \}\cup\Sigma$. The head moves up and down the tape, and reads and writes according to the function $\delta$.

Each Turing machine can be represented by a partial function, $\delta^*$, which takes the initial Turing machine tape as input, and outputs the contents of the tape once the Turing machine halts, if it does halt, and outputs undefined if it does not halt. Hence being a partial function. When referring to the Turing machine as a function we will be referring to the $\delta^*$ of that Turing machine. 

A configuration of a Turing machine, which is a current description of the Turing Machine is defined as follows

\begin{definition}[Configuration]
\label{confdef}
	A configuration (of a Turing Machine) is a tuple $(d,h,q)$ where $d$ is a description of the contents of the tape, $h$ is the location of the head symbol, and $q$ represents the state the Turing machine is in.
\end{definition}
To receive an output on our computation we require the Turing machine to halt, that is, eventually enter the final state $q_f$. Unfortunately this is not always the case. For example consider the machines below,

\begin{exmp}
\begin{align*}
	\Sigma &= \{\#,1\} \\
	Q &= (q_0,q_f) \\
	\delta(q_0,\#) &= (1,q_0,R) \\
	\delta(q_0,1) &= (1,q_0,R)
\end{align*}
This machine writes 1, then moves right forever. It will never halt since the function $\delta$ never maps to the state $q_f$. Below is the state diagram of this Turing machine,
\begin{center}
\begin{tikzpicture}[scale=0.2]
\tikzstyle{every node}+=[inner sep=0pt]
\draw [black] (36.1,-33.4) circle (3);
\draw (36.1,-33.4) node {$q_0$};
\draw [black] (46,-33.4) circle (3);
\draw (46,-33.4) node {$q_f$};
\draw [black] (37.423,-36.08) arc (54:-234:2.25);
\draw (36.1,-40.65) node [below] {$\delta(q_0,\#)$};
\fill [black] (34.78,-36.08) -- (33.9,-36.43) -- (34.71,-37.02);
\draw [black] (34.777,-30.72) arc (234:-54:2.25);
\draw (36.1,-26.15) node [above] {$\delta(q_0,1)$};
\fill [black] (37.42,-30.72) -- (38.3,-30.37) -- (37.49,-29.78);
\end{tikzpicture}
\end{center}
\end{exmp}

A non-halting machine may not use an infinite amount of tape, as seen in the next example.

\begin{exmp}
\begin{align*}
	\Sigma &= \{\#,1\} \\
	Q &= (q_0,q_1,q_f) \\
	\delta(q_0,\#) &= (1,q_1,R) \\
	\delta(q_1,\#) &= (1,q_0,L) \\
	\delta(q_0,1) &= (1,q_1,R) \\
	\delta(q_1,1) &= (1,q_0,L)
\end{align*}
	This machine will move left then right and so on. Again this machine will never halt since the function $\delta$ never maps to the state $q_f$. Below is the state diagram of this Turing machine,
	\begin{center}
\begin{tikzpicture}[scale=0.2]
\tikzstyle{every node}+=[inner sep=0pt]
\draw [black] (22.8,-34) circle (3);
\draw (22.8,-34) node {$q_0$};
\draw [black] (43.2,-34) circle (3);
\draw (43.2,-34) node {$q_1$};
\draw [black] (53.4,-34) circle (3);
\draw (53.4,-34) node {$q_f$};
\draw [black] (23.42,-31.076) arc (159.6208:20.3792:10.22);
\fill [black] (42.58,-31.08) -- (42.77,-30.15) -- (41.83,-30.5);
\draw (33,-23.91) node [above] {$\delta(q_0,1)$};
\draw [black] (25.062,-32.038) arc (124.76934:55.23066:13.92);
\fill [black] (40.94,-32.04) -- (40.57,-31.17) -- (40,-31.99);
\draw (33,-29.05) node [above] {$\delta(q_0,\#)$};
\draw [black] (42.747,-36.955) arc (-17.14503:-162.85497:10.2);
\fill [black] (23.25,-36.95) -- (23.01,-37.87) -- (23.97,-37.57);
\draw (33,-44.65) node [below] {$\delta(q_1,1)$};
\draw [black] (41.076,-36.109) arc (-51.78538:-128.21462:13.055);
\fill [black] (24.92,-36.11) -- (25.24,-37) -- (25.86,-36.21);
\draw (33,-39.41) node [below] {$\delta(q_1,\#)$};
\end{tikzpicture}
\end{center}
\end{exmp}

Turing machines can also be viewed as functions equivalent to partial recursive functions \cite{boolos2002computability}.

The `Halting problem' is determining whether or not a Turing machine will halt on a given input. Turing proved that one cannot in fact use a Turing machine to determine that another Turing machine will not halt on any input. 

\begin{theorem}[Halting problem]
	There does not exist a Turing Machine which can determine if a given Turing Machine will not halt on any input.
\end{theorem}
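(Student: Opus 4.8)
First I would pin down the reading of the statement: the property to be decided is ``$M$ halts on no input at all,'' i.e. membership in the set
$\mathrm{NEVER} = \{\, M : \delta_M^* \text{ is undefined on every input} \,\}$.
The strategy is a proof by contradiction carried out through a \emph{reduction}: I will assume that some Turing machine $D$ decides $\mathrm{NEVER}$ (it always halts, answering ``yes'' exactly when $M$ fails to halt on every input), and then use $D$ to build a decider for the ordinary input-specific halting problem, which I first show cannot exist. The direction is the whole point: I manufacture a $\mathrm{NEVER}$-instance out of an arbitrary single-input halting instance, reducing the halting problem \emph{to} $\mathrm{NEVER}$, rather than trying to read the ``no input'' statement off the basic halting problem ``in particular.''

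Step one is to establish the ordinary halting problem as a lemma by diagonalization. Suppose for contradiction a machine $h$ decides, for every pair $(\langle M\rangle, w)$, whether $M$ halts on $w$. From $h$ I construct a machine $G$ that on input $\langle M\rangle$ runs $h(\langle M\rangle, \langle M\rangle)$ and then deliberately contradicts the answer: $G$ loops forever if $h$ reports that $M$ halts on its own description, and halts otherwise. Running $G$ on $\langle G\rangle$ then forces $G$ to halt on $\langle G\rangle$ if and only if it does not halt on $\langle G\rangle$, the standard contradiction, so no such $h$ exists.

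Step two is the reduction proper, and it is where the universal quantifier ``on any input'' gets tamed. Given an arbitrary instance $(\langle M\rangle, w)$ of the halting problem, I effectively construct a new machine $M'$ that \emph{ignores its own input} and simply simulates $M$ on the fixed string $w$. The key observation is that $M'$ admits only two global behaviours: if $M$ halts on $w$ then $M'$ halts on \emph{every} input, and if $M$ does not halt on $w$ then $M'$ halts on \emph{no} input. Hence $M' \in \mathrm{NEVER}$ if and only if $M$ does \emph{not} halt on $w$. Feeding $\langle M'\rangle$ to the assumed decider $D$ and negating its output therefore decides the halting problem, contradicting step one; so $D$ cannot exist.

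The main obstacle — exactly the gap a naive ``in particular'' argument papers over — is the collapsing gadget of step two. I must check that the map $(\langle M\rangle, w) \mapsto \langle M'\rangle$ is itself computable, so that $D$ precomposed with it is a genuine Turing machine, and, more importantly, that $M'$ really converts a claim about the single input $w$ into a claim about \emph{all} inputs at once. This is what lets a decider for the ``no input'' property answer an ordinary one-input halting question. Once the equivalence ``$M' \in \mathrm{NEVER}$ iff $M$ does not halt on $w$'' is secured, the contradiction with step one is immediate and the theorem follows.
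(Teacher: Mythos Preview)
The paper does not supply a proof of this theorem; it is stated and then used only as a remark that incomputable problems exist. So there is nothing in the paper to compare your argument against line by line.

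Your proposal is correct. Step one is the standard diagonalization for the input-specific halting problem, and step two is a clean many-one reduction: the gadget $M'$ that ignores its own input and simulates $M$ on the fixed $w$ is computable from $(\langle M\rangle,w)$, and it does exactly what you claim, collapsing the universal quantifier so that $M'\in\mathrm{NEVER}$ iff $M$ fails to halt on $w$. Composing with the assumed decider $D$ and negating yields a halting decider, contradicting step one.

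One remark on interpretation. The paper's English is loose: immediately before the theorem it speaks of ``whether or not a Turing machine will halt on a given input,'' which is the ordinary single-input halting problem, and the theorem's phrase ``will not halt on any input'' is most plausibly just an informal restatement of that (i.e.\ the co-halting problem for a specified input), rather than the stronger property $\mathrm{NEVER}$ you analyse. Under that reading your step one alone already suffices. Your two-step argument is therefore more than what the paper likely intends, but it is sound either way and the extra reduction does no harm.
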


The `Halting problem' is a good example of something which is incomputable: it is something which no Turing machine can compute.

\subsection{Complexity Theory}
Complexity theory is the study of how long it takes to compute a function on a Turing machine. 

Specifically, given a Turing machine (which is representing a function $\delta^*$), how many steps does the Turing machine take to compute that function in terms of the size of the input of the Turing machine.

For example consider an ADD Turing machine, which takes two (unary) numbers and adds them together, as shown below

\begin{exmp}
	\begin{align*}
	\Sigma &= \{\#,1\} \\
	Q &= (q_0,q_1,q_2,q_f) \\
	\delta(q_0,1) &= (1,q_0,R) \\
	\delta(q_0,\#) &= (1,q_1,R) \\
	\delta(q_1,1) &= (1,q_1,R) \\
	\delta(q_1,\#) &= (\#,q_2,L) \\
	\delta(q_2,1) &= (\#,q_f,R)
\end{align*}
The ADD Turing machine takes an input of two unary numbers separated by a $\#$, $(1^l \# 1^m$, then returns the sum of those two numbers, $(1^{l+m})$. Below is the state diagram of this Turing machine,
\begin{center}
\begin{tikzpicture}[scale=0.2]
\tikzstyle{every node}+=[inner sep=0pt]
\draw [black] (13.4,-29.5) circle (3);
\draw (13.4,-29.5) node {$q_0$};
\draw [black] (30,-29.5) circle (3);
\draw (30,-29.5) node {$q_1$};
\draw [black] (45.8,-29.5) circle (3);
\draw (45.8,-29.5) node {$q_2$};
\draw [black] (61.2,-29.5) circle (3);
\draw (61.2,-29.5) node {$q_f$};
\draw [black] (12.077,-26.82) arc (234:-54:2.25);
\draw (13.4,-22.25) node [above] {$\delta(q_0,1)$};
\fill [black] (14.72,-26.82) -- (15.6,-26.47) -- (14.79,-25.88);
\draw [black] (16.4,-29.5) -- (27,-29.5);
\fill [black] (27,-29.5) -- (26.2,-29) -- (26.2,-30);
\draw (21.7,-30) node [below] {$\delta(q_0,\#)$};
\draw [black] (28.677,-26.82) arc (234:-54:2.25);
\draw (30,-22.25) node [above] {$\delta(q_1,1)$};
\fill [black] (31.32,-26.82) -- (32.2,-26.47) -- (31.39,-25.88);
\draw [black] (33,-29.5) -- (42.8,-29.5);
\fill [black] (42.8,-29.5) -- (42,-29) -- (42,-30);
\draw (37.9,-30) node [below] {$\delta(q_1,\#)$};
\draw [black] (48.8,-29.5) -- (58.2,-29.5);
\fill [black] (58.2,-29.5) -- (57.4,-29) -- (57.4,-30);
\draw (53.5,-30) node [below] {$\delta(q_2,1)$};
\end{tikzpicture}
\end{center}
\end{exmp}

One can see that if the length of the numbers is $n=l+m+1$, then the number of steps required to compute this ADD function is $n+2$. In complexity theory, the big O notation is used to describe the time taken. 

\begin{definition}
	A function $f$ is $f\in O(g)$ (read as big O of $g$) if $\exists c\in\mathbb{R}$ and $n_0$ such that $|f(n)|\leq c |g(n)|$ for all $n>n_0$. We represent this by the notation $f(n)=O(g(n))$.
\end{definition}

From our previous example we would say the ADD function is $O(n)$, since there exists a $c=2$ and a $n_0=4$ such that $n+3 < 2n$ for all $n>4$. A problem is said to be solved in polynomial time if there exists a function which solves the problem and takes time $O(f(n))$, where $f$ is a polynomial of the size of the input $n$. Similarly a problem is said to be solved in exponential time if there exists a function which solves the problem and takes time $O(f(n))$, where $f$ is an exponential function of the size of the input $n$.

The two most important classes in complexity theory are $\mathbf{P}$ and $\mathbf{NP}$. The class $\mathbf{P}$ is the class of all problems that can be solved in polynomial time; the class $\mathbf{NP}$ is the class of all problems that can be verified in polynomial time. Verifying a problem is being given a solution to the problem and checking if it is correct. For example if the problem was add $4$ and $5$, then we could be given a solution, such as $10$ and we have to check if it is correct. In this case it is not.

 The most famous problem in complexity theory is whether or not $\mathbf{P}=\mathbf{NP}$. This is essentially asking if being able to verify a problem in polynomial time implies we can solve the problem in polynomial time.

The reason these classes, and the above question, are so important comes down to the fact that $\mathbf{P}$ is also the class of problems we can efficiently (and therefore physically) solve. In contrast, $\mathbf{NP}$ contains many problems which we cannot efficiently solve. Additionally several of these hard-to-solve problems are very relevant, for example the travelling salesman problem, protein folding, and RSA cryptography.

A problem $B$ can be reduced in polynomial time to another problem $A$ if there exists transformation which takes at most polynomial time and transforms every instance of a problem $B$ into an instance of the problem $A$ such that the transformation of the solution to the instance of the new problem $A$ is the solution to the instance of the problem $B$.
For example, any hamiltonian cycle problem can be reduced to a travelling salesman problem by giving all the edges of the hamiltonian cycle problem weight 1, and creating edges with weight 2 where all the non existent edges of the hamiltonian cycle problem are. The solution to the resulting travelling salesman problem can be transformed to the solution to the initial hamiltonian cycle problem, and the transformation takes at msot polynomial time.

If every problem in $\mathbf{NP}$ can be reduced in polynomial time to a problem $A$, then we say $A$ is $\mathbf{NP}$-hard. If a problem is in $\mathbf{NP}$ and is $\mathbf{NP}$-hard then we say the problem is $\mathbf{NP}$-complete. The above travelling salesman problem is an example of a problem that is $\mathbf{NP}$-complete.

\section{Quantum Computing}

In this section we will give an introduction to quantum computing. In doing so, we will discuss the foundations of quantum computing and establish the notation used. We will then move into some more advanced topics such as five key quantum algorithms, the development and progress of quantum complexity theory, some results in quantum computability, and quantum algorithmic information theory. We will assume the reader has some familiarity with Turing machines, the complexity classes \textbf{P} and \textbf{NP}, as well as the basics of Boolean circuits. We will not go into detail on the physical construction of a quantum computer; but will however mention some recent progress.

\subsection{Introduction}
The central idea in Quantum Computing (QC) is to use the quantum nature of the universe to provide speedups in computation that would otherwise be impossible in classical computing \citep{feynman1982simulating}. The physical realisation of this is called quantum supremacy \citep{aaronson2016complexity}. This quantum supremacy relies on two key features: first, a physically constructed quantum computer, and second, a Quantum algorithm running on the quantum computer that is beyond any (current) classical computation. We will be focusing on the quantum algorithm side. A complete description of the physical and algorithmic aspects of quantum computing can be found in \cite{nielsen2002quantum}.

There are two ways in which we will describe quantum computation: first, the way in which it was established, as quantum Turing machines. Second, quantum Circuits, which give a clear description of quantum algorithms. It has been shown that these are equivalent by \cite{yao1993quantum}.

\subsection{Quantum Turing Machines}
The first formal definition of a Quantum Turing Machine (QTM) is from \cite{deutsch1985quantum}. A main topic of \cite{deutsch1985quantum}, and an important aspect of the purpose of quantum computing was the Church-Turing thesis and its expanded form, which we will discuss here before we get to the exact definition of Quantum Turing Machines.

\subsubsection{Church-Turing Thesis}
The Church-Turing thesis is as follows:
\begin{quote}
	\textit{Every `function which could be regarded as computable' can be computed by a universal Turing Machine.}
\end{quote} 
This states that anything which we may wish to compute could be simulated on a universal Turing machine. However it does not mention anything about the time it would take for the universal Turing machine to simulate a computation.

Deutsch proposed a new version of the thesis, reflecting the physical nature of reality, now called the Church-Turing-Deutsch thesis.
\begin{quote}
 	\textit{Every finitely realisable physical system can be perfectly simulated by a universal model computing machine operating by finite means.}
 \end{quote} 
 Here, Deutsch is interested in a ``finitely realisable physical system'' which we want to simulate, not a ``function which could be regarded as computable''. This is quite an important distinction as we can (quite easily) construct functions that are theoretically possible to compute with a Turing machine, however the functions are not representing any finitely realisable physical system. The constraint here is physical, as the time required to compute some functions is greater than the heat-death of the universe: For example, the Ackermann function \citep{calude1979first}.
 
Lastly from Complexity theory, there is the related Extended-Church-Turing thesis by \cite{kaye2007introduction}: \begin{quote}
 	\textit{A probabilistic Turing machine can efficiently simulate any realistic model of computation.}
 \end{quote} With the inclusion of ``efficiently'', one excludes a large number of  problems which are both computable and also represent finitely realisable physical systems. It is important to note that if quantum computing became a ``realistic model of computation'', then this thesis may be negated, since probabilistic Turing machines cannot efficiently simulate quantum Turing machines.

\subsubsection{Formalising Quantum Turing Machines}
In a certain sense, a quantum Turing machine is essentially a probabilistic Turing machine which uses the $L^2$ norm instead of the $L^1$ norm; has complex-valued amplitudes in the place of non-negative real probabilities; and a complex-valued unitary transition matrix instead of a stochastic one. This is similar to how quantum mechanics, without the physics, is (essentially) probability theory with the $L^2$ norm  \citep{aaronson2013quantum}.

With a classical (deterministic) Turing machine, the actions performed and state transitions are unsurprisingly deterministic; this means what we expect to happen will exactly happen. With a classical probabilistic Turing machine, each action will occur with some (non-negative) real probability. This means that the machine could write 0 with some probability $p$ and write 1 with some other probability $q$. As mentioned above, a quantum Turing machine is quite similar, except instead of (non-negative) real probabilities, the quantum Turing machine uses complex-valued amplitudes, where the sum of the squares of the absolute value of the amplitudes must be 1 at all times.

It can help to think of this kind of computation as a tree, where each node is a configuration of the quantum Turing machine. The branch width of this tree is the number of possible actions and the depth is the number of actions taken. At each time step, we move down the tree according to the unitary transition matrix.

Below is an example of deterministic, probabilistic, and quantum Turing machine trees representing the actions of moving left and right achieving different tape configurations, with the edges representing the probability (or amplitudes) of performing that action.

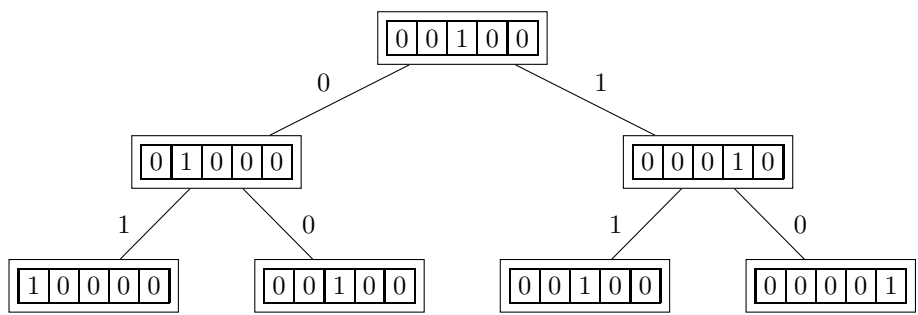
\begin{figure}[H]
\begin{tikzpicture}[every tree node/.style={draw},
   level distance=1.65cm,sibling distance=1cm,
   edge from parent path={(\tikzparentnode) -- (\tikzchildnode)}]
\Tree[.\boxednumber{0}\boxednumber{0}\boxednumber{100} \edge node[auto=right] {$0$}; [.\boxednumber{0}\boxednumber{1000} \edge node[midway,left,xshift=-2mm] {$1$};  [.\boxednumber{10000} ] \edge node[midway,right,xshift=2mm] {$0$};
               [.\boxednumber{0}\boxednumber{0}\boxednumber{100} ]] \edge node[auto=left] {$1$};
          [.\boxednumber{0}\boxednumber{0}\boxednumber{0}\boxednumber{10} \edge node[midway,left,xshift=-2mm] {$1$};  [.\boxednumber{0}\boxednumber{0}\boxednumber{100}  ] \edge node[midway,right,xshift=2mm] {$0$};
                [.\boxednumber{0}\boxednumber{0}\boxednumber{0}\boxednumber{0}\boxednumber{1} ]]]
\end{tikzpicture}
                  \caption{Deterministic Turing Machine}
\end{figure}

Then the Probabilistic Turing machine. Note that at each vertex the branching  edges must add to 1.

\begin{figure}[H]
\begin{tikzpicture}[every tree node/.style={draw},
   level distance=1.65cm,sibling distance=1cm,
   edge from parent path={(\tikzparentnode) -- (\tikzchildnode)}]
\Tree[.\boxednumber{0}\boxednumber{0}\boxednumber{100} \edge node[auto=right] {$\frac{1}{2}$}; [.\boxednumber{0}\boxednumber{1000} \edge node[midway,left,xshift=-2mm] {$\frac{5}{8}$};  [.\boxednumber{10000} ] \edge node[midway,right,xshift=2mm] {$\frac{3}{8}$};
               [.\boxednumber{0}\boxednumber{0}\boxednumber{100} ]] \edge node[auto=left] {$\frac{1}{2}$};
          [.\boxednumber{0}\boxednumber{0}\boxednumber{0}\boxednumber{10} \edge node[midway,left,xshift=-2mm] {$\frac{2}{3}$};  [.\boxednumber{0}\boxednumber{0}\boxednumber{100}  ] \edge node[midway,right,xshift=2mm] {$\frac{1}{3}$};
                [.\boxednumber{0}\boxednumber{0}\boxednumber{0}\boxednumber{0}\boxednumber{1} ]]]
\end{tikzpicture}
                  \caption{Probabilistic Turing Machine}
\end{figure}
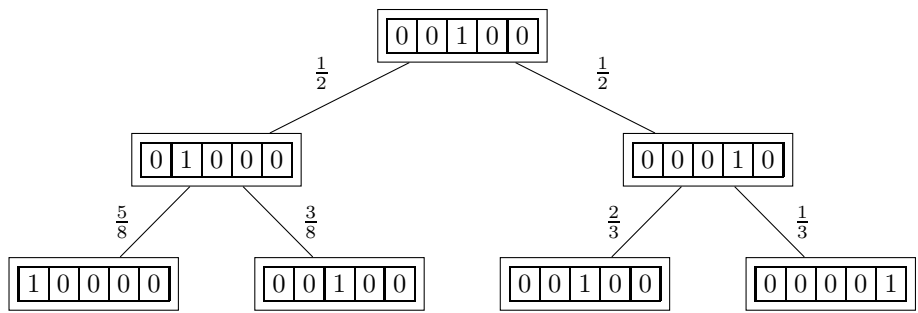

Lastly the Quantum Turing Machine. Note here that the sum of the squared absolute value at each vertex of the branches must add to 1. 

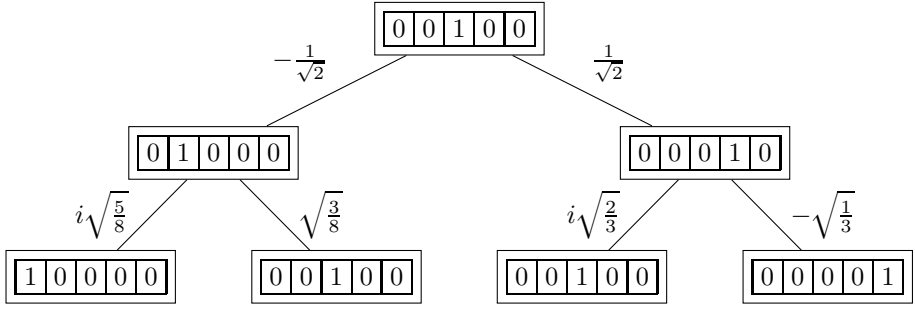
\begin{figure}[H]
\begin{tikzpicture}[every tree node/.style={draw},
   level distance=1.65cm,sibling distance=1cm,
   edge from parent path={(\tikzparentnode) -- (\tikzchildnode)}]
\Tree[.\boxednumber{0}\boxednumber{0}\boxednumber{100} \edge node[auto=right] {$-\frac{1}{\sqrt{2}}$}; [.\boxednumber{0}\boxednumber{1000} \edge node[midway,left,xshift=-2mm] {$i\sqrt{\frac{5}{8}}$};  [.\boxednumber{10000} ] \edge node[midway,right,xshift=2mm] {$\sqrt{\frac{3}{8}}$};
               [.\boxednumber{0}\boxednumber{0}\boxednumber{100} ]] \edge node[auto=left] {$\frac{1}{\sqrt{2}}$};
          [.\boxednumber{0}\boxednumber{0}\boxednumber{0}\boxednumber{10} \edge node[midway,left,xshift=-2mm] {$i\sqrt{\frac{2}{3}}$};  [.\boxednumber{0}\boxednumber{0}\boxednumber{100}  ] \edge node[midway,right,xshift=2mm] {$-\sqrt{\frac{1}{3}}$};
                [.\boxednumber{0}\boxednumber{0}\boxednumber{0}\boxednumber{0}\boxednumber{1} ]]]
\end{tikzpicture}
  \caption{Quantum Turing Machine}
\end{figure}

An important part of probabilistic Turing machines and quantum Turing machines is that two separate branches may have the same configuration. In the above examples, it is represented in the probabilistic Turing machine by the configuration \boxednumber{0}\boxednumber{0}\boxednumber{100} occurring with probability $\frac{1}{2}\cdot \frac{3}{8} +\frac{1}{2}\cdot \frac{2}{3} = \frac{19}{48}$, and similarly for the quantum Turing machine, the configuration \boxednumber{0}\boxednumber{0}\boxednumber{100} occurring with amplitude $-\frac{1}{\sqrt{2}}\cdot \sqrt{\frac{3}{8}} +\frac{1}{\sqrt{2}}\cdot i\sqrt{\frac{2}{3}} =-\sqrt{\frac{3}{16}} +i\frac{1}{\sqrt{3}} $. This addition of different paths to the same configuration is quintessential to all Quantum algorithms.

Before formally defining the quantum Turing machine, we need to define the subset of the complex numbers that define the quantum Turing machine. Let $\tilde{\mathbb{C}}$ be the set of complex numbers $\alpha \in \mathbb{C}$ for which there is a deterministic algorithm that computes the real and imaginary parts of $\alpha$ with an error of at most $2^{-n}$ in time that is a polynomial of $n$. 

Additionally, recall that a configuration in Definition \ref{confdef} is the combination of the tape, state and head position of a Turing Machine. Now we can formally define quantum Turing machines.

\begin{definition}[\cite{bernstein1997quantum}]
	A Quantum Turing Machine $M$ is defined, much like a classical Turing Machine (Definition \ref{tmdef}), by a triplet $(\Sigma ,Q,\delta )$ where $\Sigma$ is a finite alphabet with an identified blank symbol ($\#$), $Q$ is a finite set of states with identified initial state $q_0$ and final state $q_f \neq q_0$, and $\delta$, the quantum transition function, \[ \delta \ :\ Q \ \times \Sigma \ \to \ \tilde{\mathbb{C}}^{\Sigma \ \times \ Q \ \times \ \{L,R\} }. \] The QTM $M$ has a two-way infinite tape of cells indexed by $\mathbb{Z}$, each holding symbols from $\Sigma$, and a single read/write tape head that moves along the tape. A configuration or instantaneous description of the QTM is a complete description of the contents of the tape, the location of the tape head, and the state $q\in Q$ of the finite control. 
	
	Let $S$ be the inner-product space of finite complex linear combinations of configurations of $M$ with the Euclidean norm. We call each element $\phi \in S$ a superposition of $M$. 
	\[ \phi = \sum_{i}\alpha_i m_i \]
	The QTM $M$ defines a linear operator $U_M : S\to S$, called the time evolution operator of $M$, as follows: if $M$ starts in configuration $c$ with current state $q_k$, and scans symbol $a$, then after one step $M$ will be in a superposition of configurations $\psi = \sum_i \alpha_i c_i$, where each nonzero $\alpha_i$ corresponds to a transition $\delta(q_k,a,b,q_j,d)$, and $c_i$ is the new configuration (Definition \ref{confdef}) that results from applying this transform to $c$. Extending this map to the entire space $S$ through linearity gives the time evolution operator $U_M$.
\end{definition}

However instead of using Quantum Turing machines to define our Quantum Computation, we will instead mainly use Quantum Circuits.

\subsection{Quantum Circuits}
An easier way to describe some quantum computing systems is with quantum circuits. Quantum circuits were first proposed by \cite{yao1993quantum} where it was shown that they are equivalent to quantum Turing machines, in the sense that one can simulate the other with a polynomial slowdown. This equivalence was also demonstrated by \cite{nishimura2009perfect}.

Quantum circuits can be thought of as classical Boolean circuits, except instead of the classical bits which take values 0 and 1 (False and True), a quantum circuit uses qubits, where each qubit takes a value in a complex superposition of 0 and 1. We can represent this as a pair of amplitudes $(\alpha,\beta)\in\mathbb{C}^2$ which has the property $|\alpha|^2 + |\beta|^2 = 1$. Here $\alpha$ is representing the amplitude (complex probability) of being 0, and $\beta$ likewise for 1. We will use the bra-ket notation to describe qubits.
\subsection{Bra-ket notation}
The Dirac bra-ket \citep{dirac1939new} notation is as follows: first we use it to represent the standard basis vectors of $\mathbb{C}^2$
\[ \ket{0} = \begin{pmatrix}
	1 \\
	0
\end{pmatrix},\ \ket{1} = \begin{pmatrix}
	0 \\
	1
\end{pmatrix} \]

with a single qubit being described as
\[ |\phi\rangle = \alpha \ket{0} + \beta \ket{1} = \begin{pmatrix}
	\alpha \\
	\beta
\end{pmatrix}.  \] This asymmetrical notation is called a ket. This choice of notation is far less cumbersome than regular vector notation.

For $\ket{a} = \begin{pmatrix}
	\alpha_0 \\
	\alpha_1
\end{pmatrix}$ and $\ket{b} = \begin{pmatrix}
	\beta_0 \\
	\beta_1
\end{pmatrix}$,

we will also define the tensor product in bra-ket notation as follows:
\[ \ket{a} \otimes \ket{b} = \ket{a} \ket{b} = \ket{ab} = \begin{pmatrix}
	\alpha_0 \beta_0 \\
	\alpha_0 \beta_1 \\
	\alpha_1 \beta_0 \\
	\alpha_1 \beta_1 
\end{pmatrix} \]

For example, instead of writing $\ket{0} \otimes \ket{0} \otimes \ket{1}$ we will write \[ \ket{001} = \begin{pmatrix}
	1 \\
	0
\end{pmatrix} \otimes \begin{pmatrix}
	1 \\
	0
\end{pmatrix} \otimes \begin{pmatrix}
	0 \\
	1
\end{pmatrix} = \begin{pmatrix}
	1\cdot 1\cdot 0 \\
	1\cdot 1\cdot 1 \\
	1\cdot 0\cdot 0 \\
	1\cdot 0\cdot 1 \\
	0\cdot 1\cdot 0 \\
	0\cdot 1\cdot 1 \\
	0\cdot 0\cdot 0 \\
	0\cdot 0\cdot 1 
\end{pmatrix} = \begin{pmatrix}
	0 \\
	1 \\
	0 \\
	0 \\
	0 \\
	0 \\
	0 \\
	0 
\end{pmatrix}  \].

We will also raise some qubits to the power of tensors, for example: 
\[ \ket{a}^{\otimes 4} = \ket{a} \otimes \ket{a} \otimes \ket{a} \otimes \ket{a}=\begin{pmatrix}
	\alpha_0\cdot\alpha_0\cdot\alpha_0\cdot\alpha_0 \\
	\alpha_0\cdot\alpha_0\cdot\alpha_0\cdot\alpha_1 \\
	\vdots \\
	\alpha_1\cdot\alpha_1\cdot\alpha_1\cdot\alpha_0 \\
	\alpha_1\cdot\alpha_1\cdot\alpha_1\cdot\alpha_1
\end{pmatrix}  \]
Additionally we will define the conjugate transpose as
\[ \langle a| = \ket{a}^{\dag}=(\bar{\alpha_0} ,\bar{\alpha_1} ) \] where $\bar{\alpha}$ is the complex conjugate of $\alpha$. This notation is called a bra.

Using the two together we can write the inner product as
\[\langle a| \ket{b} = \langle a\ket{b} = \bar{\alpha_0} \beta_0 + \bar{\alpha_1} \beta_1  \]

and the outer product as
\[ \ket{b} \langle a|= \begin{pmatrix}
	\beta_0 \bar{\alpha_0} & \beta_0 \bar{\alpha_1} \\
	\beta_1 \bar{\alpha_0} & \beta_1 \bar{\alpha_1}
\end{pmatrix} \]

This is not the only notation that is used in Quantum computing, however this notation is the most simple and more importantly it is short.

\subsubsection{Quantum Gates}

Continuing this notation we can then represent quantum gates as matrices (transforms) which are applied to a superposition of qubits. It is important that they take a (collection of) qubits from one superposition to another. Specifically, it is required that the quantum gate conserves the $\|\alpha \|_2^2 = 1$ property. 

So a matrix that conserves the superposition is unitary, i.e., its inverse is also its conjugate transpose. This is no surprise, since the state transition matrices of quantum physics must also be unitary. It has been shown that if any linear matrix was allowed, quantum computing would be unreasonably powerful \citep{aaronson2005quantum}.

Here we will define some of the common quantum gates used. This is not an exhaustive list. The ones we will define are the Hadamard gate, the $\pi/8$ (rotation) gate, and the controlled-not gate.

\begin{definition}
	The Hadamard gate $H$ acts on a single qubit and corresponds to the following unitary matrix
	\[ H = \frac{1}{\sqrt{2}}\begin{pmatrix}
		1 & 1 \\ 1 & -1
	\end{pmatrix}. \]
\end{definition}

For instance $ H\ket{0} = \ket{\frac{1}{2}} := \frac{1}{\sqrt{2}}\begin{pmatrix}
	1 \\ 1
\end{pmatrix} $ and $ HH\ket{0} = H\frac{1}{\sqrt{2}}\begin{pmatrix}
	1 \\ 1
\end{pmatrix} = \ket{0}. $

It is important to note that the Hadamard gate is both self-adjoint and its own inverse. That is, $HH=HH^{\dag}=I$.

\begin{definition}
	The controlled-not gate, $CNOT$, acts on two qubits and performs the not (bit flip) operation on the second qubit if the first qubit is $|1\rangle$. This equates to the following unitary matrix
	\[ CNOT = \begin{pmatrix}
		1 & 0 & 0 & 0 \\ 0 & 1 & 0 & 0 \\ 0 & 0 & 0 & 1 \\ 0 & 0 & 1 & 0
	\end{pmatrix}. \]
\end{definition}

For instance $CNOT\ket{0} \ket{a} = \ket{0}\ket{a}$ and $CNOT\ket{1} \ket{a} = \ket{1}\otimes \begin{pmatrix}
	\alpha_0 \\ \alpha_1
\end{pmatrix} $

\begin{definition}
	The $\pi/8$ gate, $R_{\pi/4}$, corresponds to a rotation of the $|1\rangle$ qubit by $\pi/4$. The matrix representing this rotation is
	\[ R_{\pi/4} = \begin{pmatrix}
		1 & 0 \\ 0 & e^{i\frac{\pi}{4}}
	\end{pmatrix},\ R_{\pi/4}\ket{a} = \begin{pmatrix}
		\alpha_0 \\ \alpha_1 e^{i\frac{\pi}{4}}
	\end{pmatrix}. \]
\end{definition}

The gate is called the $\pi/8$ gate for historical reasons, even though the gate is a rotation of $\pi/4$. These three gates are important as they form a universal set of gates for two qubits. This means that any classical two bit circuit can be constructed using only these three gates \citep{nielsen2002quantum}.

When working with larger numbers of qubits, to apply one of the above transforms to just some of the qubits, we use identity matrices with block-diagonal matrices of the transform(s). 

For example if given two qubits, we wish to apply the Hadamard transform to only the second one, we can use the following matrix 
\[ H_1 = H\otimes \begin{pmatrix}
	1 & 0 \\ 0 & 1
\end{pmatrix} \]
This process can be used for any number of qubits.

\subsubsection{Measurement}
Now that we have qubits and quantum gates, we need a way to measure the outcome. We call this quantum measurement. This measurement is done according to the $L^2$ norm; for a system of the form $\sum_{x=0}^{2^n-1}\alpha_x |x\rangle$, we will observe $x$ with probability $|\alpha_x|^2$. Specifically, measurement of an $n$ qubit system will do the following:

\[ \sum_{x=0}^{2^n-1}\alpha_x \ket{x} \to i\ \text{ with probability } |\alpha_{i}|^2  \]

The measurement of mixed states yields random outcomes, this is a fundamental aspect of quantum systems.

Additionally, one can choose instead to measure a subset of the quantum circuit, for example the first $m$ qubits. This is no different from the regular measurement, except that the observer will not gain any information about the $n-m$ qubits remaining.

In several quantum computations one uses extra qubits to aid in the computation which are not measured at the end of the computation.

It is important to note that this is destructive measurement. This means that performing the measurement will destroy the superposition of the system. 

Although we can only ensure an outcome with some probability, we can repeat the computation and reduce the probability of error exponentially. Ultimately, we will only ever be sure of an outcome with some (quite high) probability. In most cases this is good enough.

When performing analysis of a quantum algorithm, it is important to include the number of times the computation must be repeated to reduce the error sufficiently, since this will increase the time complexity (potentially exponentially if it is required to be repeated an exponential number of times).

To give some examples of  full quantum circuits, it will be useful to demonstrate some essential quantum algorithms.

\subsubsection{Submodules}
In this section we will explain the building blocks of quantum algorithms. 

\subsubsection{Quantum Oracle}
The quantum oracle is used when we want to apply a function $f:\{0,1\}^n\to \{0,1\}$ to a superposition of all elements of $\{0,1\}^n$. Since all transforms in quantum computing are reversible (and indeed unitary) there needs to be some way to keep the information so that the transform can be reversed. Classically we could take $x\to f(x)$, however when performing this transform in quantum computing we do the following

\[ U_f \ket{x}\ket{y}=  \ket{x}\ket{y\oplus f(x)}.  \]

Where $y$ is representing an extra qubit used for this reversibility.

\subsubsection{Quantum Fourier Transform}

The quantum Fourier transform is the quantum version of the Fourier transform. Described in \cite{nielsen2002quantum}, the quantum Fourier transform ($\mathcal{QFT}$) is a linear operator which acts on a vector $\ket{j}$ of size $2^n$ as follows,

\begin{equation}
	\mathcal{QFT} \ket{j} = \frac{1}{2^{n/2}}\sum_{k=0}^{2^n-1}e^{2\pi ijk/2^n}\ket{k}.
\end{equation} 

An expanded representation of the quantum Fourier transform, into the binary expression of $j=j_1j_2\ldots j_n$, is 

\begin{equation}
	\ket{j} \to \frac{1}{2^{n/2}} \left( \ket{0}+e^{2\pi i 0.j_n}\ket{1} \right) \left( \ket{0}+e^{2\pi i 0.j_{n-1}j_n}\ket{1} \right) \ldots \left( \ket{0}+e^{2\pi i 0.j_1\ldots j_n}\ket{1} \right).
\end{equation}	
Here $0.j_n$ is the $n$th binary digit of $j$ divided by 2, likewise for $0.j_{n-1}j_n$ and $0.j_1\ldots j_n$.

The unitary matrix form of the $\mathcal{QFT}$ of $n$ qubits is a matrix $\mathcal{QFT}=\frac{1}{\sqrt{2^n}}(a_{ij})$ with entries $a_{ij}=\omega^{(i-1)(j-1)}$, where $\omega$ is the $2^n$th root of unity.

\subsubsection{Inverse Quantum Fourier Transform}
Also somewhat unsurprisingly the inverse quantum Fourier transform ($\mathcal{QFT}^{-1}$) is the inverse of the quantum Fourier transform. That is, $\mathcal{QFT}\ \mathcal{QFT}^{-1}=\mathcal{QFT}^{-1}\ \mathcal{QFT}=I$ the identity.

\begin{equation}
	\mathcal{QFT}^{-1}  \left( \frac{1}{2^{n/2}}\sum_{k=0}^{2^n-1}e^{-2\pi ijk/2^n}\ket{k}\right) =\ket{j}
\end{equation}

\subsubsection{Phase Estimation}
\label{phase}
Given a unitary transform $U$ with an eigenvector $\ket{x}$ and an eigenvalue $e^{2\pi i \omega}$, phase estimation, as described in \cite{nielsen2002quantum}, allows us to estimate the value of $\omega$. The algorithm operates on two registers, first a register of $t$ qubits, where $t$ depends on the number of bits of accuracy required and the probability of success required. Specifically, for accuracy of $m$ and probability of success $1-\epsilon$, we choose $t=m+\lceil \log (2+\frac{1}{2\epsilon})\rceil$. The second register is $\ket{x}$. The algorithm has four steps, first performing the Hadamard operation on the first register; then performing a controlled version of the unitary operation $U$ on the first register; then performing the inverse Fourier transform, as mentioned above, on the first register; then lastly performing measurement on the first register.

Let $0.\omega_1\ldots\omega_t$ be the binary expression of the first $t$ digits of $\omega$, then algebraically the steps above are as follows,
\begin{align*}
	\ket{0}^{\otimes t} \ket{x} &\to \frac{1}{2^{t/2}}\sum_{k=0}^{2^t -1}\ket{k}\ket{x} &\text{Hadamard} \\
	&\to \frac{1}{2^{t/2}}\sum_{k=0}^{2^t -1}e^{2\pi i \omega k }\ket{k}\ket{x} &\text{Controlled-}U \\
	&= \frac{1}{2^{t/2}} \left( \ket{0}+e^{2\pi i 2^{t-1}\omega}\ket{1} \right) \left( \ket{0}+e^{2\pi i 2^{t-2}\omega}\ket{1} \right) \\
	&\ \ldots \left( \ket{0}+e^{2\pi i 2^{0}\omega}\ket{1} \right) \ket{x} \\
	&= \frac{1}{2^{t/2}} \left( \ket{0}+e^{2\pi i 0.\omega_t}\ket{1} \right) \left( \ket{0}+e^{2\pi i 0.\omega_{t-1}\omega_t}\ket{1} \right) \\
	&\ \ldots \left( \ket{0}+e^{2\pi i 0.\omega_1\ldots\omega_t}\ket{1} \right) \ket{x} \\
	&\to \ket{\tilde{\omega}}\ket{x} &\text{Inverse Fourier transform} \\
	&\to \tilde{\omega} &\text{Measurement on first register} 
\end{align*}

The Quantum circuit for Phase estimation is as follows,

\begin{figure}[H]
\[  \Qcircuit @C=2em @R=1.1em {
  \lstick{\ket{0}}    & \gate{H} &  \qw &\qw & \lstick{\cdots} & \ctrl{4} &\multigate{3}{\mathcal{QFT}^{-1}} &\meter & \cw \\
  \lstick{\vdots\ \ }   & \gate{H} & \qw &\qw & \lstick{\cdots} &\qw & \ghost{\mathcal{QFT}^{-1}} &\meter & \cw \\
  \lstick{\ket{0}}    & \gate{H} &  \qw   & \ctrl{2} & \lstick{\cdots} &\qw & \ghost{\mathcal{QFT}^{-1}} &\meter & \cw \\
  \lstick{\ket{0}}  & \gate{H} &  \ctrl{1} &\qw & \lstick{\cdots} &\qw & \ghost{\mathcal{QFT}^{-1}} &\meter & \cw \\
  \lstick{\ket{x}}   &  & \gate{U^{2^0}} & \gate{U^{2^1}} & \lstick{\cdots} &\gate{U^{2^{t-1}}} & \qw }
\]
 \caption{Quantum Circuit for the Phase estimation algorithm \citep{nielsen2002quantum,wiki:qca}}
\end{figure}

Phase estimation is an essential part of the Quantum Counting Algorithm and many other quantum algorithms.

\subsection{Quantum Algorithms}
In this section we will describe some quantum algorithms. The general setup of all of these algorithms is as follows:
\begin{itemize}
\item Take some initial state such as $|0\rangle^{\otimes n}$
\item Quantumize to create a uniform superposition over all possible qubits, often done with the Hadamard gate $H^{\otimes n}$
\item Perform computation of some function in simultaneous states of this superposition
\item Uncompute the superposition, often done with the Hadamard gate or the inverse Quantum Fourier transform
\item Measurement of some or all of the circuit
\end{itemize}
We will first describe the setup of each algorithm, then give an intuitive explanation to each and why they are superior to their classical counterpart. This will be followed by a formal algebraic description of the algorithms, then a demonstration of the quantum circuit that would be used. This is by no means a complete list of Quantum algorithms; it is however a list of all algorithms which are relevant to this technical report.

\subsubsection{Deutsch-Jozsa Algorithm}
\label{DJsec}
The Deutsch-Jozsa Algorithm is the first example of an exponential ``quantum-speedup''. Imagine we are given a function $f:\ \{ 0,1\}^n\to \{0,1\}$ that has the property that either all values map to $0$, or half of them do. Our objective is to determine whether every value maps to 0, or half of them do. To check this classically, one must perform at most $2^{n-1}+1$ function evaluations. This is because the moment the function outputs a 1 we know that $f$ outputs 1 on half the inputs. The Deutsch-Jozsa Algorithm requires only $1$ function evaluation. 

An exponential speedup comes from evaluating the function in a superposition of every input. To achieve this, we use the Hadamard gate which takes the initial state to a superposition of every possible state, with equal amplitude. Then we use oracle (function) $f$ once in the oracle transform $U_f$ which is defined as follows:

\[ U_f \ket{x}\ket{y}=  \ket{x}\ket{y\oplus f(x)}  \]
where $\oplus$ is addition modulo 2 (XOR). Then after a simplification a second Hadamard gate is used, and finally after a simplification step, measurement of the first $n$ bits is performed. If $f(x)=0$ for all $x$, then the measurement will be 1 with probability 1, and if $f(x)=0$ on half the inputs and $f(x)=1$ on the other half of inputs, then the measurement will be 0 with probability 1. Below we have an algebraic description.

\begin{align*}
	|0\rangle^{\otimes n}|1\rangle &\to  \frac{1}{\sqrt{2^{n+1}}} \sum_{x=0}^{2^n -1}|x\rangle (|0\rangle - |1\rangle) &\text{Hadamard } H^{\otimes n} \otimes H \\
	&\to \frac{1}{\sqrt{2^{n+1}}} \sum_{x=0}^{2^n -1}|x\rangle (|f(x)\rangle - |1\oplus f(x) \rangle) &f\text{ oracle} \\
	&= \frac{1}{\sqrt{2^{n+1}}} \sum_{x=0}^{2^n -1}(-1)^{f(x)}|x\rangle (|0\rangle - |1\rangle) &\text{since } f(x)=0,1 \\
	&\to \frac{1}{2^n} \sum_{x=0}^{2^n -1}(-1)^{f(x)} \left[ \sum_{y=0}^{2^n -1}(-1)^{x\cdot y} |y\rangle \right] \ket{1} &\text{Hadamard } H^{\otimes n} \otimes H \\
	&= \frac{1}{2^n} \sum_{y=0}^{2^n -1} \left[ \sum_{x=0}^{2^n -1}(-1)^{x\cdot y +f(x)}  \right] |y\rangle \ket{1} &\text{Re-ordering}\\
	&\to \left| \frac{1}{2^n} \sum_{x=0}^{2^n -1}(-1)^{f(x)}  \right|^2 &\text{Measurement on first } n \text{ qubits} \\
	&= \begin{cases}
		1 &\text{if } f(x)=0\ \forall x\in \{ 0,1\}^n \\
		0 &\text{if } f(x) = 0 \text{ for half the }  x\in \{ 0,1\}^n
	\end{cases}
\end{align*}

The quantum circuit below is exactly the transforms described above.

\begin{figure}[H]\[
 \Qcircuit @C=2em @R=1.1em {
  \lstick{\ket{0}} & /^{n} \qw & \gate{H^{\otimes n}} & \multigate{1}{U_f} & \gate{H^{\otimes n}} &\meter & \cw \\
  \lstick{\ket{1}} & \qw     & \gate{H}             & \ghost{U_f}        & \qw }
\]
\caption{Quantum circuit for the Deutsch-Jozsa Algorithm \citep{nielsen2002quantum}}
\end{figure} 
 
It is important to note that one may use the Deutsch-Jozsa Algorithm with an oracle $f:\{ 0,1\}^n\to\{0,1\}$ for which the proportion of inputs which map to $0$ or $1$ is some arbitrary (unknown) number. However the resulting algorithm would only be correct with some probability. In this case, the quantum algorithm will have to be repeated to find the proportion with high accuracy.

 The last line of the algorithm would become
\[ \left| \frac{1}{2^n} \sum_{x=0}^{2^n -1}(-1)^{f(x)}  \right|^2  = \left( \frac{2^n-2L}{2^n}\right)^2 \] where $L$ is the number of $x$ such that $f(x)=1$.

In some cases the number of times the algorithm will have to be repeated will be exponentially large in $n$, for example if $f(x)=1$ for only polynomial many $x$, in which case a classical algorithm may be just as effective.

With some algebra (which can be done classically)
we can get \[ \frac{1-\sqrt{\left| \frac{2^n-2L}{2^n}\right|^2} }{2}= \frac{L}{2^n} \]
 which is the desired result of the algorithm modified for any $f:\{ 0,1\}^n\to\{0,1\}$.

\begin{lemma} \label{DJlem}
	For any $f:\{0,1\}^l\to\{0,1\}$ with $L$ being the number of $x\in\{0,1\}^l$ such that $f(x)=1$. Let $X_1,\ldots,X_m\in\{0,1\}$ be the outputs from the modified Deutsch-Jozsa algorithm. The number of trials, $m$, of the modified Deutsch-Jozsa algorithm requires to compute $X=\frac{\sum_{i=1}^m X_i}{m}$ such that $P\left( \left| X-\frac{L}{2^n}\right| <\epsilon\right) <2e^{-2k}$
 is $O\left( \frac{k}{\epsilon^2}\right)$.
\end{lemma}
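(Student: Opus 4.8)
The plan is to treat this as a routine concentration-of-measure argument and to read off $m$ by matching a Hoeffding tail bound to $2e^{-2k}$. Since each of the $m$ trials is an independent repetition of the same quantum circuit followed by a destructive measurement, $X_1,\dots,X_m$ are independent and identically distributed, and being $\{0,1\}$-valued they lie in $[0,1]$; write $\mu := E[X_1]$ for their common mean. From the analysis of the modified Deutsch--Jozsa algorithm given just above, $\mu$ is determined by $L$: a single run reports the all-zeros outcome with probability $\big(\tfrac{2^n-2L}{2^n}\big)^2$, and the classical identity $\tfrac{1-\sqrt{|(2^n-2L)/2^n|^2}}{2}=\tfrac{L}{2^n}$ recorded in the text converts $\mu$ (possibly after the deterministic post-processing $t\mapsto\tfrac{1-\sqrt t}{2}$) into the target $L/2^n$.

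Next I would invoke Hoeffding's inequality for bounded i.i.d. variables: for $X_i\in[0,1]$,
\[ P\!\left(\left|X-\mu\right|\ge\epsilon\right)\;\le\;2\exp\!\left(-2m\epsilon^2\right). \]
Setting the right-hand side to be at most $2e^{-2k}$ gives the requirement $m\epsilon^2\ge k$, so it suffices to take $m=\lceil k/\epsilon^2\rceil=O(k/\epsilon^2)$, which is the claimed bound. (The inequality in the statement is most naturally read as $P(|X-L/2^n|\ge\epsilon)<2e^{-2k}$; with ``$<\epsilon$'' in place of ``$\ge\epsilon$'' the assertion would be vacuous, so I interpret the tail in the usual direction.)

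I do not expect a deep obstacle; the one point that needs care -- and which I would treat as the actual content -- is the step relating concentration of the raw estimator to concentration of $L/2^n$. If one applies $t\mapsto\tfrac{1-\sqrt t}{2}$ only after averaging, this map fails to be Lipschitz near $t=0$, so the clean $O(k/\epsilon^2)$ bound degrades exactly when $L/2^n$ is exponentially small -- the ``probabilities too small'' limitation noted in the abstract. The lemma as stated corresponds to the favourable reading in which $X$ itself is (a fixed deterministic function of) the estimator of $L/2^n$, so that Hoeffding applies verbatim; the remainder is then just the bookkeeping above, together with choosing the integer $m$ and fixing the constant hidden in the $O(\cdot)$.
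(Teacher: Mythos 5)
Your proposal is correct and follows essentially the same route as the paper's proof: treat the $m$ outputs as i.i.d.\ $\{0,1\}$-valued variables, apply Hoeffding's inequality $P(|X-\mathbb{E}[X]|\geq\epsilon)\leq 2e^{-2m\epsilon^2}$, and set $m=k/\epsilon^2$ (the paper likewise leaves the identification of $\mathbb{E}[X]$ with $L/2^n$ and the post-processing implicit, and your reading of the inequality direction as a typo is the intended one). One small correction to your side remark: the map $t\mapsto\frac{1-\sqrt{t}}{2}$ fails to be Lipschitz at $t=0$, which corresponds to $L/2^n\approx 1/2$ rather than to exponentially small $L/2^n$; the small-probability limitation from the abstract is the separate issue that $\epsilon$ must be taken below $L/2^n$ to get a meaningful relative estimate.
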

\begin{proof}
	Let $X_1,\ldots,X_m$ be the outputs of $m$ trials of the modified Deutsch-Jozsa algorithm, note that these are iid. Then let $X = \frac{\sum_{i=1}^m X_i}{m}$ be the empirical mean the of the $X_i$'s. 
	By Hoeffding inequality we have
	\[ Pr(|X-\mathbb{E}[X]|\geq \epsilon ) \leq 2e^{-2m^2\epsilon^2} \]
	That is, after $m$ trials the probability of having an error of $\epsilon$ is bounded by $2e^{-2m\epsilon^2}$. Therefore setting $m=\frac{k}{\epsilon^2}$ for some $k$ we can guarantee the probability of having absolute error of at most $\epsilon$ is bounded by $2e^{-2k}$.
\end{proof}
 
\subsubsection{Shor's Algorithm}
Shor's algorithm \citep{shor1994algorithms} is perhaps the most famous Quantum algorithm. While being one of the few exponential speedups over current classical algorithms, its fame comes more from the problem which it solves: prime factorisation.

Given an $N\in\mathbb{N}$ such that $N=pq$ for $p,q$ prime, find $p$ (or $q$). It is only required to find one of the factors since division is easy in the sense that there is a fast classical algorithm. This problem is quite important for current cryptography, which makes the assumption that this problem is hard in the sense that it is slow to solve on classical computers (It is important to note that it is still possible that there exists a fast classical algorithm for prime factorisation, since we have not proved that the fastest algorithm is not in \textbf{P}; doing so would immediately imply $\textbf{P}\neq\textbf{NP}$).

Shor's algorithm is made up of two parts: a classical part which reduces factoring to order finding, and a quantum part which finds the order. In the classical part, the algorithm checks whether $N$ is even, and calculates the GCD (greatest common divisor) of numbers. Both of these have fast classical algorithms.

The classical algorithm for converting factoring to ordering finding is as follows:

\begin{center}
\begin{algorithm}[H]
\SetAlgoLined
Given $N$\;
\uIf{$N$ is even}{return 2\;}{\uIf{$N=a^b$ for $a\geq 1,\ b\geq 2$}{return $a$\;}{Randomly pick some $x$ in $[1,N-1]\subseteq \mathbb{N}$\; \uIf{$\gcd(x,N)>1$}{return $\gcd(x,N)$\;}{Use order finding to find $r$ such that $x^r \equiv 1\mod N$\; \eIf{$r$ is even and $x^{r/2} \equiv -1\mod N$}{\uIf{$\gcd(x^{r/2}+1,N)>1$}{return $x^{r/2}+1$\;}\uElseIf{ $\gcd(x^{r/2}-1,N)>1$}{return $x^{r/2}-1$\;}\Else{Restart} }{Restart} } }
}
 \KwResult{A factor of $N$ with probability $O(1)$ }
 \caption{Classical Order finding \citep{nielsen2002quantum}}
\end{algorithm}
\end{center}

A fast classical algorithm to find the order has not been found. However the second (Quantum) part of Shor's algorithm can find the order fast.

Let $U_{x,N}\ket{a}\ket{b} = \ket{a}\ket{x^ab \mod N}$ for some $x$ (chosen in the classical algorithm part), and let $m=2\lceil \log(N)\rceil +1+ \lceil \log \left( 2+\frac{1}{2\epsilon} \right) \rceil$ for some error term $\epsilon$ (such as 1/10). 

We need a to use the inverse $\mathcal{QFT}$ for Shor's algorithm.

The quantum algorithm for finding the order is as follows,
\begin{figure}[H]
\begin{align*}
	\ket{0}^{\otimes m}\ket{1}^{\otimes n}&\to  \frac{1}{\sqrt{2^m}}\sum_{j=0}^{2^m-1} \ket{j}\ket{1}^{\otimes n} &\text{Hadamard } H^{\otimes m}\text{ to first }m \text{ bits} \\
	&\to \frac{1}{\sqrt{2^m}}\sum_{j=0}^{2^m-1} \ket{j}\ket{x^j \mod N} &U_{x,N} \text{ applied} \\
	&= \frac{1}{\sqrt{r2^m}}\sum_{s=0}^{r-1}\sum_{j=0}^{2^m-1}e^{2\pi i s j/r} \ket{j}\ket{u_s} &\text{ Fourier decomposition} \\
	&\to \frac{1}{\sqrt{r}}\sum_{s=0}^{r-1} \ket{\widetilde{s}/r}\ket{u_s} &\text{Apply inverse Fourier transform to first $m$ qubits} \\
	&\to \widetilde{s}/r &\text{Measurement on first $m$ qubits} \\
	&\to r &\text{Continued fraction algorithm*}
\end{align*}
\caption{Order finding quantum part of Shor's algorithm \citep{nielsen2002quantum}}
\end{figure}

Using the inverse Fourier transform $\widetilde{s}/r$ is an approximation of the phase of $e^{2\pi i s j/r}$. The continued fraction algorithm is used to find an $r$ such that the output $\widetilde{s}/r$ is irreducible.

 Shor's algorithm takes $O((\log(N))^3)$ time, as opposed to the fastest known fully classical algorithm for prime factorisation which takes $O(N^{1/4})$ time.

The circuit for Shor's algorithm is as follows: since the order $r$ (actually $\widetilde{l}/r$)  we are trying to find is greater than 1, we will require multiple registers to measure it.
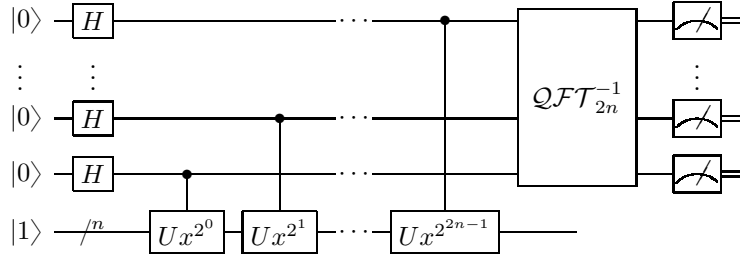
\begin{figure}[H]\[
 \Qcircuit @C=.7em @R=.7em {
  \lstick{\ket{0}}    & \gate{H} & \qw & \qw               & \qw               & \qw & \cdots & & \ctrl{4}               & \multigate{3}{\mathcal{QFT}_{2n}^{-1}} & \qw  & \meter & \cw \\
  \lstick{\vdots\ \ } & \vdots   &     &                   &                   &     &        & &                        &    \pureghost{\mathcal{QFT}_{2n}^{-1}} &      & \vdots &     \\
  \lstick{\ket{0}}    & \gate{H} & \qw & \qw               & \ctrl{2}          & \qw & \cdots & & \qw                    &        \ghost{\mathcal{QFT}_{2n}^{-1}} & \qw  & \meter & \cw \\
  \lstick{\ket{0}}    & \gate{H} & \qw & \ctrl{1}          & \qw               & \qw & \cdots & & \qw                    &        \ghost{\mathcal{QFT}_{2n}^{-1}} & \qw  & \meter & \cw \\
  \lstick{\ket{1}}    & /^n \qw  & \qw & \gate{U{x^{2^0}}} & \gate{U{x^{2^1}}} & \qw & \cdots & & \gate{U{x^{2^{2n-1}}}} & \qw
 }
\]
\caption{Quantum circuit for Shor's algorithm \citep{wiki:shor}}
\end{figure}

There have been some improvements on Shor's algorithm, e.g. by \cite{bernstein2017post} who present a new algorithm that is faster than Shor's algorithm in most cases.

\subsubsection{Grover Search}
 Grover's search \citep{grover1996fast}, formally described in \cite{nielsen2002quantum}, takes a function $f$ such that there is at least one $s$ such that $f(s)=1$, a set $S=\{ 0,1\}^n$ of inputs of size $|S|=N=2^n$, and is able to find an $s\in S$ which satisfies $f(s)=1$ in $O(\sqrt{N})$ time. Classically this kind of search has to take at least $N$ steps, and therefore take $O(N)$ time. This speedup, though not exponential, is quite significant considering the generality of this algorithm. It was also proven that this is the maximal speedup possible for this problem \citep{zalka1999grover}.

Much like our $f$ oracle mentioned in the Deutsch-Jozsa algorithm, let $U$ be an oracle defined as $U\ket{x} \ket{y} = |x\rangle |y\oplus f(x)\rangle$ where $f(x)=1$ if $x$ is a solution to the search problem and 0 otherwise. The algorithm also uses conditional phase shift which takes $\ket{0}$ to $\ket{0}$ and $\ket{x}$ to $-\ket{x}$ for $x>0$. 
The Grover iteration used in Grover's search is defined as a product of 4 gates. First the $U$ gate, followed by a Hadamard gate, then conditional phase shift, and then another Hadamard gate. Together the Grover iteration looks like
\begin{align*}
	G &= (H^{\otimes n} (2\ket{0}^{\otimes n}\langle 0|^{\otimes n} - I_n) H^{\otimes n} )U_{}
\end{align*}
\cite{nielsen2002quantum}.

To demonstrate how the Grover operator is able to give the desired answer, a geometric analysis is quite useful. Let $M$ denote the number of solutions to $f(s)=1$, that is $M = | \{s\in S:f(s) = 1 \}|$.

Let $\ket{\beta} = \frac{1}{\sqrt{M}}\sum_{x:f(x)=1}\ket{x}$ be the vector of all $M$ solutions, and $\ket{\alpha} = \frac{1}{\sqrt{N-M}}\sum_{x:f(x)\neq1}\ket{x}$, then we can write the uniform state as \[ \frac{1}{\sqrt{2^n}}\sum_{x=0}^{2^n-1} \ket{x} = \sqrt{\frac{N-M}{N}} \ket{\alpha} + \sqrt{\frac{M}{N}}\ket{\beta}. \]
 The oracle transform reflects $\ket{\beta}$ about $\ket{\alpha}$; mathematically we can write this as  
 \[ U_{\omega} (a\ket{\alpha}+b\ket{\beta})\left( \frac{\ket{0}-\ket{1}}{\sqrt{2}} \right)  =  a\ket{\alpha}\left( \frac{\ket{0}-\ket{1}}{\sqrt{2}} \right) +b\ket{\beta}\left( \frac{\ket{1}-\ket{0}}{\sqrt{2}} \right)  = (a\ket{\alpha} - b\ket{\beta} )\left( \frac{\ket{0}-\ket{1}}{\sqrt{2}} \right) \]

The transform $(H^{\otimes n} (2\ket{0}\langle 0| - I) H^{\otimes n} )$ is a reflection about $\frac{1}{\sqrt{2^n}}\sum_{x=0}^{2^n-1} \ket{x}$.  Performing these two reflections together gives a rotation.

Let $\cos \frac{\theta}{2} = \sqrt{\frac{N-M}{N}}$, then we have that $\sin\frac{\theta}{2} = \sqrt{\frac{M}{N}} $ and we can re-write the uniform state as,
\[ \frac{1}{\sqrt{2^n}}\sum_{x=0}^{2^n-1} \ket{x} = \cos \frac{\theta}{2} \ket{\alpha} + \sin\frac{\theta}{2}\ket{\beta}. \]
Then applying the Grover iteration to both sides we get
\begin{align*} G\left(\frac{1}{\sqrt{2^n}}\sum_{x=0}^{2^n-1} \ket{x}\right) &= (H^{\otimes n} (2\ket{0}^{\otimes n}\langle 0|^{\otimes n} - I_n) H^{\otimes n} )U_{\omega} \left(\cos \frac{\theta}{2} \ket{\alpha} + \sin\frac{\theta}{2}\ket{\beta} \right) \\
&= (H^{\otimes n} (2\ket{0}^{\otimes n}\langle 0|^{\otimes n} - I_n) H^{\otimes n} ) \left(\cos \frac{\theta}{2} \ket{\alpha} - \sin\frac{\theta}{2}\ket{\beta} \right) \\ 
&= \cos \left( \frac{3\theta}{2}\right) \ket{\alpha} + \sin \left(\frac{3\theta}{2}\right)\ket{\beta} \end{align*}

and applying the iteration $k$ times leads to 

\[ G^k\left(\frac{1}{\sqrt{2^n}}\sum_{x=0}^{2^n-1} \ket{x}\right) = \cos \left( \frac{2k\theta +\theta}{2}\right) \ket{\alpha} + \sin\left( \frac{2k\theta +\theta}{2}\right)\ket{\beta}. \]

Thus we perform the iteration a number of times so that $\sin\left( \frac{2k\theta +\theta}{2}\right)$ is close to 1, which leads to 

\[ k = \left\lceil \frac{\pi}{4}\sqrt{\frac{N}{M}} \right\rceil. \]

This can be derived by
\begin{align*}
	\sin\left( \frac{2k\theta +\theta}{2}\right) &\approx 1 \\
	\frac{2k\theta +\theta}{2} &\approx \frac{\pi}{2} \\
	\theta\frac{2k+1}{2} &\approx \frac{\pi}{2} \\
	2k+1 &\approx \frac{\pi}{\theta} \\
	k &\approx \frac{\pi}{2\theta} - \frac{1}{2} \\
	k &\approx \frac{\pi}{4} \sqrt{\frac{N}{M}} - \frac{1}{2}
\end{align*}

We could include the $+2m\pi$ in the second line, however we are trying to use the least number of iteration steps so we pick $m=0$ in this case. Now that we have shown how the Grover iteration is able to find the desired answer, the algorithm for producing the single unique $x'$ such that $f(x')=1$ (thereby the case when $M=1$) can be defined as follows:
\begin{figure}[H]
\begin{align*}
	\ket{0}^{\otimes n}\ket{0}&\to  \frac{1}{\sqrt{2^n}}\sum_{x=0}^{2^n-1} \ket{x}(\ket{0} - \ket{1}) &\text{Hadamard} \\
	&\text{then repeat the Grover iteration } \lceil (\pi \sqrt{N} /4) \rceil \text{ times} \\
	&\to ((H^{\otimes n} (2\ket{0}\langle 0| - I) H^{\otimes n} )U_ )^{ \lceil (\pi \sqrt{N} /4) \rceil} \left( \frac{1}{\sqrt{2^n}}\sum_{x=0}^{2^n-1} \ket{x}(\ket{0} - \ket{1})\right) \\
	&= G^{ \lceil (\pi \sqrt{N} /4) \rceil} \left( \frac{1}{\sqrt{2^n}}\sum_{x=0}^{2^n-1} \ket{x}(\ket{0} - \ket{1})\right) \\
	&\approx \ket{\beta} \left( \frac{\ket{0}-\ket{1}}{2}\right) \\
	&= \ket{x'} \left( \frac{\ket{0}-\ket{1}}{2}\right) \\
	&\to x' &\text{Measurement on first $n$ qubits}
\end{align*}
\caption{Quantum Search Algorithm  \citep{nielsen2002quantum}}
\end{figure}

To produce a quantum circuit, we can just write out each transform used in order.

\begin{figure}[H]
	\[ \Qcircuit @C=1em @R=.7em {
 & & & & & \ustick{\text{Grover operator}} & & & & & & \\
\lstick{\ket{0}} & /^n \qw & \gate{H^{\otimes n}} & \multigate{1}{U} & \gate{H^{\otimes n}} & \gate{2 \ket{0}\langle 0| - I}         & \gate{H^{\otimes n}} & \qw & \cdots & & \meter & \cw \\
  \lstick{\ket{1}} & \qw     & \gate{H}             & \ghost{U_\omega}        & \qw & \qw & \qw & \qw & \cdots & \\
 & & & & & \text{Repeat $O(\sqrt{N})$ times} & & & & & &  } \]
 \caption{Quantum Circuit for Grover's algorithm \citep{nielsen2002quantum,wiki:grov}}
\end{figure}

It may not be immediately obvious what our value of $M$ should be, that is, how many solutions there may be. It turns out that we can solve this problem with our next algorithm.

\subsubsection{Quantum Counting Algorithm}
The Quantum Counting Algorithm, proposed in \cite{brassard1998quantum} and described in \cite{nielsen2002quantum}, is a combination of Grover search and phase estimation. Given an oracle indicator function $f_B:A\to \{0,1\}$ of $B\subseteq A$, with $|A|=N =2^n$, the Quantum Counting Algorithm finds $M=|B|$.

To do so, the Quantum Counting Algorithm finds a solution $\theta$ to the equation \begin{equation}
	\sin^2 \left( \frac{\theta}{2} \right) =  \frac{M}{2N}
\end{equation} then solves for $M$.

Phase estimation, described in Section \ref{phase} and \cite{nielsen2002quantum}, is a subroutine used in quantum algorithms to estimate the phase of the eigenvalue of some unitary operator (in this case $G$) to some precision. Phase estimation relies on the fact that when the eigenvalue is written in the form $e^{2\pi i j\phi}$ for phase $\phi$, the inverse Fourier transform will transform $\frac{1}{\sqrt{N}}\sum_{j=0}^{N-1}e^{2\pi i j\phi}\ket{j}$ to an approximation of $\phi$ in the form $\ket{\tilde{\phi}}$, where $\tilde{\phi}$ is the binary approximation of $\phi$.

To achieve $m$ bits of accuracy of $\theta$ with probability $1-\epsilon$, the algorithm works on two registers. The first register is of size $t=m+\lceil \log (2+\frac{1}{2\epsilon})\rceil$, and the second register of size $n+1$.

The algorithm is much like the phase estimation described in Section \ref{phase}.

\begin{align*}
	\ket{0}^{\otimes t} \ket{0}^{n+1} &\to \frac{1}{2^{t/2}}\sum_{k=0}^{2^t -1}\ket{k}\frac{1}{2^{(n+1)/2}}\sum_{s=0}^{2^{n+1} -1}\ket{s} &\text{Hadamards} \\
	&\to \frac{1}{2^{t/2}}\sum_{k=0}^{2^t -1}e^{2\pi i \phi k }\ket{k}\frac{1}{2^{(n+1)/2}}\sum_{s=0}^{2^{n+1} -1}\ket{s} &\text{Controlled-}G \\
	&= \frac{1}{2^{t/2}} \left( \ket{0}+e^{2\pi i 2^{t-1}\phi}\ket{1} \right) \left( \ket{0}+e^{2\pi i 2^{t-2}\phi}\ket{1} \right) \\
	&\ \ldots \left( \ket{0}+e^{2\pi i 2^{0}\phi}\ket{1} \right)\frac{1}{2^{(n+1)/2}}\sum_{s=0}^{2^{n+1} -1}\ket{s}  \\
	&= \frac{1}{2^{t/2}} \left( \ket{0}+e^{2\pi i 0.\phi_t}\ket{1} \right) \left( \ket{0}+e^{2\pi i 0.\phi_{t-1}\phi_t}\ket{1} \right) \\
	&\ \ldots \left( \ket{0}+e^{2\pi i 0.\phi_1\ldots\phi_t}\ket{1} \right) \frac{1}{2^{(n+1)/2}}\sum_{s=0}^{2^{n+1} -1}\ket{s}  \\
	&\to \ket{\tilde{\phi}}\frac{1}{2^{(n+1)/2}}\sum_{s=0}^{2^{n+1} -1}\ket{s}  &\text{Inverse Fourier transform} \\
	&\to \tilde{\phi} &\text{Measurement on first register} 
\end{align*}

The circuit of the algorithm is as follows,

\begin{figure}[H]
\[  \Qcircuit @C=2em @R=1.1em {
  \lstick{\ket{0}}    & \gate{H} &  \qw &\qw & \lstick{\cdots} & \ctrl{4} &\multigate{3}{\mathcal{QFT}_m^{-1}} &\meter & \cw \\
  \lstick{\vdots\ \ }   & \gate{H} & \qw &\qw & \lstick{\cdots} &\qw & \ghost{\mathcal{QFT}_m^{-1}} &\meter & \cw \\
  \lstick{\ket{0}}    & \gate{H} &  \qw   & \ctrl{2} & \lstick{\cdots} &\qw & \ghost{\mathcal{QFT}_m^{-1}} &\meter & \cw \\
  \lstick{\ket{0}}  & \gate{H} &  \ctrl{1} &\qw & \lstick{\cdots} &\qw & \ghost{\mathcal{QFT}_m^{-1}} &\meter & \cw \\
  \lstick{\ket{0}^{\otimes n+1}}   & \gate{H^{\otimes n+1}} & \gate{G^{2^0}} & \gate{G^{2^1}} & \lstick{\cdots} &\gate{G^{2^{m-1}}} & \qw }
\]
 \caption{Quantum Circuit for the Quantum Counting algorithm \citep{nielsen2002quantum,wiki:qca}}
\end{figure}

If we choose $m=\lceil n/2\rceil +1$ and $\epsilon$ sufficiently small (such as $1/10$), then the algorithm will take $O(\sqrt{N})$ Grover iterations. This means that the function $f$ will only be called $O(\sqrt{N})$ times. Note that this is in contrast to a classical (deterministic or probabilistic) algorithm which will take $O(N)$ oracle calls to achieve the same accuracy. Formally,

\begin{theorem}[Quantum Counting Correctness]	
\label{quancountcorr}
	Given a function $f:\{0,1\}^n\to \{0,1\}$ such that $M=|\{x\in \{0,1\}^n:f(x)=1\}|$ and $\sin^2\left( \frac{\theta}{2}\right)=\frac{M}{2N}$, to find $\theta$ with $m$ bits of accuracy, with probability $1-\epsilon$ the Quantum Counting Algorithm requires $O(m+n+\lceil \log(2+\frac{1}{2\epsilon})\rceil )$ registers and $O(\sqrt{N})$ time.
\end{theorem}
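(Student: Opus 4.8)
The plan is to peel the statement into two independent pieces — a count of qubit registers and a count of Grover iterations — and to obtain the implicit correctness claim from the phase-estimation guarantee of Section~\ref{phase}. For the register count, note that the Quantum Counting Algorithm is exactly the phase-estimation template of Section~\ref{phase} instantiated with the Grover iteration $G$: phase estimation prescribes a first register of size $t = m + \lceil \log(2 + \tfrac{1}{2\epsilon}) \rceil$ in order to extract $m$ bits of the phase with failure probability at most $\epsilon$, and the second register must carry the state on which $G$ acts — the $n$ search qubits together with the single oracle ancilla, i.e.\ $n+1$ qubits. Summing, the algorithm uses $t + (n+1) = m + n + 1 + \lceil \log(2 + \tfrac{1}{2\epsilon}) \rceil = O\!\big( m + n + \lceil \log(2 + \tfrac{1}{2\epsilon}) \rceil \big)$ qubits, which is the first assertion.

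Next I would argue correctness. From the Grover analysis already carried out, $G$ preserves the two-dimensional subspace spanned by $\ket{\alpha}$ and $\ket{\beta}$ and acts on it as a rotation by $\theta$, so its eigenvalues there are $e^{\pm i\theta} = e^{2\pi i(\pm \theta/2\pi)}$, and the uniform superposition $\tfrac{1}{\sqrt{2^{n+1}}}\sum_s \ket{s}$ lies in that subspace. Feeding this state as the eigenvector register, the state evolution displayed immediately before the theorem shows that measuring the first register after the inverse Fourier transform returns a binary string $\tilde\phi$ approximating $\theta/2\pi$ (or, via the other eigenvalue, $1 - \theta/2\pi$). Invoking the phase-estimation accuracy theorem of \cite{nielsen2002quantum}, the padding of $\lceil \log(2 + \tfrac{1}{2\epsilon}) \rceil$ extra qubits in the first register makes $\tilde\phi$ correct to $m$ bits with probability at least $1-\epsilon$; since $\sin^2(\theta/2)$ is invariant under $\theta \mapsto 2\pi - \theta$, both candidate phases yield the same $M$ through $\sin^2(\theta/2) = M/2N$, so $\theta$ — and hence $M$ — is recovered to $m$ bits with confidence $1-\epsilon$.

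For the running time I would count oracle calls. Phase estimation applies the controlled powers $G^{2^0}, G^{2^1}, \ldots, G^{2^{t-1}}$, and implementing these by repeated application of $G$ uses $\sum_{j=0}^{t-1} 2^j = 2^t - 1$ Grover iterations, each of which makes exactly one call to $f$. Factoring $2^t = 2^{m}\cdot 2^{\lceil \log(2 + 1/2\epsilon) \rceil}$ and taking $m = \lceil n/2 \rceil + 1$ gives $2^m \le 4\sqrt{N}$, while for a fixed $\epsilon$ (say $\epsilon = 1/10$) the factor $2^{\lceil \log(2 + 1/2\epsilon) \rceil}$ is an absolute constant; hence the number of Grover iterations, and with it the dominant cost of the algorithm, is $O(\sqrt{N})$. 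The remaining operations — the opening layer of Hadamards, the inverse Fourier transform on $t = O(n)$ qubits, and the final measurement — contribute only $\mathrm{poly}(n)$, negligible against the $O(\sqrt{N})$ oracle calls, so the time bound holds.

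The resource bookkeeping is routine; the step I expect to demand the most care is the correctness claim — in particular, confirming that the uniform state on the second register really does decompose into the two eigenvectors of $G$ with phases $\pm\theta/2\pi$ (so that phase estimation is genuinely estimating a function of $\theta$), handling the resulting two-valued ambiguity via the symmetry of $\sin^2$, and translating the $\lceil \log(2 + \tfrac{1}{2\epsilon}) \rceil$ extra qubits into the stated $1 - \epsilon$ success probability. That last quantitative point is exactly the heart of the phase-estimation analysis, and I would import it from \cite{nielsen2002quantum} rather than re-prove it here.
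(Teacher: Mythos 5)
Your proposal is correct and follows essentially the same route as the paper's own proof: both instantiate the phase-estimation template with the Grover iteration $G$, use the eigenvalue pair $e^{i\theta}$, $e^{i(2\pi-\theta)}$ (resolving the ambiguity via the symmetry of $\sin^2(\theta/2)$), and import the accuracy/success-probability guarantee of the first register of size $t=m+\lceil\log(2+\tfrac{1}{2\epsilon})\rceil$ from \cite{nielsen2002quantum}. If anything, you are more complete than the paper, which leaves the $O(\sqrt{N})$ iteration count (via $m=\lceil n/2\rceil+1$ and fixed $\epsilon$) to the surrounding discussion rather than the proof itself.
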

\begin{proof}
The majority of this proof is from \cite{nielsen2002quantum}. Given $m$ and $\epsilon$ set up the first register with $\lceil \log(2+\frac{1}{2\epsilon})\rceil$ qubits, and the second register with $n+1$ qubits. Use the Hadamard gate on the second register to take it to the superposition of $\frac{1}{\sqrt{2^{n+1}}}\sum_{x=0}^{2^{n+1}-1}\ket{x}$. Like the Grover search algorithm let $\ket{a}$ and $\ket{b}$ represent the eigenvectors of the Grover iteration with eigenvalues $e^{i\theta}$ and $e^{i(2\pi - \theta)}$ respectively. The superposition of the second register can be written in the form of $\ket{a}$ and $\ket{b}$. The phase estimation algorithm \ref{phase} allows us to estimate the phase of the eigenvalues of $\ket{a}$ or $\ket{b}$, that is $\theta$ or $2\pi-\theta$, to within $|\triangle\theta|\leq 2^{-m}$ with probability at least $1-\epsilon$. Therefore we are able to determine $\theta$ to an accuracy of $2^{-m}$ with probability at least $1-\epsilon$.
\end{proof}

\subsubsection{Harrow-Lloyd Algorithm for Linear equations}
Given some $N\times N$ matrix $A$ and some vector $b$, finding the solution $x$ to the equation $Ax=b$ is known as the linear equation problem. Classically this can be done in many ways, such as matrix inversion (finding $A^{-1}$ such that $x=A^{-1}b$). Classically the fastest algorithm takes $O(N\kappa)$ time, where $\kappa$ is the condition number of the matrix $A$. The Harrow-Lloyd algorithm \citep{harrow2009quantum} is able to achieve an exponential speedup in $N$ by taking $O(\log(N)\kappa^2 )$ time, if $\kappa = O(1)$. Note that when $\kappa = O(N)$ this algorithm provides no speedup.

At this point the reader may question the existence of the algorithm since to output an $N$ long vector $x$, one must use at least $N$ steps. This is correct, however, if one is interested in some property of $x$, such as  $||Mx||_{tr}$ for some matrix $M$, it will provide an exponential speedup over classical methods. The procedure relies on the quantum phase estimation and hamiltonian simulation, for both of which there are fast quantum algorithms.

\subsection{Quantum Complexity Theory}

Quantum complexity theory is developed in the landmark paper by \cite{bernstein1997quantum}, which expands on Deutsch's work by defining what it means for a QTM to be well-formed. Then showed that well-formedness of a QTM is equivalent to having a unitary time evolution operator (as Quantum Physics requires). Then, it goes on to demonstrate how to (theoretically) construct an efficient universal QTM. This is done by proving some results about reversible Turing Machines, which in turn apply to Quantum Turing Machines, since unitary transforms are by definition reversible. Then \cite{bernstein1997quantum} used those results to construct a looping and branching process required by an efficient QTM. To show how to decompose a unitary transform, \cite{bernstein1997quantum} defined a class of matrices called near-trivial.  \begin{definition}
	A near-trivial matrix is one that is the identity with either a single diagonal phase shift, $e^{i\theta}$, or a rotational block of the form \[\begin{pmatrix}
	\cos (\theta ) & -\sin (\theta ) \\
	\sin (\theta ) & \cos (\theta )
\end{pmatrix} \]  for some $\theta \in [0,2\pi)$.
\end{definition}  It was then proven that there exists a (deterministic) $poly(log(\frac{1}{\epsilon}))$ algorithm to decompose any unitary matrix into near-trivial matrices. \cite{bernstein1997quantum} used $\theta = \mathcal{R}:= 2\pi \sum_{k=1}^{\infty} 2^{-2^k}$ and showed that one can efficiently simulate any QTM with near-trivial transforms with this choice of $\mathcal{R}$. Quantum Turing Machines of the above form, with $\theta = \mathcal{R}$, are hereby referred to as $QTM_{BV}$.

All of this led to the definition of the Quantum complexity classes of $\mathbf{BQP}$ (Bounded Error Quantum Polynomial time), an analogue of $\mathbf{BPP}$, and $\mathbf{EQP}$ (Exact Quantum Polynomial Time). \begin{definition}
	$\mathbf{BQP}$ is defined as the set of languages that are accepted with probability $\frac{2}{3}$ by some polynomial time Quantum Turing Machine.
\end{definition} The class $\mathbf{EQP}$ is defined in a similar way, however the language has a probability of being accepted of 1. To compare Quantum complexity classes to classical complexity classes, \cite{bernstein1997quantum} proved $\mathbf{P}\subseteq \mathbf{EQP}$, that $\mathbf{BPP} \subset \mathbf{BQP}$ and that $\mathbf{BQP} \subseteq \mathbf{PSPACE}$. They then went on to prove that there exist problems which are in $\mathbf{BQP}$ but are not in $\mathbf{BPP}$, showing that Quantum Computing has strict advantages over classical deterministic (or probabilistic) computing.

\subsection{Quantum Computability}

Quantum computability was expanded on by \cite{adleman1997quantum}. Here, the authors used transcendental number theory and were able to produce some results about the computational power of different classes of Quantum Turing Machines. Similarly to \cite{bernstein1997quantum}, \cite{adleman1997quantum} defined a type of QTM much like the near-trivial QTM, i.e. one whose transforms are near-trivial matricies. Let $\theta \in [0,2\pi)$, then let $QTM_{\theta}$ denote a subset of all $QTMs$, whose matrix (time evolution operator) is block diagonal with each block containing 1, $-1$, or a $2\times 2$ of the form \[\begin{pmatrix}
	\cos (\theta ) & -\sin (\theta ) \\
	\sin (\theta ) & \cos (\theta )
\end{pmatrix}. \] Let $QTM_{\mathbb{Q}}=\bigcup_{q\in\mathbb{Q}}QTM_q$ and let $QTM_{BV}$ be the Quantum Turing Machine defined by \cite{bernstein1997quantum}. \cite{adleman1997quantum} showed that $QTM_{\mathbb{Q}} \equiv QTM_{BV}$ in the sense that they can simulate each other within some small $\epsilon >0$ error, with only a polynomial slowdown. Let $BQP_{\theta}$ denote $\mathbf{BQP}$ as defined above for $QTM_{\theta}$. \cite{adleman1997quantum} proved that this leads to $\mathbf{BQP} = BQP_{\mathbb{Q}}$. However, they were also able to prove that $BQP_{\mathbb{Q}} \subsetneq BQP_{\mathbb{C}}$, and that $BQP_{\mathbb{C}}$ contains sets of arbitrary Turing degrees. Recall that Turing degrees \citep{post1944recursively} are the degree of difficulty of a problem, that is, how unsolvable a problem is. This then leads to the fact that $QTM_{\mathbb{Q}} $ cannot simulate $QTM_{\mathbb{C}}$  with $\epsilon$ error in polynomial time. Additionally, \cite{adleman1997quantum} were able to prove that $\mathbf{BQP}$ and $\mathbf{EQP}$ are all contained within $\mathbf{PP}$. Recall that $\mathbf{PP}$ (Probabilistic polynomial time) is the class of all problems solvable with a probabilistic Turing Machine which is correct with probability more than $\frac{1}{2}$  \citep{gill1977computational}. If we define $EQP_{\theta}$ similar to $BQP_{\theta}$, then for $\theta$ such that $\cos(\theta)$ is poly-computable transcendental, $EQP_{\theta} = P$.

\subsection{Quantum Algorithmic Information Theory}
Little work has been done on Quantum Algorithmic Information Theory compared to other areas of Quantum Computing. Quantum Kolmogorov complexity was first discussed by \cite{berthiaume2000quantum}. \cite{muller2008strongly} expanded upon the idea, and defined it as follows,
\begin{definition}[\cite{muller2008strongly}]
	Given a QTM $M$ and a finite error $\delta >0$, the finite-error Quantum Kolmogorov complexity of a qubit string $\ket{x}$ is
	\[ K^Q_{M,\delta} (x) = \min_p \{ \ell (p) \ :\ ||x - M(p)||_{tr} < \delta \} \] and the approximate-scheme Quantum Kolmogorov complexity of a qubit string $x$ is \[ K^Q_{M} (x) = \min_p \left\{ \ell (p) \ :\ ||x - M(p,k)||_{tr} < \frac{1}{k} \ \forall \ k\in \mathbb{N} \right\} \]
	where $||\cdot ||_{tr}$ is the trace norm, i.e. $||a-b||_{tr} := \frac{1}{2}||a-b||_1$.
\end{definition} 

  	To construct a Quantum version of the invariance theorem, a result from classical Kolmogorov complexity, \cite{muller2008strongly} showed that there exists a Universal Quantum Turing Machine, $\mathfrak{U}$ which satisfies the property that for any other QTM $M$ we have 
\[ K^Q_{\mathfrak{U}} (x) \leq K^Q_M (x) + c_M \] for all qubit strings $x$, where $c_M$ is a constant depending only on $M$, but not on $x$. Then  \cite{muller2008strongly} goes on to prove that for all $\delta,\gamma \in \mathbb{Q}^+$ with $\delta < \gamma$, and for all QTM $M$ we have that \[ K^Q_{\mathfrak{U},\gamma} (x) \leq K^Q_{M,\delta} (x) + c_{M,\gamma,\delta} \] for all qubit strings $x$, where $ c_{M,\gamma,\delta}$ is a constant depending on $M,\gamma$ and $\delta$. This shows that the QTM $\mathfrak{U}$ is ``strongly universal''.

\section{Hardness of Counting}
In this section we will explain the classical hardness of counting, some quantum computing approaches, and the implications of a fast classical or quantum counting algorithm.

\subsection{The class $\#\mathbf{P}$}
Introduced by \cite{valiant1979complexity} while describing the complexity class of computing the permanent of a matrix, the complexity class $\#\mathbf{P}$ (pronounced num P) is the class of counting the number of solutions to an $\mathbf{NP}$ problem. For example the number of Hamiltonian cycles on a graph, or the number of inputs satisfying a Boolean circuit. Formally,

\begin{definition}[\cite{valiant1979complexity}]
	 $\#\mathbf{P}$ is the class of all problems which can be computed by counting Turing Machines of polynomial time complexity.
\end{definition}

Some examples of problems in $\#\mathbf{P}$ are: the number of distinct optimal tours in a travelling salesman problem, the number of unique subsets of a set of integers which sum to zero, and the number of subgraphs which are isomorphic to a given graph.

\subsection{Counting}
The problem of exact counting is defined as follows: Given a function $f:\{0,1\}^n\to \{0,1\}$, we are concerned with finding \[ C(n,f)= \frac{1}{2^n} \sum_{x\in \{0,1\}^n: f(x)=1} 1. \] 
The problem of approximate counting is to find a $k$ such that\[  (1-\epsilon )\cdot C(n,f) \leq k < (1+\epsilon )\cdot  C(n,f) \] for some fixed $\epsilon >0$.

It has been shown that approximate counting can be done probabilistically with runtime polynomial in $n$ and $\frac{1}{\epsilon}$, with an  $\mathbf{NP}$-complete oracle \citep{valiant1979complexity,stockmeyer1983complexity,stockmeyer1985approximation}. 

If there was a Quantum algorithm which could perform approximate counting in polynomial time, then that would imply that $\mathbf{NP}\subseteq \mathbf{BQP}$ since as we just mentioned, approximate counting can be done probabilistically with an  $\mathbf{NP}$-complete oracle. This would immediately give (theoretical) quantum supremacy, which is deemed implausible.

\subsection{Toda's Theorem}
An exceptional result by \cite{toda1991pp} demonstrates the power of the complexity classes $\mathbf{PP}$ and $\#\mathbf{P}$.

\begin{theorem}[Toda's Theorem]
	The entire polynomial hierarchy $\mathbf{PH}$ is contained in $\mathbf{P}^{\mathbf{PP}}$.
\end{theorem}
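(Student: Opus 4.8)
The plan is to reproduce Toda's two-stage argument. Recall that $\mathbf{PH}=\bigcup_{k\geq 0}\Sigma_k^p$ with $\Sigma_0^p=\mathbf{P}$ and $\Sigma_{k+1}^p=\mathbf{NP}^{\Sigma_k^p}$, and observe first that $\mathbf{P}^{\mathbf{PP}}=\mathbf{P}^{\#\mathbf{P}}$: a $\#\mathbf{P}$ oracle trivially answers any $\mathbf{PP}$ query, and conversely the exact number of accepting paths of a nondeterministic polynomial-time machine is recovered by binary search using the $\mathbf{PP}$ queries ``are there at least $t$ accepting paths?''. So it suffices to prove $\mathbf{PH}\subseteq\mathbf{P}^{\#\mathbf{P}}$, and I would route this through the class $\oplus\mathbf{P}$ (languages decided by the accepting-path count taken modulo $2$), showing in Stage~1 that $\mathbf{PH}\subseteq\mathbf{BPP}^{\oplus\mathbf{P}}$ and in Stage~2 that $\mathbf{BPP}^{\oplus\mathbf{P}}\subseteq\mathbf{P}^{\#\mathbf{P}}$.

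For Stage~1 the first ingredient is that $\oplus\mathbf{P}$ is closed under complement (prepend a dummy accepting path, turning an odd count even and vice versa), under AND (a machine guessing a witness for each conjunct has accepting-path count equal to the product of the individual counts, which is odd iff all are odd), and hence under OR by De~Morgan. The second ingredient is the Valiant--Vazirani isolation lemma: a randomized polynomial-time map sending a formula $\varphi$ to $\psi$ so that $\varphi$ unsatisfiable implies $\psi$ unsatisfiable, while $\varphi$ satisfiable implies $\psi$ has a unique satisfying assignment with probability $\Omega(1/n)$ (obtained by conjoining random affine hash constraints of a random dimension). Taking the OR over polynomially many independent runs --- legal by the closure above --- boosts the success probability to $1-2^{-\mathrm{poly}}$ while keeping a zero false-positive rate on unsatisfiable inputs, all within $\oplus\mathbf{P}$; this gives $\mathbf{NP}\subseteq\mathbf{BP}\!\cdot\!\oplus\mathbf{P}$. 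I would then induct on $k$: given $\Sigma_k^p\subseteq\mathbf{BP}\!\cdot\!\oplus\mathbf{P}$ (hence $\Pi_k^p\subseteq\mathbf{BP}\!\cdot\!\oplus\mathbf{P}$ by complement-closure), write $\Sigma_{k+1}^p=\exists\!\cdot\!\Pi_k^p$, push the quantifier inward via $\exists\!\cdot\!\mathbf{BP}\subseteq\mathbf{BP}\!\cdot\!\exists$ (after amplifying the error to beat the union bound over witnesses), absorb $\exists\!\cdot\!\oplus\mathbf{P}$ back into $\mathbf{BP}\!\cdot\!\oplus\mathbf{P}$ using Valiant--Vazirani together with the self-lowness $\oplus\mathbf{P}^{\oplus\mathbf{P}}=\oplus\mathbf{P}$, and collapse $\mathbf{BP}\!\cdot\!\mathbf{BP}\!\cdot\!\oplus\mathbf{P}=\mathbf{BP}\!\cdot\!\oplus\mathbf{P}$. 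Since $\mathbf{BP}\!\cdot\!\oplus\mathbf{P}=\mathbf{BPP}^{\oplus\mathbf{P}}$, Stage~1 follows.

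For Stage~2 the device is a counting-amplification polynomial. The polynomial $p(y)=3y^4+4y^3$ satisfies $p(0)=0$, $p(-1)=-1$, and moreover $y\equiv 0\pmod{2^j}$ implies $p(y)\equiv 0\pmod{2^{2j}}$ while $y\equiv -1\pmod{2^j}$ implies $p(y)\equiv -1\pmod{2^{2j}}$. Because $\#\mathbf{P}$ is closed under sum, product, and multiplication by positive constants, composing $p$ with itself $O(\log n)$ times turns any $f\in\#\mathbf{P}$ into some $g\in\#\mathbf{P}$ with $g(x)\equiv 0\pmod{2^m}$ when $f(x)$ is even and $g(x)\equiv -1\pmod{2^m}$ when $f(x)$ is odd, for $m$ any fixed polynomial in $|x|$; thus a single $\#\mathbf{P}$ value, read modulo $2^m$, decides a $\oplus\mathbf{P}$ predicate with precision to spare. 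Now take $L\in\mathbf{BPP}^{\oplus\mathbf{P}}$ decided by a machine $N$ using $r=\mathrm{poly}$ random bits and $q=\mathrm{poly}$ adaptive $\oplus\mathbf{P}$-oracle queries, amplified to error $2^{-n}$. Simulating $N$ path-by-path, replace each oracle answer by the corresponding amplified $g$-value (known modulo $2^m$ to be $0$ or $2^m-1$), arithmetize the resulting accept bit, and sum over the $2^r$ random strings; this defines one function $F\in\#\mathbf{P}$ whose value modulo a suitable $2^M$ encodes the number of accepting random strings. A deterministic polynomial-time machine then makes one query for $F(x)$, extracts those bits, compares the count against the threshold $\frac{2}{3}2^r$, and decides $L$; hence $\mathbf{BPP}^{\oplus\mathbf{P}}\subseteq\mathbf{P}^{\#\mathbf{P}}=\mathbf{P}^{\mathbf{PP}}$, completing the proof.

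The main obstacle is the arithmetization in Stage~2: one must choose the precision $m$ and master modulus $2^M$ large enough that neither the product over the $q$ adaptive oracle answers nor the sum over the $2^r$ random strings wraps around, so the target count is read off cleanly and the $2^{-n}$ error of the amplified $\mathbf{BPP}$ machine does not leak through the modular arithmetic --- and to check that every intermediate object genuinely stays in $\#\mathbf{P}$ (bounded degree, no subtraction). A secondary subtlety is Stage~1's induction step, where one must verify that oracle calls to a $\mathbf{BP}\!\cdot\!\oplus\mathbf{P}$ language inside an $\mathbf{NP}$-type computation can be simulated; this is precisely where shared or fixed randomness and the self-lowness of $\oplus\mathbf{P}$ enter.
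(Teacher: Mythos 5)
Your proposal is correct in outline, but note that the paper itself does not prove this statement at all: it is quoted as a known result with a citation to Toda (1991), and the surrounding text only uses the corollary $\mathbf{PH}\subseteq\mathbf{P}^{\#\mathbf{P}}$ via $\mathbf{P}^{\mathbf{PP}}=\mathbf{P}^{\#\mathbf{P}}$. What you have written is a faithful sketch of Toda's original two-stage argument, and the individual ingredients are all sound: the reduction $\mathbf{P}^{\mathbf{PP}}=\mathbf{P}^{\#\mathbf{P}}$ by binary search, the closure properties of $\oplus\mathbf{P}$, Valiant--Vazirani giving $\mathbf{NP}\subseteq\mathbf{BP}\!\cdot\!\oplus\mathbf{P}$, the induction using $\oplus\mathbf{P}^{\oplus\mathbf{P}}=\oplus\mathbf{P}$, and the amplifier $p(y)=3y^4+4y^3$, whose key property follows from the identity $p(y)+1=(y+1)^2(3y^2-2y+1)$ together with $y^3\mid p(y)$, so $y\equiv 0$ or $-1 \pmod{2^j}$ indeed forces $p(y)\equiv 0$ or $-1 \pmod{2^{2j}}$. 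Two small refinements would tighten Stage~2: it is cleaner to first collapse the adaptive oracle access using self-lowness (since $\mathbf{P}^{\oplus\mathbf{P}}\subseteq\oplus\mathbf{P}^{\oplus\mathbf{P}}=\oplus\mathbf{P}$, one may assume a single $\oplus\mathbf{P}$ predicate per random string, i.e.\ work with the operator class $\mathbf{BP}\!\cdot\!\oplus\mathbf{P}$ rather than arithmetizing $q$ adaptive answers path-by-path), and then the sum over the $2^r$ random strings read modulo a sufficiently large power of $2$ recovers the accepting count with no wrap-around issues, exactly as you anticipate in your final paragraph. With those details filled in, your argument is a complete proof of the theorem the paper only cites.
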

Using Toda's theorem, we get an immediate corollary which comes from $\mathbf{P}^{\mathbf{PP}}= \mathbf{P}^{\#\mathbf{P}}$.
\begin{corollary}
	The entire polynomial hierarchy $\mathbf{PH}$ is contained in $\mathbf{P}^{\#\mathbf{P}}$.
\end{corollary}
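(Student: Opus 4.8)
The plan is to derive the corollary directly from Toda's Theorem, which already gives $\mathbf{PH} \subseteq \mathbf{P}^{\mathbf{PP}}$, by establishing the oracle equality $\mathbf{P}^{\mathbf{PP}} = \mathbf{P}^{\#\mathbf{P}}$. Once that equality is in hand, substituting $\mathbf{P}^{\#\mathbf{P}}$ for $\mathbf{P}^{\mathbf{PP}}$ in the statement of Toda's Theorem yields $\mathbf{PH} \subseteq \mathbf{P}^{\#\mathbf{P}}$ with nothing further to do. So the whole task reduces to proving the two inclusions $\mathbf{P}^{\mathbf{PP}} \subseteq \mathbf{P}^{\#\mathbf{P}}$ and $\mathbf{P}^{\#\mathbf{P}} \subseteq \mathbf{P}^{\mathbf{PP}}$.

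For the first inclusion, I would show that a single query to a $\mathbf{PP}$ oracle can be answered using a single query to a $\#\mathbf{P}$ oracle. A language $L \in \mathbf{PP}$ is decided by a polynomial-time nondeterministic machine $N$ with $x \in L$ iff strictly more than half of the computation paths of $N$ on $x$ accept. After padding $N$ so that on inputs of length $n$ every path uses exactly $p(n)$ nondeterministic bits (hence there are exactly $2^{p(n)}$ paths), the function $x \mapsto (\text{number of accepting paths of } N \text{ on } x)$ lies in $\#\mathbf{P}$; a polynomial-time machine holding this $\#\mathbf{P}$ oracle simply compares the returned count to $2^{p(n)-1}$. Replacing each $\mathbf{PP}$ oracle call in a $\mathbf{P}^{\mathbf{PP}}$ computation by this simulation gives $\mathbf{P}^{\mathbf{PP}} \subseteq \mathbf{P}^{\#\mathbf{P}}$.

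For the reverse inclusion, I would reconstruct the value of a $\#\mathbf{P}$ function bit by bit using a $\mathbf{PP}$ oracle. Let $f \in \#\mathbf{P}$ be computed by a nondeterministic machine with $p(n)$ nondeterministic bits, so $0 \le f(x) \le 2^{p(n)}$, a number with polynomially many bits. The key sub-claim is that the threshold predicate $T = \{(x,k) : f(x) \ge k\}$ lies in $\mathbf{PP}$: one builds a probabilistic machine that on input $(x,k)$ mixes the computation of $N$ with an auxiliary gadget contributing a fixed number of accepting and rejecting paths, chosen so that the overall acceptance probability exceeds $1/2$ exactly when $f(x) \ge k$. Given a $\mathbf{PP}$ oracle for $T$, a polynomial-time machine performs binary search on $k$ over $[0, 2^{p(n)}]$ using $O(p(n))$ queries to determine $f(x)$ exactly, then continues its computation; this yields $\mathbf{P}^{\#\mathbf{P}} \subseteq \mathbf{P}^{\mathbf{PP}}$.

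The main obstacle is the sub-claim that the threshold predicate $f(x) \ge k$ is in $\mathbf{PP}$; the rest is routine bookkeeping. The difficulty is purely combinatorial: the auxiliary gadget must be designed so that the path-counting arithmetic produces the correct \emph{strict} inequality with the right parity, which typically means handling $k = 0$ separately and doubling the path count so that ``more than half accept'' translates cleanly into ``at least $k$ accepting paths.'' Once that gadget is constructed, the binary-search direction and the forward direction require no new ideas, and the corollary falls out of Toda's Theorem.
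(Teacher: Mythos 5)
Your proposal follows exactly the route the paper takes: the corollary is obtained from Toda's Theorem together with the equality $\mathbf{P}^{\mathbf{PP}}=\mathbf{P}^{\#\mathbf{P}}$, which the paper simply asserts and you correctly prove via the standard two inclusions (comparing the accepting-path count to $2^{p(n)-1}$, and binary search over a $\mathbf{PP}$ threshold predicate). Your argument is correct and matches the paper's approach, just with the well-known oracle equality spelled out in detail.
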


If we had an efficient classical algorithm which could perform exact counting, then from Toda's theorem the polynomial hierarchy would collapse. Since we suspect this is not the case, it is unlikely that there exists a classical algorithm which can compute exact counting.

There have been several attempts to use Quantum computing to solve counting problems more quickly than classical computers, such as the quantum counting algorithm in the previous section. Here, we will give two other attempts which provide significant speedup over classical methods: Boson sampling and Postselection.

\subsection{Boson sampling}
Before getting to Boson sampling, it is important to discuss the permanent.

\subsubsection{The Permanent}
Though less well-known than its cousin the determinant, the permanent \citep{weisstein2006permanent} is an operation on a matrix which involves permutations of the rows and columns of the matrix. Formally,
\begin{definition}
	The permanent of an $n\times n$ matrix $A=(a_{i,j})$  is
	\begin{equation} \label{perm3}
		perm(A) = \sum_{\sigma \in S_n} \prod_{i=1}^n a_{i,\sigma(i)}
	\end{equation}
	where $S_n$ is the set of permutations of $\{1,\ldots,n\}$, also known as the symmetric group.
\end{definition}
Alternate representations:

\begin{equation} \label{perm1}
		perm(A) = (-1)^n \sum_{s\subseteq \{1,\ldots,n\}} (-1)^{|s|} \prod_{i=1}^n \sum_{j\in s} a_{i,j}
	\end{equation}
and
\begin{equation} \label{perm2}
		perm(A) = \mathbb{E}\left[\det \begin{pmatrix}
			\pm \sqrt{a_{1,1}} & \cdots & \pm \sqrt{a_{1,n}} \\
			\vdots & \ddots & \vdots \\
			\pm \sqrt{a_{n,1}} & \cdots & \pm \sqrt{a_{n,n}} \\
		\end{pmatrix}^2 \right].
	\end{equation}
	Here the expectation is over the $2^{n^2}$ $\pm$ combinations.

Using either definition it is quite hard to calculate the permanent, at least naively. In Equation \ref{perm1} the outer sum is summing over the set $\{ s: s\subseteq \{1,\ldots,n\} \}$, and this set is of size $2^n$. As for Equation \ref{perm2}, the expected value is over $2^{n^2}$ different matrices.

It should be unsurprising then that calculating the permanent of a matrix $A$ is a $\#\mathbf{P}$-hard problem and a $\#\mathbf{P}$-complete problem if $A$ is a (0,1)-matrix \citep{valiant1979complexity}. In fact, even approximating the permanent of a matrix $A$ is $\#\mathbf{P}$-hard.

\subsubsection{Boson Sampling}
Boson sampling, defined by \cite{aaronson2011computational}, is a proposed (non-universal) model of quantum computation which involves sampling from a probability distribution of noninteracting bosons. It was shown in \cite{abrams1997simulation} that sampling from bosonic or fermionic distributions can be done efficiently (in polynomial time) using a universal quantum computer. In the process of boson sampling, it is required to calculate the permanent of a given matrix. This comes from the fact that the state probabilities in bosonic systems are permanents of matrices.

In defining boson sampling, \cite{aaronson2011computational} proved that, under some reasonable conjectures, the existence of a classical algorithm which could efficiently compute exact (or approximate) boson sampling would imply that the polynomial hierarchy collapses to the third level, meaning having more than two $\mathbf{NP}$-complete oracles is equivalent to having two $\mathbf{NP}$-complete oracles. It is unlikely (though not impossible) that this is the case.

This comes from the fact that exact (or approximate) boson sampling is a $\#\mathbf{P}$-hard problem, since it involves calculating a permanent, and Toda's theorem.

Thus if large, efficient boson Sampling computers were physically realised, we may be able to solve hard problems such as the exact or approximate counting or computing the permanent, which (likely) have no efficient classical algorithm.

\subsection{Postselection}
 Postselection, first defined in \cite{aaronson2005quantum}, is the process of ignoring all outcomes of a computation in which an event did not occur; selecting specific outcomes after (post) the computation.
 
 The complexity class $\mathbf{PostBQP}$, standing for post-selected bounded-error quantum polynomial time, is the class $\mathbf{BQP}$ with postselection. Unsurprisingly we have that $\mathbf{PostBQP}\supseteq \mathbf{BQP}$.
 
 The power of postselection which Quantum computing comes from the following theorem
 \begin{theorem}[\cite{aaronson2005quantum}]
 	$\mathbf{PostBQP} = \mathbf{PP}$. Which immediately implies $\mathbf{PP}\supseteq \mathbf{BQP}$
 \end{theorem}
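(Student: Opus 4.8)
The plan is to establish the two inclusions $\mathbf{PostBQP}\subseteq\mathbf{PP}$ and $\mathbf{PP}\subseteq\mathbf{PostBQP}$ separately. The asserted consequence $\mathbf{PP}\supseteq\mathbf{BQP}$ is then immediate, because a $\mathbf{BQP}$ computation is the special case of a $\mathbf{PostBQP}$ computation that never invokes postselection (equivalently, that postselects on a qubit which is $\ket{1}$ with certainty).

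For $\mathbf{PostBQP}\subseteq\mathbf{PP}$, I would first replace the given circuit by an equivalent one over a fixed universal gate set with well-behaved amplitudes, say Hadamard and Toffoli, so that after $h$ Hadamard gates every computational-basis amplitude is of the form $A_z/\sqrt 2^{\,h}$ with $A_z\in\mathbb Z$. The standard ``quantum computation as counting paths'' argument (the one behind $\mathbf{BQP}\subseteq\mathbf{PP}$) shows that $A_z$, viewed as a function of the input $x$ and the basis state $z$, is a $\mathbf{GapP}$ function, namely the number of positively signed minus negatively signed computational paths reaching $z$. Writing $P=\sum_{z\in S}A_z^2$ with $S=\{z:\text{the postselection bit of }z\text{ equals }1\}$, and $Q=\sum_{z\in S:\ \text{output bit of }z\text{ equals }1}A_z^2$, closure of $\mathbf{GapP}$ under products and under exponential sums makes $P$ and $Q$ nonnegative $\mathbf{GapP}$ functions, and $\Pr[\text{postselect}]=P/2^{h}$ while $\Pr[\text{postselect}\wedge\text{output}=1]=Q/2^{h}$. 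Since the definition of $\mathbf{PostBQP}$ forces $P\ge 1$, one checks that $x\in L\iff Q/P\ge 2/3\iff 6Q-4P+1>0$; as $6Q-4P+1\in\mathbf{GapP}$, this is exactly a $\mathbf{PP}$ predicate.

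For the harder inclusion $\mathbf{PP}\subseteq\mathbf{PostBQP}$, I would start from the complete problem: given a polynomial-time $f:\{0,1\}^n\to\{0,1\}$, decide whether $s:=|\{y:f(y)=1\}|\ge 2^{n-1}$, where a routine padding argument lets us assume $s\ne 2^{n-1}$, i.e. $d:=2s-2^n\ne 0$. The algorithm prepares $\tfrac{1}{\sqrt{2^n}}\sum_y\ket{y}\ket{f(y)}$, applies $H^{\otimes n}$ to the first register and postselects it on $\ket{0}^{\otimes n}$; this collapses the last qubit to the unnormalized state $(2^n-s)\ket{0}+s\ket{1}$, so postselection has ``purified'' the count into an amplitude. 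One more Hadamard produces $2^n\ket{0}-d\ket{1}$, placing the sign-carrying quantity $d$ into an amplitude. A small postselection gadget --- append a fresh qubit, Hadamard it, apply a Hadamard to it controlled on the data qubit being $\ket{1}$, then postselect the fresh qubit on $\ket{0}$ --- multiplies the $\ket{1}$-amplitude by $\sqrt 2$; applying it $2k$ times yields $2^n\ket{0}-2^k d\ket{1}$. A final Hadamard gives $(2^n-2^kd)\ket{0}+(2^n+2^kd)\ket{1}$, and measuring returns $\ket{1}$ with probability exceeding $\tfrac12$ exactly when $d>0$, i.e. when $s>2^{n-1}$.

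The crux --- and the step I expect to be the main obstacle --- is calibrating the rescaling parameter $k$: a single fixed $k$ cannot work, since $|d|$ ranges over all of $\{2,4,\dots,2^n\}$ and for most $k$ the bias $\tfrac{2^n 2^k d}{2^{2n}+2^{2k}d^2}$ of the final measurement is only inverse-exponentially small. The remedy is to pick $k$ uniformly at random in $\{0,1,\dots,n\}$ (equivalently, run all scales coherently over a control register): for every $k$ this bias has the correct sign $\operatorname{sgn}(d)$, and for the scale with $2^k|d|\in[2^{n-1},2^n]$ it is a positive constant, so averaging over $k$ leaves a bias of magnitude $\Omega(1/n)$ still pointing the right way. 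Running $\Theta(n^2)$ independent copies inside one circuit, postselecting on all of the individual postselections succeeding --- permissible because $\mathbf{PostBQP}$ imposes no lower bound on the postselection probability, only that it is nonzero --- and taking the majority vote then pushes the success probability above $2/3$, so $L\in\mathbf{PostBQP}$. The details I would still need to pin down are the padding that rules out $s=2^{n-1}$, the explicit constant in the $\Omega(1/n)$ bias bound, and a verification that the compounded postselection probability, although possibly exponentially small, is genuinely positive on the relevant branch.
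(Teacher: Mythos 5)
The paper itself does not prove this statement---it imports it from \cite{aaronson2005quantum}---so your proposal has to stand on its own, and its first half does: the $\mathbf{GapP}$ path-counting argument for $\mathbf{PostBQP}\subseteq\mathbf{PP}$ is the standard one and your reduction to checking the sign of $6Q-4P+1$ is sound (with integer numerators, nonzero postselection probability indeed forces $P\ge 1$), and the corollary $\mathbf{BQP}\subseteq\mathbf{PP}$ follows as you say. The gap is in the step you yourself flag as the crux of $\mathbf{PP}\subseteq\mathbf{PostBQP}$: choosing the scale $k$ uniformly at random (or coherently on a control register) and postselecting jointly does \emph{not} leave an $\Omega(1/n)$ bias. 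Conditioning on the postselections succeeding reweights each $k$ by its success probability, and your $\sqrt2$-gadget chain succeeds with probability proportional to $4^{n-k}+d^2$; tracking the whole branch, the probability of (all postselections succeed and output $1$) for a given $k$ is proportional to $(2^{n-k}+d)^2$, and of output $0$ to $(2^{n-k}-d)^2$. Summing over $k$, the conditioned bias is of order $2^{n}|d|/\bigl(4^{n}+n\,d^{2}\bigr)$, which for small $|d|$ (say $d=\pm2$, i.e.\ $s=2^{n-1}\pm1$) is $\Theta(2^{-n})$: the one scale with constant bias carries exponentially small posterior weight, while the dominant scales have exponentially small bias. Majority over $\Theta(n^2)$ copies then cannot reach $2/3$; you would need exponentially many.

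The repair is exactly what Aaronson does: do not average over scales. Run, for each fixed $k\in\{0,\dots,n\}$, its own block of polynomially many postselected copies (all postselections imposed simultaneously, but the statistics read off per block, so each block's conditional distribution is the correct one for that $k$), and accept iff \emph{some} block shows empirical frequency of outcome $1$ above, say, $0.7$. If $d>0$ the good scale has per-copy success probability at least $3/4$, so its block passes with high probability; if $d<0$ every scale has success probability below $1/2$, so by Chernoff plus a union bound over the $n+1$ blocks no block passes. (Aaronson phrases this as preparing $\alpha\ket{0}\ket{\psi}+\beta\ket{1}H\ket{\psi}$, postselecting, and checking closeness to $\ket{+}$ for each of the ratios $\beta/\alpha\in\{2^{-n},\dots,2^{n}\}$ separately; the existential, per-scale decision is what makes the argument survive the postselection reweighting.) The remaining items you list---padding to exclude $s=2^{n-1}$ and positivity of the compound postselection probability---are routine, but as written the averaging-plus-majority step fails.
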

 
 An immediate result of this is that $\mathbf{P}^{\mathbf{PostBQP}}=\mathbf{P}^{\#\mathbf{P}}$ and therefore $\mathbf{PH}\subseteq \mathbf{P}^{\mathbf{PostBQP}}$. This means that if there was an efficient way to postselect with a Quantum computer, we would be able to solve many of the problems which are intractable classically, including exact counting.

\section{Induction}

The problem of induction, specifically inductive reasoning, is one of the most well-studied problems in philosophy \citep{de1972probability,popper1957philosophy,popper1985problem,hume2000enquiry,cohen1989introduction,carnap1962logical}. First discussed in \cite{Hume1738-HUMATO-3}, the problem of induction and inductive reasoning is the process of reasoning about a hypothesis, given some evidence (data). Outside of philosophy, many areas such as machine learning, statistics, and economics, rely heavily on induction and have produced many practical solutions to the problem. This is because induction can be used to find ``truth'' in the world. Inductive ability has also for a long time been the way in which we test a human's ability, in the form of IQ tests, and many similar examinations. 

Many of the aforementioned studies have proposed solutions to the induction problem, some of which have found great practical success such as \cite{nasrabadi2007pattern} in statistical machine learning. One proposed solution to the Induction problem, Solomonoff induction \citep{solomonoff1964formal}, has benefits over all other solutions (as well as some downsides); this will be the focus of this section.

\subsection{Approaches to Induction}

\subsubsection{Laplace/Bayesian Approach}
Laplace's approach to the induction problem is a rule of succession  \citep{solomonoff1964formal}, also called Bayes-Laplace estimator. This rule was famously demonstrated with the problem of predicting whether or not the sun would rise tomorrow. The following explanation is from \cite{Hutter:04uaibook}.

This approach relies on Bayes' theorem.
\begin{theorem}[Bayes' Theorem]
	Let $H$ and $D$ be events, then \[ P(H|D) = \frac{P(D|H)P(H)}{P(D)} \]
\end{theorem}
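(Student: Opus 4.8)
The plan is to derive Bayes' Theorem directly from the definition of conditional probability, which is the only ingredient needed. Recall that for events $A,B$ with $P(B)>0$, the conditional probability is defined by $P(A|B) = P(A\cap B)/P(B)$. The whole proof is an exercise in applying this definition twice and exploiting the commutativity of intersection.

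Concretely, first I would assume $P(D)>0$ (otherwise $P(H|D)$ is undefined and the statement is vacuous) and, for the intermediate manipulation, $P(H)>0$. From the definition applied to $P(H|D)$ we get $P(H\cap D) = P(H|D)\,P(D)$, and from the definition applied to $P(D|H)$ we get $P(D\cap H) = P(D|H)\,P(H)$. Since $H\cap D = D\cap H$ as sets, their probabilities agree, so
\[ P(H|D)\,P(D) = P(D|H)\,P(H). \]
Dividing both sides by $P(D)$, which is legitimate because $P(D)>0$, yields $P(H|D) = \dfrac{P(D|H)P(H)}{P(D)}$, as claimed.

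There is essentially no obstacle here: the result is immediate once conditional probability is on the table. The only point requiring a word of care is the hypothesis that the conditioning events have positive probability — I would state explicitly that $P(D)>0$ is needed for the left-hand side to make sense and note that the $P(H)>0$ case is the substantive one (when $P(H)=0$ both sides are zero and the identity holds trivially under the usual convention). If one wished to avoid mentioning $P(H)>0$ altogether, an alternative is to use only the single identity $P(H\cap D)=P(H|D)P(D)$ together with the definition $P(H|D)$ and rewrite $P(H\cap D)$ via $P(D|H)P(H)$ only when $P(H)>0$; either route is routine.
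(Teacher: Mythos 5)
Your proof is correct: the identity $P(H\cap D)=P(D\cap H)$ combined with the definition of conditional probability is exactly the standard derivation, and your attention to the hypotheses $P(D)>0$ (and the trivial $P(H)=0$ case) is appropriate. The paper itself states Bayes' Theorem without proof, treating it as a standard fact before passing to Bayes' Rule with the expansion $P(D)=\sum_i P(D|H_i)P(H_i)$, so your argument supplies precisely the routine step the paper omits.
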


Where $P(H)$ be the prior plausibility of the hypothesis $H$, $P(D|H)$ is the likelihood of the data $D$ under the hypothesis $H$, and $P(H|D)$ is the posterior plausibility of of hypothesis $H$, and $\{ H_i\}$ form an exclusive and exhaustive set of hypothesis'. We also have $P(D) = 
\sum_{i} P(D|H_i) P(H_i)$. 
Then Bayes' theorem in its most useful form
\begin{theorem}[Bayes' Rule]
	\[ P(H|D) = \frac{P(D|H)P(H)}{\sum_{i} P(D|H_i) P(H_i)}. \]
\end{theorem}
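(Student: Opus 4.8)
The plan is to derive Bayes' Rule by combining the already-stated Bayes' Theorem with the law of total probability; no new machinery is needed. First I would start from Bayes' Theorem, which gives $P(H|D) = \frac{P(D|H)P(H)}{P(D)}$, valid whenever $P(D) > 0$. The only thing left to do is to rewrite the denominator $P(D)$ in the form claimed.

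Second, I would use the stated hypothesis that $\{H_i\}$ is a mutually exclusive and exhaustive family of events, i.e.\ a partition of the sample space. Under this assumption the event $D$ decomposes as the disjoint union $D = \bigcup_i (D \cap H_i)$, so by (finite or countable) additivity of the probability measure, $P(D) = \sum_i P(D \cap H_i)$. Applying the definition of conditional probability to each summand, $P(D \cap H_i) = P(D|H_i)P(H_i)$, which recovers the law of total probability $P(D) = \sum_i P(D|H_i)P(H_i)$ already recorded in the text just above the statement.

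Third, I would substitute this expression for $P(D)$ into the denominator of Bayes' Theorem, obtaining $P(H|D) = \frac{P(D|H)P(H)}{\sum_i P(D|H_i)P(H_i)}$, which is exactly the asserted identity.

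The main (and essentially only) obstacle is bookkeeping rather than mathematics: one must check that the $\{H_i\}$ genuinely partition the space, so that no probability mass of $D$ is omitted or double-counted when splitting the sum, and that $P(D) \neq 0$ so that all the conditional probabilities involved are well-defined. With those caveats in place the result is immediate, so I would expect the proof to be a single short paragraph.
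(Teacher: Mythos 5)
Your proposal is correct and matches the paper's intended derivation exactly: the text states Bayes' Theorem and records $P(D)=\sum_i P(D|H_i)P(H_i)$ immediately before the rule, so substituting the law of total probability into the denominator is precisely the argument. Your added care about $\{H_i\}$ partitioning the space and $P(D)>0$ is a fine (if implicit in the paper) bit of bookkeeping.
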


Using Bayes' rule, one can use the prior probability to calculate the posterior plausibility of of hypothesis $H$ given some data $D$, and this in turn can be used for inductive reasoning and prediction.

Getting back to predicting whether or not the sun will rise tomorrow. Let $x$ be some finite binary sequence, $x=x_1\ldots x_n\in\mathbb{B}^n$, and let $n_0$ and $n_1$ be the number of 0's and 1's in the sequence respectively. Let each element finite sequence be generated by some probability $\theta\in [0,1]$, that is, $P(x_i=1)=\theta$ for all $i$. Our hypothesis class is then $H_{\theta} = \text{Bernoulli}(\theta)$. The likelihood of an event $y$ be given a hypothesis is $P(y|H_{\theta}) = \theta^{n_1}(1-\theta)^{n_0}$, and using a uniform prior plausibility $P(\theta) = P(H_{\theta} ) = 1$. Lastly the probability of some evidence $P(y) = \frac{n_1!n_0!}{(n+1)!}$. All together with Bayes' rule we get
\[ P(y|x) =  \frac{P(x|y)P(y)}{P(x)} = \frac{(n+1)!}{n_1!n_0!}\theta^{n_1}(1-\theta)^{n_0} \]
Then if we let $x_i=1$ if the sun rose on the $i$th day according to our data, we want to find the probability of the the sun will rise tomorrow. Our data says that the sun rose for every previous day, then we get
\[ P(1|x) = \frac{P(x1)}{P(x)} = \frac{n_1 +1}{n+2} \] Which translates to
\[P(\text{sunrise tomorrow }|\text{ all previous sunrises})  =  \frac{\#\text{days the sun has risen} +1}{\#\text{days the sun has risen}+2} .\]
So according to Bayes' and Laplace's rule if the sun has risen for $10^{12}$ days (the approximate age of the earth) then the probability the sun will not rise tomorrow is $\frac{1}{10^{12}+2}$.

\subsection{Algorithmic Complexity and Solomonoff Induction}

This section will be split into two parts: the problem of algorithmic complexity and the problem of induction. For a complete analysis of these problems see \citep{ming2014kolmogorov}. In this section we will use the following notation. Let $\mathbb{B}^*$ denote the set of all finite binary strings, and let $\ell(p)$ denote the length of a binary string $p\in \mathbb{B}^*$.

\cite{kolmogorov1963tables} proposed a complexity measure 
of how complex a given string (or number) is. It is the length of the smallest program which when inputted into a (often Universal) Turing machine will produce the string. More formally, \begin{definition}
	Given a Turing machine $T$, the Kolmogorov complexity of a string $x$ is \[ K_T(x) = \min_p \{ \ell(p) \ :\ T(p) = x \} \] 
\end{definition} There has been much work done on Kolmogorov complexity, and its application to Artificial Intelligence, as well as to compression. For a  detailed description on these topics and the properties of Kolmogorov complexity, see \cite{Hutter:04uaibook}. Although Kolmogorov complexity is incomputable, it is limit computable. A function $f(x)$ is limit computable if there exists a computable function $\hat{f}(x,t)$ such that $f(x) = \lim_{t\to\infty} \hat{f}(x,t) $. This incomputabiliy comes from the fact that the programs being checked may not halt, and there is no way to determine which programs will not halt.

The problem of induction, that is, predicting or inferring what will come next given some past data, has been extensively studied. In 1964 Solomonoff proposed a solution to the induction problem \citep{solomonoff1964formal,solomonoff1964formal2}.

Solomonoff Induction is based on Epicurus' principle of multiple explanations and Occam's Razor. Together these principles state that one should always consider every theory (or model) which is consistent with the data, however give preference to more simple theories (or models). Solomonoff represents the concept of complexity of a model by the length of a program which `computes' the model.

 \cite{solomonoff1964formal,solomonoff1964formal2} proposed four equivalent solutions to the induction problem, then demonstrated these induction techniques on three different induction tasks. The three induction tasks were induction on a Bernoulli sequence, induction on sequences with symbol constraints, and phrase structure grammars in coding for induction. Since the original paper the definition has been refined, and more applications have been demonstrated.

A description of Solomonoff Induction is as follows.

\begin{definition}
	Given a universal monotone Turing Machine $U$, the universal (Solomonoff) a-priori probability semi-measure of $x\in \mathbb{B}^* $ is defined as\[ M(x) = \sum_{p\ :\ U(p) = x*} 2^{-\ell(p)} \] where $x*$ denotes a string which starts with $x$.
\end{definition}
The programs being summed over in the universal a-priori probability represent descriptions of $x$, and the $2^{-\ell(p)}$ is a weight on the complexity of the description. In this way we define complexity as description length. This is universal in the sense that every program is considered.

For conditional probability we have,
\begin{definition}
	The corresponding conditional probability of $x$ given $y$ is \[ M(x|y) = \frac{M(yx)}{M(y)}=\frac{\sum_{p:U(p) = yx*} 2^{-\ell(p)}}{\sum_{p:U(p) = y*} 2^{-\ell(p)}} \] where $yx$ denotes the concatenation of $y$ with $x$.
\end{definition}

As an example of this induction,

\begin{exmp}
	If 0 denotes rainy and 1 denotes sunny and we wish to calculate the probability it will be sunny given the past data 0101000, we calculate
	\[ M(1|0101000) = \frac{M(01010001)}{M(0101000)}=\frac{\sum_{p:U(p) = 01010001*} 2^{-\ell(p)}}{\sum_{p:U(p) = 0101000*} 2^{-\ell(p)}} \]
\end{exmp}

$M$ is a universal semi-measure \citep{ming2014kolmogorov} and, like Kolmogorov complexity, it is incomputable. The incomputability comes from the fact that one cannot compute every possible program on a universal Turing machine as some may not halt. 
Additionally it is required that every program be ran. Whereas for Kolmogorov complexity, one can show that only a finite amount of programs need to be ran. 
However, even though it it incomputable, it has been shown that $M$ is limit computable and there are algorithms which can approximate it. This will be discussed further in later sections.

The importance of $M$ comes from the following theorem proved by \cite{solomonoff1978complexity}, that is, Solomonoff induction.

\begin{theorem}[\cite{solomonoff1978complexity}]
\label{solthm}
	Let $\mu$ be a computable measure with $x_1,x_2,\ldots $ distributed according to $\mu$. Then the total squared error between $M$ and $\mu$ will be finite, specifically
	\[ \mathbb{E}_{\mu} \left( \sum_{t=1}^{\infty}  (M(x_{t+1}=1|x_1\ldots x_t ) - \mu (x_{t+1}=1|x_1\ldots x_t) )^2\right) \leq K(\mu) \frac{\ln (2)}{2} <\infty \]
\end{theorem}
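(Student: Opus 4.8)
The plan is to establish Solomonoff's convergence bound by relating the cumulative squared prediction error to the KL-divergence between the true measure $\mu$ and the universal semi-measure $M$, and then bounding that KL-divergence by the Kolmogorov complexity $K(\mu)$. The argument proceeds in three stages: (i) a pointwise inequality converting squared error into relative entropy, (ii) a telescoping/chain-rule step that sums the per-step relative entropies into a single global KL-divergence over length-$n$ strings, and (iii) the ``coding'' bound $M(x) \geq 2^{-K(\mu)}\mu(x)$ coming from the dominance property of $M$.

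First I would recall the elementary inequality that for probabilities $p,q \in [0,1]$ one has $(p-q)^2 \leq \tfrac{1}{2}\bigl(p\ln\tfrac{p}{q} + (1-p)\ln\tfrac{1-p}{1-q}\bigr)$, i.e. the squared difference of the Bernoulli parameters is dominated by half the KL-divergence between the corresponding Bernoulli distributions. Applying this with $p = \mu(x_{t+1}=1 \mid x_{1:t})$ and $q = M(x_{t+1}=1 \mid x_{1:t})$, and then taking the $\mu$-expectation, bounds each term of the sum by $\tfrac12$ times the expected per-step KL-divergence $\mathbb{E}_\mu\bigl[\sum_{a \in \mathbb{B}} \mu(a \mid x_{1:t})\ln\frac{\mu(a\mid x_{1:t})}{M(a\mid x_{1:t})}\bigr]$.

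Next I would sum over $t$ and use the chain rule for relative entropy: the expected per-step KL-divergences telescope, so that $\sum_{t=1}^{n-1}\mathbb{E}_\mu[\mathrm{KL}_t] = \sum_{x_{1:n}} \mu(x_{1:n})\ln\frac{\mu(x_{1:n})}{M(x_{1:n})} =: D_n$, using $\mu(x_{1:t})M(\,\cdot\mid x_{1:t}) = M(x_{1:t}\,\cdot)/M(x_{1:t})\cdot\mu(x_{1:t})$ and cancellation across consecutive terms. This reduces everything to bounding $D_n$ uniformly in $n$. Here is where the crucial property of $M$ enters: since $U$ is a universal monotone machine and $\mu$ is computable, there is a program of length $K(\mu) + O(1)$ that, combined with a Bayes-style sampling routine, lets $M$ dominate $\mu$ up to the multiplicative constant $2^{-K(\mu)}$ — formally $M(x) \geq 2^{-K(\mu)}\mu(x)$ for all $x$ (the Solomonoff dominance theorem). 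Substituting gives $\ln\frac{\mu(x_{1:n})}{M(x_{1:n})} \leq K(\mu)\ln 2$ for every string, hence $D_n \leq K(\mu)\ln 2$ for all $n$; letting $n \to \infty$ via monotone convergence yields $\sum_{t=1}^\infty \mathbb{E}_\mu[(M - \mu)^2] \leq \tfrac12 K(\mu)\ln 2$, which is the claim.

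The main obstacle is the dominance step $M(x) \geq 2^{-K(\mu)}\mu(x)$: this is the substantive content and requires care about which flavour of $M$ and $K$ are used (monotone complexity, semi-measures versus measures, and the fact that a universal machine can simulate the shortest program computing $\mu$ and then perform the cumulative-distribution inversion that turns $\mu$ into a monotone-machine input distribution). The pointwise Bernoulli inequality and the telescoping are routine; the care needed is in asserting that the enumerable universal semi-measure $M$ multiplicatively dominates every computable measure with the constant governed precisely by the description length of $\mu$, which is exactly the ingredient I would either cite from the Kolmogorov complexity literature referenced in the paper or sketch via the monotone-machine construction.
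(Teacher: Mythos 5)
The paper itself offers no proof of Theorem \ref{solthm}: it is stated purely as a citation to \cite{solomonoff1978complexity}, so there is no in-paper argument to compare against. Your proposal is the standard and correct route (the one in Solomonoff's paper and in \cite{Hutter:04uaibook}): the binary Pinsker-type inequality $(p-q)^2\le\frac{1}{2}\bigl(p\ln\frac{p}{q}+(1-p)\ln\frac{1-p}{1-q}\bigr)$ applied to the conditional predictions, the chain rule telescoping the expected per-step relative entropies into $D_n=\sum_{x_{1:n}}\mu(x_{1:n})\ln\frac{\mu(x_{1:n})}{M(x_{1:n})}$, and the universal dominance $M(x)\ge 2^{-K(\mu)}\mu(x)$ giving $D_n\le K(\mu)\ln 2$ uniformly in $n$, hence the bound $K(\mu)\frac{\ln 2}{2}$ after the factor $\frac{1}{2}$. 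The one point deserving explicit care, which you flag but do not resolve, is that $M$ is only a semimeasure, so $M(0\mid x_{1:t})\le 1-M(1\mid x_{1:t})$ and the conditionals do not sum to one; this works in your favour, because replacing $1-q$ by the smaller quantity $M(0\mid x_{1:t})$ only increases the per-step KL term, so the telescoped sum built from the true conditionals of $M$ still upper-bounds the squared errors and the argument goes through. With that observation, and the standard dominance constant $2^{-K(\mu)}$ for the universal enumerable semimeasure, your outline is a complete and correct proof of the stated inequality.
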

where $K(\mu)$ denotes the Kolmogorov complexity of $\mu$. This theorem states that the expected value of the sum of the squares of the difference between $M$ and $\mu$ is bounded by a constant. Hence for any computable measure $\mu$, Solomonoff's prior will have a bounded error in prediction. Moreover it has been shown that any probability distribution $P$ that satisfies Theorem \ref{solthm} in place of $M$ must be incomputable \cite{solomonoff2003kolmogorov}. This shows that the incomputability of $M$ is not a bane but a requirement to be able to have bounded errors. In turn this demonstrates that any computable statistical procedure used in practice must have infinitely many errors on some sequences.

For a description of the application of Solomonoff Induction to Artificial Intelligence see \cite{solomonoff1985application}. Nearly forty years after the original paper, Solomonoff gave the Kolmogorov lecture \citep{solomonoff2003kolmogorov} after receiving the Kolmogorov medal, in which he discussed his Universal distribution and how it relates to AI. Solomonoff mentioned how the study of Artificial Intelligence had, for a long time, been completely absent of any notion of probability. However, to be able to solve the big problem, that is the creation of strong (general) Artificial Intelligence, one must inevitably use probability. For a description of how Solomonoff Induction can be used to create a theoretically optimal agent for reinforcement learning, that is a (theoretical) strong Artificial Intelligence, see \cite{Hutter:04uaibook} and Section \ref{aixisec}.

\subsection{Approximation schemes for Solomonoff Induction}
There are algorithms which can approximate Kolmogorov complexity and the Solomonoff prior, most notably Levin search \citep{levin1973universal}, Hutter search \citep{hutter2002fastest}, and the Optimal Order Problem solver  \citep{schmidhuber2004optimal}. 

Levin search \citep{levin1973universal} is an algorithm for solving a given inversion problem. Given a function $f$ and a value $y$, the Levin search algorithm inverts the function $f$. Levin search sets $i:=0$ and executes every input to the function, $x\in\mathbb{B}^*$, with $\ell (x) \leq i$ for time $2^i 2^{-\ell (x)}$ steps, then sets $i:=i+1$ and repeats, until an $x$ is found such that $f(x)=y$. 

Hutter Search \citep{hutter2002fastest} is a general speedup algorithm for any given problem. It works by performing three different tasks and sharing resources between them. The first task is to prove that other functions (programs in a universal Turing machine) are equivalent to the desired program and that these functions have time bounds, all of this in formal logic. The second task is to compute the time bound of every program which satisfies the condition of the first task, and this computation is split up in a similar way to Levin Search. The third task is to run the function (program) which has the best time bound from the second task. 

The G\"odel machine \citep{schmidhuber2007godel}
is a self-improving solver. It works by first solving (some) problems, then while solving new problems searching for other solvers that are able to out-perform itself, and changing its own solver to the new solver when it finds a superior one.

\subsection{The optimal agent AIXI}
\label{aixisec}
Reinforcement learning \citep{sutton1998reinforcement} is a paradigm in artificial intelligence where an agent is given observations and reward (a real number) by an environment, and the agent performs actions which has some effect on the environment. The goal is to maximise the future reward based on the history of interactions with the environment. A simple example of a reinforcement learning environment is an agent playing a game like tic-tac-toe, and the agent receives $+1$ reward for winning the game, $0$ for drawing the game, and $-1$ for losing the game. Below is a diagram of the agent environment (Env) interaction
\begin{center}
\begin{tikzpicture}[scale=0.2]
\tikzstyle{every node}+=[inner sep=0pt]
\draw [black] (35.7,-27.6) circle (3);
\draw (35.7,-27.6) node {$Agent$};
\draw [black] (52.5,-27.6) circle (3);
\draw (52.5,-27.6) node {$Env$};
\draw [black] (36.61,-24.758) arc (152.10789:27.89211:8.475);
\fill [black] (51.59,-24.76) -- (51.66,-23.82) -- (50.77,-24.28);
\draw (44.1,-19.75) node [above] {$Actions$};
\draw [black] (50.722,-30.001) arc (-45.59346:-134.40654:9.464);
\fill [black] (37.48,-30) -- (37.7,-30.92) -- (38.4,-30.2);
\draw (44.1,-33.2) node [below] {$Rewards$};
\draw [black] (52.493,-30.585) arc (-10.21265:-169.78735:8.528);
\fill [black] (35.71,-30.58) -- (35.36,-31.46) -- (36.34,-31.28);
\draw (44.1,-38.1) node [below] {$Observations$};
\end{tikzpicture}
\end{center}
Solomonoff Induction and reinforcement learning can be used together to construct the optimal agent AIXI \citep{hutter2000theory}. The agent AIXI, described in \citep{Hutter:04uaibook}, is optimal in the sense that there does not exist another agent which performs better than AIXI in all possible environments.

Given the history $o_1r_1\ldots o_{k-1} r_{k-1}$ where $o_i$ is the observation at the $i$th time step, and $r_i$ is the reward at the $i$th time step, AIXI takes action $a_k$ defined by \begin{equation}
	a_k := \arg \max_{a_k} \sum_{o_k r_k}\ldots \max_{a_m} \sum_{o_m r_m} [r_k+\ldots r_m] \sum_{q:U(q,a_1\ldots a_m ) = o_1r_1\ldots o_m r_m} 2^{-\ell(q)}
\end{equation}
Here $q$ is a binary string, $U$ is a universal Turing machine, and $\ell$ is the length function. Much like Solomonoff Induction, AIXI is incomputable, as there is no way to determine if a given input to the universal Turing machine $U$ will halt. However, it is still possible to limit-compute and approximate AIXI, such as AIXItl: AIXI with a time bound, $t$, on the running time of $U$ and a length bound, $l$, on the programs $q$ \citep{Hutter:04uaibook}.

\section{Quantum Algorithms for Universal Prediction}
Although the exact Kolmogorov complexity and Solomonoff prior of a given $x\in \{0,1\}^*$ are incomputable, approximations are not. Finding efficient approximations of either would provide a powerful form of compression and prediction. One interesting approximation of the Solomonoff prior is the Speed prior \citep{schmidhuber2002speed}. The speed prior is very much like the Solomonoff prior, however it essentially accounts for running time of each program on a universal Turing machine, as opposed to running each program until it halts, as in the Solomonoff prior. 

The speed prior is not the only approximation. There is also universal search, however both of these still take time scaling exponentially in the length of the largest program. In this section, we give a quantum algorithm which takes advantage of the quantum counting algorithm to compute a fixed length speed prior, with a quadratic speedup compared to the classical method. We will also give a reasoning to why we suspect that the speed prior is $\# \textbf{P}$-complete, and as discussed in the previous section, that implies it is unlikely that there exists a quantum algorithm which has an exponential speedup over the classical method. Then we define a quasi-conditional Speed probability which can be efficiently computed with a quantum algorithm related to the conditional Speed prior algorithm. Lastly we present AIXIq a quantum computing approximation of AIXI and give some results about the potential speedup it possesses over classical methods.

\subsection{Speed Prior}
Although Solomonoff's prior has many desirable theoretical properties, when performing inference we often want a prior that is more easily computable, but still has strong theoretical properties. To this end we will use the Speed Prior \citep{schmidhuber2002speed}. The Speed prior is much like the Solomonoff prior, however it takes into account the amount of time each program has been running. This immediately gives a more practical prior than Solomonoff's prior. Before we define the speed prior itself we first need to define what we mean by amount of time each program has been running. For that we use the FAST algorithm and notation which uses it.
\begin{definition}[\cite{schmidhuber2002speed}]
	$FAST$ algorithm: For $i=1,2\ldots$ perform PHASE $i$. PHASE $i$: execute $\lfloor 2^{i-\ell(p)}\rfloor$ instructions from all program prefixes $p$ satisfying $\ell(p)\leq i$, and subsequently write the outputs on adjacent sections of the output tape, separated by blanks.
\end{definition}

\begin{definition}[\cite{schmidhuber2002speed}]
	Given program prefix $p$, write $p\to x$ if our TM reads $p$ and computes an output starting with $x\in\{ 0,1\}^*$, while no fixed prefix of $p$ consisting of less than $\ell(p)$ bits outputs $x$. Write $p\to_i x$ if $p\to x$ in PHASE $i$ of $FAST$.
\end{definition}
Much like Solomonoff's prior the complete speed prior is defined over all programs. Additionally it is defined over all program times.

\begin{definition}[\cite{schmidhuber2002speed}]
	The complete speed prior $S$ on $x\in\mathbb{B}^*$ is
	\[ S(x) = \sum_{i=1}^{\infty}2^{-i}S_i (x); \ \text{ where } \ S_i(\lambda )=1;\ S_i(x) = \sum_{p\to_i x}2^{-\ell(p)} \text{ for } x\succ \lambda  \]
\end{definition}

Having to execute every possible program for every amount of time is somewhat impractical, this leads us to our finitely computable fixed length speed prior which we will be using.

\begin{definition}
	The speed prior (which in this case is finitely computable and fixed length) $S$ on $x\in\mathbb{B}^*$ is
	\[ S(x) = \sum_{i=1}^{n^2}2^{-i}S_i (x); \ \text{ where } \ S_i(\lambda )=1;\ S_i(x) = \sum_{p\in\mathbb{B}^n:\ p\to_i x}2^{-\ell(p)} \text{ for } x\succ \lambda  \]
	Where $n=\ell(x)$
\end{definition} 

Here $n^2$ is chosen so that the universal Turing machine component of $S$ takes polynomial time (specifically a small polynomial). Unfortunately because of this restriction we have no way to guarantee the convergence to the speed prior, however we are not able to guarantee convergence if we run universal Turing machines for any finite time.

When performing inference a conditional probability is required, for this we defined the conditional speed prior.
\begin{definition}
	The conditional speed prior of $y\in\mathbb{B}^*$ given $x\in\mathbb{B}^*$ is defined as
	\begin{equation}
		S(y|x) = \frac{S(xy)}{S(x)}
	\end{equation}
\end{definition}

\subsection{Complexity of $S$}
$S$ is the problem of the fixed-length speed prior.

We will show that if there did exist a quantum algorithm (in $\mathbf{BQP}$) which could approximate $S$ exponentially faster then the classical method that would imply that $\mathbf{BQP} = \# \mathbf{P}$. We have been unable to prove that $S$ is $\# \mathbf{P}$-hard, however under this assumption this will immediately show it is in a superset of $\mathbf{BQP}$, since $\mathbf{BQP}\subseteq \mathbf{NP}\subseteq \#\mathbf{P}$.

\begin{lemma}
	$S\in \#\mathbf{P}$
\end{lemma}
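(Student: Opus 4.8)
The plan is to exhibit $S$ as (the accepting-path count of a polynomial-time nondeterministic Turing machine) after a trivial rescaling. First I would unfold the definition. Since every program summed over lies in $\mathbb{B}^n$ with $n=\ell(x)$, we have $2^{-\ell(p)}=2^{-n}$ for each such $p$, so writing $c_i(x):=\bigl|\{p\in\mathbb{B}^n : p\to_i x\}\bigr|$ we get
\[ S(x) \;=\; \sum_{i=1}^{n^2} 2^{-i}\,2^{-n}\, c_i(x). \]
Multiplying through by the fixed power of two $2^{n^2+n}$ clears all denominators:
\[ 2^{n^2+n}\,S(x) \;=\; \sum_{i=1}^{n^2} 2^{\,n^2-i}\, c_i(x), \]
and since $1\le i\le n^2$ every summand $2^{\,n^2-i}c_i(x)$ is a non-negative integer, the whole sum being at most $2^{n^2+n}$, hence representable in $\mathrm{poly}(n)$ bits. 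So it suffices to show the integer-valued function $x\mapsto 2^{n^2+n}S(x)$ is in $\#\mathbf{P}$; the stated ``$S\in\#\mathbf{P}$'' then holds in the usual sense, $S$ being recovered from a $\#\mathbf{P}$ function by dividing by $2^{n^2+n}$ (exactly as $C(n,f)$ is treated earlier).

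Next I would describe the witnessing machine $N$. On input $x$ with $n=\ell(x)$, the machine $N$ nondeterministically guesses a triple $(i,p,w)$ with $i\in\{1,\dots,n^2\}$, $p\in\mathbb{B}^n$, and a padding word $w\in\mathbb{B}^{\,n^2-i}$, costing $O(n^2)$ nondeterministic bits; then $N$ deterministically decides whether $p\to_i x$ (simulating the universal monotone machine on $p$ through PHASE $i$ of $FAST$, checking that the output produced starts with $x$ and that the prefix condition from the definition of $p\to x$ holds), and accepts iff $p\to_i x$, with $w$ playing no role in acceptance. For each fixed $i$ the number of accepting branches is $c_i(x)\cdot 2^{\,n^2-i}$, so $N$ has exactly $\sum_{i=1}^{n^2} 2^{\,n^2-i} c_i(x)=2^{n^2+n}S(x)$ accepting computations, as required (the degenerate case $x=\lambda$ is immediate).

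It remains to verify that $N$ runs in polynomial time, and this is the step I expect to carry the real content. Guessing $(i,p,w)$ and running the prefix test over the fewer than $n$ proper prefixes of $p$ are clearly polynomial, so everything rests on the claim that simulating PHASE $i$ is $\mathrm{poly}(n)$-time; this is precisely why the phase index is truncated at $n^2$ (the remark that ``the universal Turing machine component of $S$ takes polynomial time''), and I would make it rigorous by reading off, from the $FAST$ definition specialized to length-$n$ programs, the exact instruction budget used on a program in PHASE $i\le n^2$ and checking it is bounded by a fixed polynomial in $n$, and that ``output starts with $x$'' together with the prefix condition are decidable within that budget. The remaining points are pure bookkeeping: that the guess of $i$ branches into exactly $n^2$ possibilities (introducing no spurious multiplicative factor) and that $w$ contributes precisely the factor $2^{\,n^2-i}$. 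Granting the polynomial step bound, $N$ is a polynomial-time nondeterministic machine whose accepting-path count is $2^{n^2+n}S(x)$, so $S\in\#\mathbf{P}$.
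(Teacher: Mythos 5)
Your proposal is correct and rests on the same core observation as the paper's own proof --- that deciding $p\to_i x$ for $p\in\mathbb{B}^n$ and $i\le n^2$ is a polynomial-time check, so that computing $S(x)$ amounts to a counting problem --- but you carry it out in substantially more detail. The paper's argument only says that one must count the programs $p$ with $p\to_i x$, that each test is polynomial-time, and that $\#\mathbf{P}$ is the counting class for $\mathbf{NP}$; it never confronts the fact that $S(x)$ is a dyadic rational given by the weighted sum $\sum_{i=1}^{n^2}2^{-i-n}c_i(x)$ rather than a single count. Your rescaling by $2^{n^2+n}$, together with the padding word $w\in\mathbb{B}^{n^2-i}$ that realises each weight $2^{n^2-i}$ as a multiplicity of accepting branches, is exactly the bookkeeping needed to make the assertion ``$S\in\#\mathbf{P}$'' literally well-defined, so on that point your write-up is strictly more rigorous than the paper's. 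One caveat: both you and the paper lean on the claim that simulating $p$ through PHASE $i$ is $\mathrm{poly}(n)$-time, and the verification you promise --- reading the instruction budget off the $FAST$ definition --- would not go through literally, since that budget is $\lfloor 2^{i-\ell(p)}\rfloor$, which reaches $2^{n^2-n}$ for $i$ near $n^2$. You would instead have to adopt the convention the paper implicitly uses in its runtime analysis of the classical algorithm, where each test $p\to_i x$ costs $O(i)=O(n^2)$ steps (equivalently, cap the per-program simulation time polynomially); state that convention explicitly rather than deriving it from $FAST$, and your argument is complete to the same standard the paper assumes.
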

\begin{proof}
	To compute the $S(x)$ we need to count the number of programs $p$ of a length $n$ such that $p\to_i x$ for $i$ from 1 to $n^2$. For each program $p$, determining if $p\to_i x$ for $i$ from 1 to $n^2$ can be done in polynomial time. The time taken to determine this for all $p\in\{0,1 \}^n$ grows exponentially in $n$, and is thus an $\mathbf{NP}$ problem. The class $\# \mathbf{P}$ is the counting class for $\mathbf{NP}$ problems, in particular it can count the number of programs satisfying this property, therefore
	we have that $S\in\# \mathbf{P}$.
\end{proof}

We suspect that given a $\#\mathbf{P}$-complete problem one can simulate the problem with the speed prior without a polynomial slowdown, however we have not yet been able to prove this claim.

\begin{conjecture}
		$ S$ is $\#\mathbf{P}$-hard  
\end{conjecture}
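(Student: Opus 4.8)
\medskip
\noindent\textbf{Proof proposal.} The natural plan is to give a polynomial-time (Cook) reduction from $\#\mathrm{SAT}$, which is $\#\mathbf{P}$-complete by Cook--Levin, to the problem of computing $S$. The guiding idea is that a ``program of length $n$'' appearing in the definition of $S_i$ can be made to carry a fixed description of a Boolean formula $\phi$ followed by a candidate assignment, and that the predicate $p\to_i x$ can be engineered to mean ``$p$ encodes $\phi$ together with a \emph{satisfying} assignment, and the verification completes in FAST phase $i$''. If one can arrange matters so that the length-$n$ programs contributing to $S(x)$ are in a polynomial-time computable bijection with the satisfying assignments of $\phi$, all of them landing in a single phase, then the value $S(x)$ — a dyadic rational — reveals $\#\mathrm{SAT}(\phi)$ after multiplication by a known power of two, and $\#\mathbf{P}$-hardness of $S$ follows.

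Concretely I would proceed as follows. \emph{Step 1.} By Cook--Levin, reduce the given $\#\mathbf{P}$ instance to counting satisfying assignments of a circuit $\phi$ on $m$ variables. \emph{Step 2.} Build a Turing machine $M_\phi$ that, on input $w\in\mathbb{B}^m$, reads all $m$ bits of $w$, evaluates $\phi(w)$, pads its run with a fixed clock so that it always halts in exactly $T=\mathrm{poly}(m)$ steps, and outputs a fixed marker string $x_0$ \emph{only if} $\phi(w)=1$ (otherwise halting with a different output, or with none). Reading all of $w$ before producing any output guarantees that no proper prefix of a genuine program can already output $x_0$. \emph{Step 3.} Using universality of the fixed monotone machine $U$ underlying $S$, let $q_\phi$ be a self-delimiting description of $M_\phi$, put $n:=\ell(q_\phi)+m$, and note that the minimality clause in the definition of ``$p\to x$'' \emph{automatically discards} any length-$n$ program that merely extends a shorter program already outputting $x_0$, so only ``genuinely $n$-bit'' descriptions of $x_0$ contribute. \emph{Step 4.} Choose the clock so that every accepting run $U(q_\phi w)$ first outputs $x_0$ in one and the same phase $i^{\star}=n+O(\log T)$; since $T$ is polynomial in $n$ (pad $\phi$ with dummy variables so that $n$ is large enough) we get $i^{\star}\le n^{2}$, so only phase $i^{\star}$ is relevant. \emph{Step 5.} \emph{If} the length-$n$ programs $p$ with $p\to_i x_0$ for some $i\le n^2$ are exactly $\{\,q_\phi w : \phi(w)=1\,\}$, then $S_i(x_0)=0$ for $i\neq i^{\star}$ and $S_{i^{\star}}(x_0)=2^{-n}\,\#\mathrm{SAT}(\phi)$, whence
\[
S(x_0)=2^{-i^{\star}}S_{i^{\star}}(x_0)=2^{-n-i^{\star}}\,\#\mathrm{SAT}(\phi),
\]
so one oracle call to $S$ on $x_0$ followed by multiplication by $2^{\,n+i^{\star}}$ returns $\#\mathrm{SAT}(\phi)$ exactly, and $\#\mathbf{P}$-hardness of $S$ follows by transitivity.

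The hard part — and exactly what keeps this a conjecture — is the proviso in Step 5: ruling out \emph{spurious} length-$n$ programs that output a string beginning with $x_0$ within $n^2$ phases without being one of the intended $q_\phi w$. Because $S$ is defined relative to a \emph{fixed} universal monotone machine we cannot simply engineer $U$ so that the intended programs are the only minimal ones, and there is a genuine tension: all accepting assignments produce the \emph{same} marker $x_0$, so $x_0$ is compressible to roughly $\ell(q_\phi)\approx n-m$ bits, which invites many alternative length-$n$ descriptions — only partly neutralized by the minimality clause, which merely kills those that are proper extensions of a shorter program for $x_0$. Making the needed rigidity statement rigorous — ``every sufficiently fast length-$n$ program for $x_0$ must, bit by bit, have read a valid $q_\phi w$'' — seems to require either a delicate incompressibility argument for a very carefully chosen target, or replacing the single marker by a $w$-dependent output such as $\langle\phi\rangle w$ and passing to a Turing reduction, at which point one must avoid needing exponentially many queries or circularly ``already knowing'' the witnesses. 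A further route worth exploring is to observe that the FAST weighting makes $S$ a $Kt$-flavoured time-bounded algorithmic-probability sum, morally $\sum_p 2^{-\ell(p)-\Theta(\log\mathrm{time}(p))}$, and to transport known $\#\mathbf{P}$-hardness results for such quantities to $S$; bridging that gap is, in my estimation, the real obstacle.
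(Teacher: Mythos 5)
The statement you are trying to prove is not proved in the paper at all: the authors explicitly write that they ``have not yet been able to prove this claim'' and leave it as a conjecture, so there is no paper proof to compare against. Your proposal, by your own admission, is not a proof either: Step~5 is stated as a proviso (``\emph{if} the length-$n$ programs with $p\to_i x_0$ are exactly $\{q_\phi w:\phi(w)=1\}$''), and that proviso is precisely the missing content. Over a \emph{fixed} universal monotone machine $U$ you have no control over which other length-$n$ minimal programs happen to output a string beginning with $x_0$ within $n^2$ FAST phases; the minimality clause only excludes programs that properly extend a shorter program for $x_0$, and since $x_0$ has description length roughly $\ell(q_\phi)\approx n-m$, there can be spurious length-$n$ contributors whose total weight is of the same order $2^{-n}\cdot 2^{\Theta(m)}$ as the quantity you want to read off, so the oracle value $S(x_0)$ need not determine $\#\mathrm{SAT}(\phi)$. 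No incompressibility or rigidity argument is supplied, so the gap is genuine and the statement remains, as in the paper, a conjecture.

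Two further mismatches with the paper's definitions are worth flagging even within the sketch. First, the paper's finitely computable fixed-length speed prior sums over programs $p\in\mathbb{B}^n$ with $n=\ell(x)$: the program length is tied to the length of the \emph{queried string}, so you cannot both pick a short ``fixed marker string $x_0$'' and have length-$(\ell(q_\phi)+m)$ programs contribute to $S(x_0)$; you would have to make the queried string itself have length $n$ (e.g.\ a padded encoding of $\phi$), which only sharpens the compressibility tension you describe. Second, once a program outputs $x_0$ in phase $i^{\star}$ it also does so in every later phase $i>i^{\star}$ (later phases run each prefix for more steps), so $S_i(x_0)\neq 0$ for all $i^{\star}\le i\le n^2$ and your closed form $S(x_0)=2^{-i^{\star}}S_{i^{\star}}(x_0)$ must be replaced by a geometric sum; this is a repairable slip, unlike the Step~5 obstruction, which is the real open problem.
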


Together this would give us the $\#\mathbf{P}$-completeness of $S$.

\begin{conjecture}
	$S$ is $\#\mathbf{P}$-complete
\end{conjecture}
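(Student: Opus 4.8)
Since the preceding lemma already gives $S\in\#\mathbf{P}$, it suffices to establish the $\#\mathbf{P}$-hardness conjectured above; the plan is to reduce a standard $\#\mathbf{P}$-complete problem, namely $\#\mathrm{SAT}$ (counting the satisfying assignments of a Boolean formula, $\#\mathbf{P}$-complete by \cite{valiant1979complexity}), to the exact evaluation of $S$. First I would fix, once and for all, a short ``evaluator'' program $e$ for the reference universal monotone Turing machine underlying $\mathit{FAST}$: on input $\langle\varphi\rangle a$, where $\varphi$ is a formula on $k$ variables and $a\in\{0,1\}^k$, the evaluator reads $\varphi$ and $a$, computes $\varphi(a)$ in time $T=\mathrm{poly}(|\varphi|)$, and, if $\varphi(a)=1$, writes a fixed target string on the output tape, while otherwise it diverges (or outputs a string beginning with the opposite first bit). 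Importantly, $e$ is designed so that on an incomplete input — $e$ alone, or $e$ followed by a proper prefix of $\langle\varphi\rangle a$ — it produces no output at all. Given $\varphi$, the reduction outputs the string $x_\varphi$ defined to be this target string, chosen to have length exactly $n:=\ell(e)+|\langle\varphi\rangle|+k$. Then every length-$n$ program of the form $p=e\,\langle\varphi\rangle\,a$ is a legal program prefix for the fixed-length speed prior on inputs of length $n=\ell(x_\varphi)$, and by construction $p\to x_\varphi$ holds precisely when $a$ satisfies $\varphi$: the shared prefix $e\,\langle\varphi\rangle$ (and every shorter prefix) emits nothing, so the minimality clause in the definition of $p\to x$ is met for each such $p$. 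All of this is computable from $\varphi$ in polynomial time, and $n$ can be padded upward if needed so that $T\le 2^{n^2-n}$, guaranteeing that the intended computations reach their output within the $n^2$ phases of $S$.

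\noindent Next I would read the count off from $S(x_\varphi)$. Because the evaluator's running time depends only on $|\varphi|$ and $k$ and not on the assignment $a$, all of the $\#\mathrm{SAT}(\varphi)$ intended programs first output $x_\varphi$ in the same phase $i_0 = n+O(\log|\varphi|)\le n^2$, and hence each of them satisfies $p\to_i x_\varphi$ — and contributes its weight $2^{-\ell(p)}=2^{-n}$ to $S_i(x_\varphi)$ — for every $i$ with $i_0\le i\le n^2$. Summing the geometric factor $\sum_{i=i_0}^{n^2}2^{-i}=2^{-i_0+1}-2^{-n^2}$ gives
\[
 S(x_\varphi)\;=\;2^{-n}\bigl(2^{-i_0+1}-2^{-n^2}\bigr)\,\#\mathrm{SAT}(\varphi)\;+\;R,
\]
where $R$ collects the contributions of all \emph{other} length-$n$ programs of the reference machine that happen to output $x_\varphi$ in the $\mathit{FAST}$ sense (prefix-minimality included). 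If one can show $R=0$ — or, more realistically, compute $R$ in polynomial time, or arrange it to cancel — then $\#\mathrm{SAT}(\varphi)$ is recovered exactly from one (or constantly many) values of $S$, yielding a polynomial-time reduction and hence $\#\mathbf{P}$-hardness of $S$; combined with the lemma this gives the $\#\mathbf{P}$-completeness of exact $S$.

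\noindent The main obstacle is precisely the control of the stray term $R$. The definition of $S$ fixes a particular universal monotone machine, and by the invariance theorem replacing it perturbs $S$ only by a multiplicative constant; that same slack means one cannot in general rule out further length-$n$ programs — possibly ``shorter'' in the Kolmogorov sense, possibly structurally unrelated to $e$ — that emit $x_\varphi$, and this additive slack is exactly what obstructs an \emph{exact} count. Several routes seem worth trying: (i) choose $x_\varphi$ to be essentially incompressible relative to the reference machine except through the intended route, e.g.\ take $x_\varphi$ to be a self-delimiting encoding of $e\,\langle\varphi\rangle$ itself (the evaluator then simply copies its own program and input to the output when $\varphi(a)=1$), so that no short program and no alternative length-$n$ program can plausibly reproduce it; (ii) settle for a Turing (rather than many-one) reduction, querying $S$ at $x_\varphi$ and at a ``control'' string obtained by replacing the satisfiability check with an always-false one, and arguing that the stray contributions cancel in the difference; (iii) prove hardness only for \emph{exact} $S$ — which is all that $\#\mathbf{P}$-completeness requires — and not for the crude approximations of $S$ that the quantum algorithms in this section suggest are genuinely easier. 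The remaining points are routine: the phase accounting (a length-$n$ program executes no instruction before phase $n$, and $2^{n^2-n}$ steps comfortably exceed the polynomial running time of the evaluator), the legality of building an $e$ that stays silent on incomplete input, and the fact that the whole construction of $x_\varphi$, including $\langle\varphi\rangle$ and the padding, is polynomial in $|\varphi|$.
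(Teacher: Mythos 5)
You should first be aware that the paper itself does not prove this statement: it is explicitly left as a conjecture, with the authors stating that they have been unable to establish the $\#\mathbf{P}$-hardness of $S$ (only the membership $S\in\#\mathbf{P}$ is proved). So there is no paper proof to match your argument against; the question is whether your sketch closes the open hardness direction, and it does not.

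The genuine gap is the one you yourself flag: the stray term $R$. Your reduction hinges on reading $\#\mathrm{SAT}(\varphi)$ off from $S(x_\varphi)$ exactly, but the fixed-length speed prior is defined over \emph{all} programs $p\in\mathbb{B}^n$ of the reference universal monotone machine, and you have no control over which of the remaining $2^n-2^k$ programs (those not of the form $e\,\langle\varphi\rangle\,a$) happen to satisfy $p\to_i x_\varphi$ within $n^2$ phases. None of your proposed remedies is carried out: (i) choosing $x_\varphi$ ``incompressible except through the intended route'' is not something you can certify for a fixed universal machine — deciding or bounding which other length-$n$ programs emit $x_\varphi$ is itself as hard as the problem you are reducing to, and incompressibility arguments only control \emph{short} programs, not the exponentially many length-$n$ ones each contributing $2^{-n}$; (ii) the ``control string'' subtraction does not obviously cancel $R$, because the stray programs that output $x_\varphi$ need not be in any computable correspondence with stray programs outputting the control string; (iii) restricting attention to exact $S$ does not remove the issue, since $R$ contaminates the exact value, not just approximations. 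A secondary point: the prefix-minimality clause and the requirement that $e$ be ``silent on incomplete input'' are properties you can build into a machine you design, but $S$ is defined relative to a fixed reference machine, so you only get to choose the prefix $e$, not the machine's global behaviour; this is the same invariance-theorem slack you mention, and it is exactly where an additive, uncontrolled error enters an argument that needs an exact count. Until $R$ is shown to be zero, polynomial-time computable, or cancellable, the reduction does not go through, and the statement remains (as in the paper) a conjecture.
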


As a corollary of this, we have that if there did exist a quantum algorithm which could solve a $S$ in polynomial time, then the polynomial hierarchy would collapse, meaning having any number of $\mathbf{NP}$-complete oracles is equivalent to having no $\mathbf{NP}$-complete oracles, as mentioned in \cite{aaronson2011computational}.

\begin{corollary}
	Under the assumption that $S$ is $\#\mathbf{P}$-complete, if there exists a quantum algorithm which can compute $S$ in polynomial time then the polynomial hierarchy collapses.
\end{corollary}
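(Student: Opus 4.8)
The plan is to chain the assumed $\#\mathbf{P}$-completeness of $S$ with the corollary of Toda's Theorem proved above, $\mathbf{PH}\subseteq\mathbf{P}^{\#\mathbf{P}}$. Concretely, the target is to show that a polynomial-time quantum algorithm for $S$ forces $\mathbf{PH}\subseteq\mathbf{BQP}$, and in fact $\mathbf{PH}\subseteq\mathbf{BQP}=\mathbf{PP}=\mathbf{P}^{\#\mathbf{P}}$, which is exactly the collapse described informally just before the corollary --- that no number of $\mathbf{NP}$-complete oracles buys more than $\mathbf{BQP}$.

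First I would pin down what ``compute $S$ in polynomial time'' means: since every value $S(x)$ is a dyadic rational with a number of bits polynomial in $\ell(x)$, a bounded-error quantum machine that outputs each bit can be amplified by majority vote to produce $S(x)$ exactly with exponentially small failure probability, so the hypothesis amounts to $S\in\mathbf{FBQP}$ (the function version of $\mathbf{BQP}$). If instead only a multiplicative approximation were available, I would fall back on an approximate version of the hardness of $S$, in the spirit of the fact noted above that even approximating the permanent is $\#\mathbf{P}$-hard, and argue analogously. Next, by the assumed $\#\mathbf{P}$-completeness of $S$, every function in $\#\mathbf{P}$ reduces to $S$ in polynomial time, so every $\#\mathbf{P}$-oracle query can be answered by a single $\mathbf{FBQP}$ computation; hence $\mathbf{P}^{\#\mathbf{P}}\subseteq\mathbf{P}^{\mathbf{BQP}}$. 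I would then invoke the standard fact (not in the excerpt) that $\mathbf{BQP}$ is low for itself, $\mathbf{BQP}^{\mathbf{BQP}}=\mathbf{BQP}$, which allows the nested quantum subroutines to be made reliable enough and gives $\mathbf{P}^{\mathbf{BQP}}\subseteq\mathbf{BQP}$, hence $\mathbf{P}^{\#\mathbf{P}}\subseteq\mathbf{BQP}$.

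Applying the Toda corollary $\mathbf{PH}\subseteq\mathbf{P}^{\#\mathbf{P}}$ then yields $\mathbf{PH}\subseteq\mathbf{P}^{\#\mathbf{P}}\subseteq\mathbf{BQP}$. To present this as a genuine collapse I would sandwich using the containments already in the excerpt: $\mathbf{BQP}\subseteq\mathbf{PP}$ (from $\mathbf{PostBQP}=\mathbf{PP}$) and $\mathbf{PP}\subseteq\mathbf{P}^{\mathbf{PP}}=\mathbf{P}^{\#\mathbf{P}}$, so $\mathbf{P}^{\#\mathbf{P}}\subseteq\mathbf{BQP}\subseteq\mathbf{PP}\subseteq\mathbf{P}^{\#\mathbf{P}}$ and all three coincide; iterating, the whole counting hierarchy collapses to $\mathbf{PP}$, and $\mathbf{PH}$ lies inside it. Since $\mathbf{NP}\subseteq\mathbf{PH}$, this already gives $\mathbf{NP}\subseteq\mathbf{BQP}$, precisely the ``quantum supremacy'' conclusion flagged as implausible in the discussion of approximate counting.

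The main obstacle is obtaining ``$\mathbf{PH}$ collapses'' in the strict sense ($\Sigma^p_k=\Sigma^p_{k+1}$ for some finite $k$). In the classical boson-sampling argument of \cite{aaronson2011computational} a poly-time sampler plus Stockmeyer counting places the hard quantity inside $\mathbf{BPP}^{\mathbf{NP}}\subseteq\Sigma^p_3$, a \emph{fixed} level of the hierarchy, which is what yields the third-level collapse; here we only land inside $\mathbf{BQP}$, and $\mathbf{BQP}\subseteq\mathbf{PH}$ is not known (and is false relative to an oracle), so the finite-level statement does not come for free. I would therefore state the result at the level actually supported --- $\mathbf{PH}\subseteq\mathbf{BQP}$ and the collapse of the counting hierarchy to $\mathbf{PP}$ --- and remark that a literal finite-level collapse would additionally require $\mathbf{BQP}$ to sit inside some fixed $\Sigma^p_k$. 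Everything else (amplification, the polynomial-time reductions, composing oracles) is routine.
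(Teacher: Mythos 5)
Your chain of reasoning is exactly the one the paper intends: combine the assumed $\#\mathbf{P}$-completeness of $S$ with Toda's corollary $\mathbf{PH}\subseteq\mathbf{P}^{\#\mathbf{P}}$, and use the hypothesised polynomial-time quantum algorithm to absorb the $\#\mathbf{P}$ oracle, so in substance you have reproduced the paper's (implicit) argument --- the paper in fact states this corollary with no proof beyond the surrounding remark pointing to \cite{aaronson2011computational}. Where you go beyond the paper is in making the oracle-absorption step explicit via $\mathbf{BQP}^{\mathbf{BQP}}=\mathbf{BQP}$, and, more importantly, in flagging that what actually follows is $\mathbf{PH}\subseteq\mathbf{P}^{\#\mathbf{P}}\subseteq\mathbf{BQP}$ rather than a literal finite-level collapse $\Sigma^p_k=\Sigma^p_{k+1}$: in the boson-sampling argument the simulator is classical, so Stockmeyer counting lands the hard quantity in $\mathbf{BPP}^{\mathbf{NP}}\subseteq\Sigma^p_3$, whereas here the simulator is quantum and $\mathbf{BQP}\subseteq\mathbf{PH}$ is not known. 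That caveat identifies a genuine imprecision in the corollary as stated (the paper's gloss, that any number of $\mathbf{NP}$ oracles becomes no more powerful than the base class, is really the statement $\mathbf{PH}\subseteq\mathbf{BQP}$), so your more guarded formulation is the defensible version of the claim; your side remark that the counting hierarchy collapses to $\mathbf{PP}$ would additionally need the lowness of $\mathbf{BQP}$ for $\mathbf{PP}$ to iterate cleanly, but this is peripheral to the corollary.
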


\subsection{Classical Universal Prediction}
\label{classspdsec}
First we will give the classical algorithm to compute the speed prior.

To compute the speed prior of $x$ we can use the following Levin-search-style algorithm.

\begin{center}
\begin{algorithm}[H]
\SetAlgoLined
\label{classpdalg}
Given $x$\;
  $S:= 0$\;
  $n = \ell (x)$\;
   $i:=1$\;
  \eIf{$x=\lambda$}{$S:=1-2^{-n^2}$}{
 \While{$i\leq n^2$}{$num_i:=0$\; \For{$p \in \{0,1\}^n$}{\If{$p\to_i x$} {$num_i :=num_i + 1$}}$S := S+2^{-(i+n)}\cdot num_i$\;
 $i:=i+1$\;
 }}
 \KwResult{$S$ }
 \caption{Classical Computable fixed-length speed prior algorithm}
\end{algorithm}
\end{center}

The first loop requires $n^2$ iterations, each going for $1\leq i \leq n^2$ time, and the second loop requires $2^n$ iterations since the size of $\{0,1\}^n$ is $2^n$. Therefore the run time of this algorithm is $O(n^42^{n})$.

To compute $S(x|y)$, one needs only to compute $S(y)$ and $S(yx)$, then perform division of the two. Using the algorithm above, approximating $S(x|y)$ will require $O(n^42^{n})$ time, since we are just using it twice.

\subsection{Universal Prediction with Quantum Computing}

\subsubsection{Quadratic speedup for Speed Prior}
If the probability being measured in a quantum computation is exponentially small then it requires at least an exponential number of repeats to find the probability with sufficiently small error. Therefore we cannot, at least naively, use a quantum version of the above algorithm to gain an exponential speedup. 

It turns out, however, that for a lot of problems we can gain a quadratic speedup. This comes from the use of the quantum counting algorithm \citep{brassard1998quantum}; as mentioned in section 4, the quantum counting algorithm is a combination of Grover search and phase estimation which can find $S(x)=|\{ p\in \{0,1\}^{n}: p\to_{n^2} x\}|$ in $O(n^2 2^{{n}/2})$ time, opposed to the classical algorithm which takes $O(n^22^{n})$ time. Let $QCA(x)$ denote the result of the quantum counting algorithm being run on $n=\ell(x)$ qubits counting the set $S(x)$. The quantum counting algorithm is used in place of lines 9-14 of the classical \ref{classpdalg} used to compute the finitely computable speed prior.

\begin{center} 
\begin{algorithm}[H]
\SetAlgoLined
Given $x$\;
  $S:= 0$\;
  $n = \ell(x)$\;
  $i:=1$\;
  \eIf{$x=\lambda$}{$S:=1-2^{-n^2}$}{
 \While{$i\leq n^2$}{$num_i:=QCA(x)$\; $S := S+2^{-(i+n)}\cdot num_i$\;
 $i:=i+1$\;
 }} 
 \KwResult{$S$ }
 \caption{Quantum Counting speed prior algorithm}
\end{algorithm}	
\end{center}

Thus the quantum computation of the finitely computable speed prior
has running time $O(n^42^{n/2})$. This is shown in the calculation below:
\begin{align*}
	\sum_{i=1}^{n^2}  i2^{n/2}  &= O(n^4 2^{n/2}) 
\end{align*}
\begin{theorem}[Quantum Correctness]
	Given the an $x\in\{ 0,1\}^*$, and the Quantum Counting algorithm function $QCA:\{0,1\}^*   \to \mathbb{N}$ such that $\sin^2\left( \frac{\theta}{2}\right)=\frac{QCA(x)}{2^{n+1}}$, to find $\theta$ with $m$ bits of accuracy with probability $1-\epsilon$ the 
	quantum finitely computable fixed length speed prior algorithm requires $O(m+n+\log(2+\frac{n^2}{2}))$ registers and $O(n^42^{n/2})$ time.
\end{theorem}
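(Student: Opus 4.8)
The plan is to derive everything from Theorem~\ref{quancountcorr} applied once per iteration of the outer loop, and then combine the $n^2$ iterations by a union bound (for the success probability and the register count) and by summation (for the running time). Unfolding the quantum counting speed prior algorithm: after disposing of the trivial case $x=\lambda$, it runs a loop $i=1,\dots,n^2$, and in iteration $i$ it makes one call $num_i:=QCA(x)$ which counts $|\{p\in\{0,1\}^n : p\to_i x\}|$ and then performs the purely classical update $S:=S+2^{-(i+n)}num_i$. Each call $QCA(x)$ is exactly the Quantum Counting Algorithm of Theorem~\ref{quancountcorr} run on the oracle $f_i$ with $f_i(p)=1$ precisely when $p\to_i x$, over the set $\{0,1\}^n$, so there $N=2^n$; its counting register has $n+1$ qubits and its phase register has $m+\lceil\log(2+\tfrac{1}{2\epsilon'})\rceil$ qubits, where $\epsilon'$ is the per-call failure probability we are free to choose.

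Next I would fix the error budget. To guarantee that all $n^2$ calls simultaneously return $\theta_i$, and hence $num_i$, to $m$ bits of accuracy with overall probability at least $1-\epsilon$, it suffices to run each call with $\epsilon'=\epsilon/n^2$; the union bound then replaces $\tfrac{1}{2\epsilon}$ by $\tfrac{n^2}{2\epsilon}$ in the phase-register size, so by Theorem~\ref{quancountcorr} the algorithm uses $O\!\left(m+n+\log\!\left(2+\tfrac{n^2}{2\epsilon}\right)\right)$ registers, which for fixed constant $\epsilon$ is the claimed $O\!\left(m+n+\log(2+\tfrac{n^2}{2})\right)$; the three terms come respectively from the requested accuracy, the $(n+1)$-qubit counting register, and the $n^2$-fold union bound. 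Correctness then follows because, conditioned on all calls succeeding, each $num_i$ is pinned down to within the accuracy implied by $m$ bits of $\theta_i$ via $\sin^2(\theta_i/2)=num_i/2^{n+1}$, and $S=\sum_{i=1}^{n^2}2^{-(i+n)}num_i$ is a fixed classically-evaluated linear combination of the $num_i$.

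For the running time I would argue iteration by iteration. In iteration $i$, Theorem~\ref{quancountcorr} (in the regime $m=O(n)$, which is the intended one) gives $O(\sqrt{N})=O(2^{n/2})$ Grover iterations; each Grover iteration calls $f_i$ once, and evaluating $f_i(p)$ amounts to running PHASE~$i$ of the $FAST$ procedure on input $p$, for which I would use the polynomial bound $O(i)$ in keeping with Section~\ref{classspdsec}. The inverse Fourier transform on the $O(m+n+\log n)$-qubit phase register and the classical arithmetic updating $S$ contribute only $\mathrm{poly}(n,m)$ extra, which is dominated. Hence iteration $i$ costs $O(i\cdot 2^{n/2})$, and summing gives $\sum_{i=1}^{n^2}O(i\,2^{n/2})=O(n^4 2^{n/2})$, matching the computation $\sum_{i=1}^{n^2} i2^{n/2}=O(n^4 2^{n/2})$ displayed just before the theorem.

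The step I expect to be the main obstacle is making the correctness statement quantitatively honest: one has to turn ``$\theta_i$ to $m$ bits'' into a genuine error bound on $S$, and since $S$ weights the counts $num_i$ (each as large as $2^n$) by the exponentially small factors $2^{-(i+n)}$, care is needed about whether $m$ phase bits buy a relative or merely an absolute error on $S$. This is exactly the subtlety flagged in the introduction -- that the target probabilities can be so small that the sampling error swamps the answer -- and it is also where the implicit hypothesis $m=O(n)$, needed for the $O(2^{n/2})$ Grover-iteration count, must be made explicit.
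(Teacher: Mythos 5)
Your proposal is correct and follows essentially the same route as the paper's own proof: invoke Theorem~\ref{quancountcorr} once per stage $i$, charge $O(i)$ oracle cost and $O(2^{n/2})$ Grover iterations per stage so the total time is $\sum_{i=1}^{n^2} i\,2^{n/2}=O(n^4 2^{n/2})$, and handle the $n^2$ calls by taking per-call error of order $1/n^2$ (the paper picks $\epsilon=\frac{1}{n^2}$, which is exactly where the $\log(2+\frac{n^2}{2})$ in the register count comes from). If anything, your union-bound bookkeeping and your explicit flagging of the $m=O(n)$ assumption and the absolute-versus-relative error issue on $S$ are more careful than the paper's own probability calculation.
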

\begin{proof}
This proof comes immediately from Theorem \ref{quancountcorr}. For a given  $x$ $QCA$ allows us to find the number of programs satisfying our property in $O(2^{n/2})$ steps. At each stage the universal Turing machine runs for $2^{i}$ time, so the total time of the algorithm is $\sum_{i=0}^{n^2} i 2^{n/2} =O(n^42^{n/2})$. The accuracy of each individual $\theta$ at a given stage is $m$ and is correct with probability $1-\epsilon$. This means that the probability of correctness of all the stages is $(1+\epsilon)^{n^2}$, thus for the probability of the complete algorithm being correct is $1+(1+\epsilon)^{n^2}-1$, therefore if we choose $\epsilon=\frac{1}{n^2}$ for each $QCA$ we will have sufficiently small final error.
\end{proof}
Although the time is still exponential, this quadratic reduction in time may still be useful. As mentioned earlier, to calculate the conditional finitely computable speed prior $S(x|y)$, one needs only to run the above algorithm twice.

\subsubsection{Exponential Speedup for Speed Prior}

Our second quantum algorithm to compute $S$ will be similar to the classical algorithm for computing $S(x)$, however we will use the modified Deutsch-Jozsa  algorithm as a subroutine to check every program $p\in\{0,1\}^{n}$.

Formally let \[f_{i}(p) = \begin{cases}
	1 &\text{if } p\to_i x \text{ for } p\in \{0,1\}^n \\
	0 &\text{otherwise}
\end{cases}\]

For $f_i$ 
 let $L$ be the number of elements mapped to 1, and $2^n-L$ be the number of elements mapped to 0. Recall from section \ref{DJsec} that the modified Deutsch-Jozsa algorithm is as follows, $f$ is no longer restricted to the condition that either $f(x)=0$ for all $x$ or $f(x)=0$ for exactly half of the inputs $x$.
 \begin{lemma}
	Using the Quantum Counting Algorithm for $f_i$ with $O\left( \frac{k}{\epsilon^2} \right)$ trials, the absolute error in estimating $L$ is at most $\epsilon$ with probability at most $2e^{-2k}$.
\end{lemma}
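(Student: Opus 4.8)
The plan is to reproduce the proof of Lemma~\ref{DJlem} with the oracle $f$ there replaced by $f_i$, since the quantum subroutine used in this section is the modified Deutsch--Jozsa algorithm of Section~\ref{DJsec} (the statement's reference to the Quantum Counting Algorithm should be read in that light; if one instead literally runs the algorithm of Theorem~\ref{quancountcorr} on $f_i$, a single run already returns an estimate of $L$ to any prescribed accuracy with probability $1-\epsilon'$, and the trial count below is then the number of repetitions needed to boost that confidence to $1-2e^{-2k}$ by a majority vote, again via a Hoeffding bound). Concretely, I would run the subroutine $m$ independent times on $f_i$, let $X_1,\dots,X_m\in\{0,1\}$ record whether each run measures the all-zeros string, so that the $X_j$ are i.i.d.\ Bernoulli with $\mathbb{E}[X_j]=\bigl((2^n-2L)/2^n\bigr)^2$, and put $X=\tfrac1m\sum_{j=1}^m X_j$.

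The concentration step is Hoeffding's inequality essentially verbatim: since the $X_j$ are independent and $[0,1]$-valued,
\[
  \Pr\!\left(\,\bigl|X-\mathbb{E}[X]\bigr|\ge\epsilon\,\right)\ \le\ 2e^{-2m\epsilon^2},
\]
so taking $m=k/\epsilon^2$ bounds the failure probability by $2e^{-2k}$ and gives $m=O(k/\epsilon^2)$ trials. It then remains to convert an $\epsilon$-control on $|X-\mathbb{E}[X]|$ into a control on the error in the estimate of $L$, using $L/2^n=\tfrac12\bigl(1-\sqrt{\mathbb{E}[X]}\bigr)$ and solving for $L$ from the empirical value $X$ (together with the fact, as in the proof of Theorem~\ref{quancountcorr}, that any residual bias coming from the internal accuracy of a single subroutine call can be made $\ll\epsilon$).

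The main obstacle is this last inversion. The map $u\mapsto\tfrac12\bigl(1-\sqrt{u}\bigr)$ has derivative $-1/(4\sqrt{u})$, which is unbounded as $u\to0$, i.e.\ precisely when $L\approx 2^{n-1}$; so a uniform additive error $\epsilon$ on $X$ does not translate into a uniform additive error on $L/2^n$ unless one either restricts $L$ away from the degenerate value $2^{n-1}$ or states the conclusion on the range where this inversion is Lipschitz. One must also be explicit about normalisation: an $\epsilon$-error in the \emph{fraction} $L/2^n$ is a $2^n\epsilon$-error in $L$ itself, so the clean $O(k/\epsilon^2)$ count is really the cost of estimating the fraction (equivalently the unweighted version of $S_i(x)$) to additive accuracy $\epsilon$, whereas pinning $L$ down to a fixed additive accuracy would reintroduce an exponential factor in the number of trials. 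I would therefore phrase the bound in terms of $L/2^n$, note the rescaling, and then the Hoeffding estimate above closes the argument.
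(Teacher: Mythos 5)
Your proposal matches the paper's approach: the paper's proof of this lemma is simply ``This follows immediately from Lemma~\ref{DJlem}'', and Lemma~\ref{DJlem} itself is proved exactly as you do, by treating the $m$ runs of the modified Deutsch--Jozsa subroutine as i.i.d.\ outputs and applying Hoeffding's inequality with $m=k/\epsilon^2$. Your additional remarks on normalisation ($L$ versus $L/2^n$) and on the non-Lipschitz inversion near $L\approx 2^{n-1}$ go beyond what the paper addresses, but they do not change the fact that the core argument is the same.
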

\begin{proof}
	This follows immediately from Lemma \ref{DJlem}.
\end{proof}
Now for the quantum algorithm we let $D_i$ denote the output of our modified Deutsch-Jozsa algorithm with the oracle function $f_{i}$, and let the $r$-average of a quantum algorithm be the empirical mean of the quantum algorithm with $r$ trials.

Then our quantum algorithm to compute the finitely computable speed prior is as follows:
 
\begin{center}
 \begin{algorithm}[H]
\label{qexpspdalg}
\SetAlgoLined
Given $x$\;
  $S:= 0$\;
  $n:=\ell (x)$\;
  $i:=1$\;
  \eIf{$x=\lambda$}{$S:=1-2^{-n^2}$}{
 \While{$i\leq n^2$}{$num_i:= r\text{-average of }D_i$\; $S := S+2^{-(i+n)}\cdot num_i$\;
 $i:=i+1$\;
 }}
 \KwResult{$S$ }
 \caption{Quantum Speed prior algorithm}
\end{algorithm}	
\end{center}

\begin{theorem}
\label{expspdthm}
	The quantum speed prior algorithm computes the speed prior with absolute error at most $O(\epsilon)$ with probability at most $2n^2e^{-2k}$ in time $O\left(\frac{k}{\epsilon^2}n^4\right)$
	\end{theorem}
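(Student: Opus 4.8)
The plan is to control the error of Algorithm~\ref{qexpspdalg} one phase at a time and then combine the $n^2$ phases by a union bound, with the per-phase guarantee supplied by the lemma immediately preceding the theorem (which is just Lemma~\ref{DJlem} applied to the oracle $f_i$). For $x=\lambda$ the algorithm outputs $1-2^{-n^2}=\sum_{i=1}^{n^2}2^{-i}$ exactly, with no error and no failure probability, so assume $x\succ\lambda$ and write $L_i=|\{p\in\{0,1\}^n:p\to_i x\}|$, so that the true phase-$i$ contribution is $2^{-i}S_i(x)$ with $S_i(x)=L_i/2^n$. First I would invoke the preceding lemma: with $r=O(k/\epsilon^2)$ trials, the $r$-average $num_i$ (after the classical rescaling that turns the Deutsch--Jozsa statistic into an estimate of $S_i(x)$) fails to be $\epsilon$-accurate with probability at most $2e^{-2k}$. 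Then a union bound over $i=1,\dots,n^2$ shows that \emph{some} phase is inaccurate with probability at most $n^2\cdot 2e^{-2k}=2n^2e^{-2k}$; on the complementary event, writing $\widehat S(x)$ for the output, each $num_i$ is $\epsilon$-close to what it estimates and the phase weights $2^{-i}$ make the errors combine as a geometric sum,
\[ \left|\widehat S(x)-S(x)\right| \;=\; \left|\sum_{i=1}^{n^2} 2^{-i}\left(num_i-S_i(x)\right)\right| \;\le\; \sum_{i=1}^{n^2} 2^{-i}\,\epsilon \;<\;\epsilon, \]
which is $O(\epsilon)$ once the constants hidden in the lemma's $O(\cdot)$ are absorbed. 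This is exactly the claimed accuracy and failure probability.

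For the running time, each of the $n^2$ phases runs the modified Deutsch--Jozsa circuit $r=O(k/\epsilon^2)$ times; one run acts on $n+1$ qubits with $O(n)$ Hadamards, one query to $U_{f_i}$, and $O(1)$ classical post-processing, and the query cost is dominated by deciding whether $p\to_i x$, which (as in the proof that $S\in\#\mathbf{P}$, matching the inner-loop cost of the classical Algorithm~\ref{classpdalg}) takes time $O(n+i)$. Summing over phases,
\[ \sum_{i=1}^{n^2} O\!\left(\tfrac{k}{\epsilon^2}\right)\cdot O(n+i)\;=\;O\!\left(\tfrac{k}{\epsilon^2}\right)\cdot O(n^4)\;=\;O\!\left(\tfrac{k}{\epsilon^2}\,n^4\right), \]
as required. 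Note that this merely replaces the factor $2^n$ of the classical algorithm by the trial count $k/\epsilon^2$, so it is a genuine speedup only while $\epsilon$ is not taken exponentially small --- consistent with the paper's remark that this is the cruder of the two approximations.

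The step I expect to be the main obstacle is the rescaling hidden inside the per-phase bound. The Deutsch--Jozsa measurement only estimates the probability $\left((2^n-2L_i)/2^n\right)^2$, and $S_i(x)=L_i/2^n$ is recovered through $p\mapsto(1-\sqrt p)/2$, which is not Lipschitz near $p=0$, so an $\epsilon$-accurate estimate of the measured probability need not yield an $\epsilon$-accurate estimate of $S_i(x)$; one also has to reconcile the normalization between $L_i$, $S_i(x)$, and the weight attached to phase $i$. To make the argument rigorous one must either state the per-phase error exactly where Lemma~\ref{DJlem} states it (in terms of $L_i/2^n$) or carry an extra factor through the geometric sum --- which is precisely why the conclusion reads $O(\epsilon)$ rather than $\epsilon$. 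Once that bookkeeping is pinned down, the remaining ingredients (a union bound over the $n^2$ phases and a geometric series with ratio $\tfrac12$) are routine.
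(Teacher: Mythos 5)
Your proposal follows essentially the same route as the paper's proof: the per-phase $\epsilon$-accuracy with failure probability $2e^{-2k}$ supplied by the preceding lemma (i.e.\ Lemma~\ref{DJlem} applied to $f_i$), a Boole/union bound over the $n^2$ phases giving $2n^2e^{-2k}$, a geometric sum of the weighted per-phase errors giving $O(\epsilon)$, and a time count of $\sum_{i=1}^{n^2} O\left(i\,\frac{k}{\epsilon^2}\right)=O\left(n^4\frac{k}{\epsilon^2}\right)$. The rescaling and normalization subtlety you flag (the Deutsch--Jozsa statistic estimates $\left((2^n-2L_i)/2^n\right)^2$ rather than $L_i/2^n$ directly, and the algorithm accumulates with weight $2^{-(i+n)}$) is glossed over in the paper's own proof as well, so your write-up is, if anything, slightly more careful along the same path.
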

\begin{proof}
The absolute error at each $num_i$ step (line 8) is at most $\epsilon$, therefore the absolute error in line $9$ is at most $2^{-(i+n)}\epsilon$, therefore the absolute error in the final $S$ is at most \[ \sum_{i=1}^{n^2} 2^{-(i+n)}\epsilon = (1-2^{-n^2})2^{-n}\epsilon = O(\epsilon), \] by Boole's inequalities the probability of $num_i$ having absolute error $\epsilon$ for all $i$ is bounded by $2n^2e^{-2k}$, since the probability of a single $num_i$ having absolute error $\epsilon$ is at most $2e^{-2k}$.
By lemma 6.4.1 we have that the number of trials required to achieve an absolute error of at most $\epsilon$ with probability at most $2e^{-2k}$ is $m=\frac{k}{\epsilon^2}$. Since the computation of each $D_i(n)$ takes $O(i)$ time, then each $num_i$ will take $O(i\frac{k}{\epsilon^2})$ time. Addition and multiplication take constant time. Therefore the total time taken is \[ \sum_{i=1}^{n^2} O\left( i\frac{k}{\epsilon^2}\right) = O\left( n^4\frac{k}{\epsilon^2}\right) \]
\end{proof}

Although this appears to be an exponential speedup over the classical method there is a problem. Our average output may be exponentially small, potentially of size $O(2^{-n})$. This would mean that we need $\epsilon$ to be at most $2^{-n}$, which means the time taken will be $O\left(k2^{2n}n^4\right)$ which is not a speedup on the classical methods which take time $O(n^42^n)$. This also means that for the conditional speed prior, which is the quotient of two speed priors, we not have a speedup over classical methods either.
\begin{exmp}
	Let $y$ be a random binary sequence, then to compute $S(1|y)$ we need to compute $S(y)$ and $S(y1)$. The $\epsilon$ required for $S(y)$ will be at most $2^{-n}$, where $n=\ell (y)$ since $y$ is a random sequence. Then by the previous theorem the time required will be $O\left(\frac{k}{(2^{-n})^2}n^4\right) =O(kn^4 2^{2n}) $.
\end{exmp}
This example shows that there are cases where \ref{qexpspdalg} has no speed improvement over the classical method. However we may be able to avoid the occurrence of small $\epsilon$'s.

\subsubsection{Quasi-conditional prediction} \label{quasicondsec}
As mentioned earlier, it is unlikely that we will be able to find a quantum algorithm to compute $S$ exponentially faster than our classical algorithm. This is because we may not suffer the problem of small probabilities. Since we cannot exactly take a quotient mid quantum computation then we need to use some structure of the problem to do something similar. This leads us to the quasi-conditional speed prior, which is exponentially faster than classical methods. Additionally, we would like our quasi-conditional speed prior to predict ``well'' in the sense that it does not differ too much from the conditional speed prior.

Formally, our quasi-conditional speed prior is defined as follows:

\begin{definition}
	The quasi-conditional finitely computable speed prior of $x\in \{0,1\}^*$ given $y\in \{0,1\}^*$ is \begin{equation}
		  S'(x,y) = \sum_{i=1}^{n^2}2^{-i}S_i' (x,y); \ \text{ where } \ S'_i(\lambda,y )=1;\ S'_i(x,y) = \sum_{p\in\mathbb{B}^n:\ py\to_i yx}2^{-\ell(p)} \text{ for } x\succ \lambda
	\end{equation}
	Where $n = \ell(x)$
\end{definition}

Our quantum algorithm to compute $S'$ will be similar essentially the same as the quantum algorithm used to compute $S$.

Formally let \[f_{i}'(p) = \begin{cases}
	1 &\text{if } py*\to_i yx \text{ for } p\in \{0,1\}^n \\
	0 &\text{otherwise}
\end{cases}\]

Just like before 
 let $L$ be the number of elements mapped to 1, and $2^n-L$ be the number of elements mapped to 0.

\begin{lemma}
	Using the Quantum Counting Algorithm for $f_i'$ with $O\left( \frac{k}{\epsilon^2} \right)$ trials, the absolute error in estimating $L$ is at most $\epsilon$ with probability at most $2e^{-2k}$.
\end{lemma}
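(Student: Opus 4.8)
The plan is to reduce this statement directly to Lemma \ref{DJlem}, exactly as was done for the structurally identical lemma about $f_i$ one subsection earlier. The key observation is that $f_i' : \{0,1\}^n \to \{0,1\}$ is an ordinary Boolean function on $n$-bit strings, of precisely the same shape as the oracle functions handled by the modified Deutsch-Jozsa algorithm of Section \ref{DJsec}: the only change relative to $f_i$ is the predicate being tested ($py \to_i yx$ in place of $p \to_i x$), and this does not affect the fact that $f_i'$ is $\{0,1\}$-valued with exactly $L$ preimages of $1$. So nothing about the algorithm or its analysis changes; the claim is just a re-labelling of Lemma \ref{DJlem} with a different oracle.

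Concretely, I would first run the modified Deutsch-Jozsa algorithm with oracle $U_{f_i'}$ and, as in Section \ref{DJsec}, apply the classical post-processing $\tfrac{1-\sqrt{|(2^n-2L)/2^n|^2}}{2}$ to each run, so that each trial produces an independent estimator $X_j \in \{0,1\}$ of $L/2^n$, with the $m$ trials i.i.d. Then I would apply Hoeffding's inequality to the empirical mean $X = \tfrac1m\sum_{j=1}^m X_j$ to get $\Pr\!\big(|X - L/2^n| \ge \epsilon'\big) \le 2e^{-2m\epsilon'^2}$; setting $m = k/\epsilon'^2$ bounds the failure probability by $2e^{-2k}$. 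Finally I would rescale by $2^n$ to translate an error bound on $L/2^n$ into one on $L$, absorbing the resulting constant into the $O(\cdot)$, which leaves $m = O(k/\epsilon^2)$ trials as claimed. In short, the proof is: this follows immediately from Lemma \ref{DJlem} applied to $f_i'$.

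I do not expect a real obstacle here. The only point requiring any care — and the sole respect in which this is not a verbatim copy of the preceding proof — is the bookkeeping between the normalised count $L/2^n$ that Deutsch-Jozsa naturally estimates and the unnormalised $L$ named in the statement, since a multiplicative factor of $2^n$ intervenes; but this factor is swallowed by the $O$-notation and does not touch the exponential concentration, so it is a cosmetic rather than a substantive issue.
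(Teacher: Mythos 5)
Your proposal is correct and follows essentially the same route as the paper, whose entire proof is the one-line reduction ``this follows immediately from Lemma \ref{DJlem}'' applied with the oracle $f_i'$ in place of $f_i$. Your extra remark about the normalisation between $L$ and $L/2^n$ is a fair observation about the statement's bookkeeping, but it does not change the argument.
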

\begin{proof}
	This follows immediately from Lemma \ref{DJlem}.
\end{proof}

Now for the quantum algorithm we let $D_i'$ denote the output of our modified Deutsch-Jozsa algorithm with the oracle function $f_{i}'$, and let the $r$-average of a Quantum algorithm be the empirical mean of the Quantum algorithm with $r$ trials.

Then our quantum algorithm to compute the quasi-conditional finitely computable speed prior is as follows:

\begin{center}
\begin{algorithm}[H]
\label{quasispdalg}
\SetAlgoLined
Given $x$\;
  $S':= 0$\;
  $n:=\ell (x)$\;
  $i:=1$\;
  \eIf{$x=\lambda$}{$S':=1-2^{-n}$}{
 \While{$i\leq n^2$}{$num_i:= r\text{-average of }D_i'$\; $S' := S'+2^{-(i+n)}\cdot num_i$\;
 $i:=i+1$\;
 }}
 \KwResult{$S'$ }
 \caption{Quantum Quasi-conditional speed prior algorithm}
\end{algorithm}
\end{center}

Then to show the computation time for our algorithm we have the following theorem. The proof almost exactly like that of the regular speed prior.
\begin{theorem} 
	The Quantum quasi-conditional speed prior algorithm computes the quasi speed prior, $S'$, with absolute error at most $O(\epsilon)$ with probability at most $2n^2e^{-2k}$ in time $O\left(\frac{k}{\epsilon^2}n^4\right)$
	\end{theorem}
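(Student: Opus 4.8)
The plan is to reuse, almost verbatim, the argument of Theorem~\ref{expspdthm}, since Algorithm~\ref{quasispdalg} is structurally identical to Algorithm~\ref{qexpspdalg} with the oracle $f_i$, the estimator $D_i$ and the accumulator $S$ replaced throughout by $f_i'$, $D_i'$ and $S'$. First I would note that $num_i$ in line~8 is the $r$-average of the modified Deutsch--Jozsa output $D_i'$ with oracle $f_i'$, so that by the lemma stated immediately before this theorem (an instance of Lemma~\ref{DJlem}) choosing $r=O(k/\epsilon^2)$ trials makes $num_i$ lie within absolute error $\epsilon$ of the quantity it estimates with failure probability at most $2e^{-2k}$. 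Writing $L_i=|\{p\in\mathbb{B}^n : py\to_i yx\}|$, the definition of $S'$ gives $S_i'(x,y)=2^{-n}L_i$, hence $S'(x,y)=\sum_{i=1}^{n^2}2^{-(i+n)}L_i$, which is exactly the sum that lines~8--9 accumulate.

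Next I would propagate the error. Conditioning on the event that every one of the $n^2$ estimates $num_i$ is within $\epsilon$ of its target, the stage-$i$ contribution $2^{-(i+n)}num_i$ differs from the exact $2^{-(i+n)}L_i$ by at most $2^{-(i+n)}\epsilon$, so the total absolute error of the returned $S'$ is at most
\[ \sum_{i=1}^{n^2}2^{-(i+n)}\epsilon=(1-2^{-n^2})2^{-n}\epsilon=O(\epsilon), \]
the $x=\lambda$ branch being returned in closed form with no error. For the failure probability, a single $num_i$ deviates by more than $\epsilon$ with probability at most $2e^{-2k}$, so by Boole's inequality over the $n^2$ stages the probability that some $num_i$ fails is at most $2n^2e^{-2k}$, which is the stated bound.

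For the time bound I would argue as in Theorem~\ref{expspdthm}: evaluating $f_i'$ on a program $p$ amounts to running the fixed prefix $p$ prepended to $y$ under the FAST schedule through PHASE~$i$, which costs $O(i)$ steps just as for $f_i$, so one run of $D_i'$ is $O(i)$, the $r$-average is $O(i\,k/\epsilon^2)$, and the arithmetic in lines~8--9 is $O(1)$; summing over stages,
\[ \sum_{i=1}^{n^2}O\!\left(i\,\frac{k}{\epsilon^2}\right)=O\!\left(n^4\,\frac{k}{\epsilon^2}\right). \]
I do not expect a substantial obstacle --- the statement is essentially a transcription of Theorem~\ref{expspdthm}. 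The one point that deserves an explicit line is checking that $f_i'$ is a legitimate Deutsch--Jozsa oracle of the same $O(i)$ evaluation cost as $f_i$; this holds because deciding $py\to_i yx$ is just the unconditional $\to_i$ check applied to the program $p$ (with $y$ held fixed), and therefore stays polynomial-time computable and, within PHASE~$i$, bounded by $O(i)$ instructions.
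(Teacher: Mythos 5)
Your proposal is correct and follows exactly the route the paper takes: the paper's own proof is simply the remark that the argument is identical to that of Theorem \ref{expspdthm}, which is the transcription you carry out (oracle $f_i'$ in place of $f_i$, per-stage error $\epsilon$ via Lemma \ref{DJlem}, geometric error propagation, union bound over the $n^2$ stages, and the $\sum_i O(i\,k/\epsilon^2)=O(n^4 k/\epsilon^2)$ time bound). Your extra check that evaluating $f_i'$ still costs $O(i)$ is a sensible detail the paper leaves implicit.
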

	The proof of this theorem is identical to the proof of \ref{expspdthm}.

Thus we can compute the quasi-conditional speed prior exponentially faster than the classical conditional speed prior. However, we have not yet shown that the quasi-speed prior is a reasonable approximation of the classical conditional speed prior.

\begin{conjecture}
	The quasi-conditional speed prior approximates the conditional speed prior sufficiently well.
\end{conjecture}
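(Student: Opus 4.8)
The plan is first to make the statement precise, since ``sufficiently well'' is informal. The strongest reasonable target, by analogy with Theorem \ref{solthm}, is a predictive guarantee: for every computable measure $\mu$ generating $x_1 x_2\ldots$, the one-step quasi-conditional predictor should satisfy $\mathbb{E}_{\mu}\!\left(\sum_{t}(S'(x_{t+1}=1,x_1\ldots x_t)-\mu(x_{t+1}=1|x_1\ldots x_t))^2\right)=O(K(\mu))$, modulo an additive term accounting for the fact that $S'$ uses only finitely many $FAST$ phases and length-bounded programs. A more modest, and more clearly attainable, first target is a two-sided multiplicative comparison $S'(x,y)\leq c(y)\,S(x|y)$ and $S(x|y)\leq c(y)\,S'(x,y)$ with $\log c(y)=O(K(y))$; I would establish this and then telescope it into the squared-error bound.

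The heart of the argument is a pair of coding lemmas linking the three building blocks $S_i'(x,y)$, $S_i(yx)$ and $S_i(y)$. For the upper direction, take a length-$\ell(x)$ program $p$ with $py\to_i yx$ and prepend a fixed ``write $y$, then simulate'' preamble of length $K(y)+O(1)$; tracking how the per-phase instruction budget $\lfloor 2^{j-\ell(p)}\rfloor$ of $FAST$ rescales when the program gets longer shows the resulting program outputs $yx$ by phase $i+O(K(y))$, which after summing the weights $2^{-i}$ gives $S'(x,y)\leq 2^{O(K(y))}S(yx)$. For the lower direction one argues conversely that a constant fraction of the weight of $S(yx)$ is carried by programs of the form ``(short generator of $y$) followed by ($y$-conditional continuation)''; peeling off the generator, whose aggregate weight is comparable to $S(y)$, exhibits the continuation as a legitimate contributor to $S'(x,y)$, yielding $S(yx)\leq 2^{O(K(y))}S(y)\,S'(x,y)$. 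Combining the two directions places $S'(x,y)$ within a $2^{O(K(y))}$ factor of $S(yx)/S(y)=S(x|y)$, and — granting that $S$ itself approximately dominates every computable $\mu$, which is a nontrivial point given the truncation — a telescoping argument modelled on the proof of Theorem \ref{solthm} converts this into the cumulative squared-error bound.

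The main obstacle, and the reason the statement remains a conjecture, is the mismatch between the three truncations. The quantity $S(yx)$ sums over $(\ell(y)+\ell(x))^2$ phases and length-$(\ell(y)+\ell(x))$ programs, $S(y)$ over $(\ell(y))^2$ phases, but $S'(x,y)$ over only $(\ell(x))^2$ phases and length-$\ell(x)$ programs; moreover each $S_i$ ranges over programs of one exact length, so the padding and prepending used in the coding lemmas must be arranged to land on a permitted length. When $\ell(x)$ is small relative to $\ell(y)$ — in particular for the single-symbol predictions $\ell(x)=1$ on which a Solomonoff-style bound rests, where $S'$ performs just one $FAST$ phase over length-$1$ programs — the quasi-conditional has nowhere near enough room to host the continuation programs the coding lemmas require, and the comparison constants genuinely blow up. I expect the honest resolution to be either a redefinition of the quasi-conditional truncation in terms of $\ell(xy)$ rather than $\ell(x)$, or a restriction of the claim to the regime $\ell(x)=\Omega(\ell(y))$; proving the bound under such a modification, together with the careful $FAST$ phase-shift and exact-length bookkeeping in the coding lemmas, is where essentially all the difficulty lies.
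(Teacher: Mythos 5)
The paper offers no proof of this statement: it is posed as an open conjecture, supported only by the informal observation (which the paper itself flags as ``not a proof'') that the quasi-conditional algorithm avoids forcing $\epsilon$ to be exponentially small. Your submission is likewise not a proof, and to your credit you say so; what you give is a formalization of ``sufficiently well'', a proposed route through two coding lemmas, and a diagnosis of why that route does not yet go through. Judged as a proof, the genuine gap sits exactly where you locate it: neither coding lemma is established. In the upper direction, prepending a ``write $y$'' preamble of length $K(y)+O(1)$ yields a program whose length is no longer $\ell(x)$, whereas $S'_i(x,y)$ sums only over $p\in\mathbb{B}^{\ell(x)}$ and only over phases $i\le \ell(x)^2$, so the constructed programs simply do not occur in the sum; in the lower direction, the claim that a constant fraction of the weight of $S(yx)$ factors through a short generator of $y$ is asserted rather than argued, and Kolmogorov-style coding arguments do not automatically respect the $FAST$ phase budget, so they cannot be imported wholesale into this time-bounded setting.

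Your observation about the truncation mismatch is, moreover, sharper than anything in the paper and suggests the conjecture may fail as literally formalized: for one-symbol prediction $\ell(x)=1$, the quasi-conditional prior runs a single phase $i=1$ on programs $py$ of length $1+\ell(y)$, and PHASE $1$ allots $\lfloor 2^{1-\ell(py)}\rfloor = 0$ instructions once $\ell(y)\ge 1$, so $S'$ degenerates in precisely the regime a Solomonoff-style cumulative bound (Theorem \ref{solthm}) would need. Your suggested repairs --- re-truncating by $\ell(xy)$ rather than $\ell(x)$, or restricting to $\ell(x)=\Omega(\ell(y))$ --- are reasonable and not considered in the paper, but until the phase-shift bookkeeping, the exact-length constraint, and the domination of computable $\mu$ by the truncated $S$ are actually carried out, the proposal identifies the obstacles rather than overcoming them, and the conjecture remains open.
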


The same example does not cause the quasi-speed prior algorithm to take an exponential amount of time since the $\epsilon$ is not going to be as small, this is because for $S'$ the program begins with the sequence $y$, so being a random sequence does not require the $\epsilon$ to be of size $2^{-n}$. This is not a proof that there does not exist a sequence, or there only exists a small set of sequences which cause extremely small $\epsilon$.

\subsection{AIXIq}
Recall that the agent AIXI can be written as taking action $a_k$ where
\begin{equation}
	a_k := \arg \max_{a_k} \sum_{o_k r_k}\ldots \max_{a_m} \sum_{o_m r_m} [r_k+\ldots r_m] \sum_{q:U(q,a_1\ldots a_m ) = o_1r_1\ldots o_m r_m} 2^{\ell(q)}
\end{equation}

Observe that one can modify AIXI to use the speed prior instead of Solomonoff induction. The resulting agent, AIXI-Spd, will take action $a_k$ where
\begin{equation} \label{aixispdeq}
	a_k := \arg \max_{a_k} \sum_{o_k r_k}\ldots \max_{a_m} \sum_{o_m r_m} [r_k+\ldots r_m] S(o_1r_1\ldots o_m r_m| a_1\ldots a_m)
\end{equation}
Using the classical algorithm the time taken to compute AIXI-Spd is
\begin{theorem}
	Computational time of compute AIXI-Spd using \ref{classpdalg} and search over observations and rewards is $O(|\mathcal{O}|^m|\mathcal{A}|^m (nm)^42^{nm})$. 
	
	Where $|\mathcal{O}|$ is the size of the set of observations, $|\mathcal{A}|$ is the size of the set of actions, and each element of $\mathcal{O},\mathcal{A}$ is bounded in size by $n$.
\end{theorem}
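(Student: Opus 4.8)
The plan is to bound the running time of the expectimax recursion in \ref{aixispdeq} by separately controlling (i) the number of nodes in the search tree, (ii) the work performed at each leaf, and (iii) the work performed at each internal node, and then to observe that the leaf cost dominates. First I would analyse the tree itself. The recursion alternates a maximisation over an action $a_j$ with a summation over an observation--reward pair $o_j r_j$ for $j$ running from the current step $k$ up to the horizon $m$; hence there are at most $m$ action levels and at most $m$ observation levels. Each action level branches into $|\mathcal{A}|$ children, and each observation level branches into $O(|\mathcal{O}|)$ children, treating the reward range as contributing at most a constant factor (equivalently, folding it into the observation alphabet), which is the convention under which the stated bound is an equality. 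Summing the geometric series of node counts over the at most $m$ levels of each type shows the tree has $O(|\mathcal{A}|^m|\mathcal{O}|^m)$ leaves, and the same bound up to a constant on its total number of nodes.

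Next I would bound the leaf cost. At a leaf the agent must evaluate the conditional speed prior $S(o_1r_1\ldots o_mr_m \mid a_1\ldots a_m)$, which by definition is a ratio of two values of the finitely computable fixed-length speed prior on strings built from the history and the actions. Since each observation, reward, and action symbol has size at most $n$ and there are $O(m)$ of them, those strings all have length $N = O(nm)$. By the runtime analysis following Algorithm \ref{classpdalg} in Section \ref{classspdsec}, one evaluation of $S$ on a string of length $N$ costs $O(N^4 2^N)$, and the conditional requires two such evaluations; hence each leaf costs $O((nm)^4 2^{nm})$. Each internal node, on the other hand, only performs a maximum or a sum over its $O(|\mathcal{A}|)$ or $O(|\mathcal{O}|)$ children together with the reward additions $[r_k+\cdots+r_m]$, which is $\mathrm{poly}(m,|\mathcal{A}|,|\mathcal{O}|)$ and is dominated by the leaf cost.

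Multiplying the number of leaves by the per-leaf cost, and absorbing the polynomially bounded internal-node work, gives total time $O\!\left(|\mathcal{O}|^m|\mathcal{A}|^m (nm)^4 2^{nm}\right)$, which is the claimed bound.

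The computation is essentially routine; the only real care is in pinning down the conventions that make the statement precise --- namely that the reward range contributes only a constant (or at worst an $O(n)$) factor so it does not inflate the alphabet exponent, and that ``size at most $n$'' forces every history argument of $S$ to have length $O(nm)$, so that the exponential term in the speed-prior cost is exactly $2^{nm}$ and not something larger. I would also remark that the bound is loose: subtrees that share a common history prefix permit caching of the speed-prior values, so the true cost is smaller, but since the theorem asserts only an upper bound this refinement is unnecessary.
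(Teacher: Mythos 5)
Your proposal is correct and follows essentially the same route as the paper's proof: count the $|\mathcal{O}|^m|\mathcal{A}|^m$ interaction histories appearing in the expectimax expression and multiply by the $O((nm)^4 2^{nm})$ cost of each conditional speed-prior evaluation on strings of length $O(nm)$ from the classical algorithm's analysis. Your additional care about the tree structure, the reward alphabet convention, and the dominance of leaf cost over internal-node work only makes explicit what the paper leaves implicit.
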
 
\begin{proof}
	The time taken to compute $S(o_1r_1\ldots o_m r_m| a_1\ldots a_m)$ is $(nm)^42^{nm}$ when $o_i,a_i$ is bounded in size by $n$, shown in section \ref{classspdsec}. The number of possible observation reward pairs is $|\mathcal{O}|^m|\mathcal{A}|^m$, therefore the total time required to compute Equation \ref{aixispdeq} is $O((nm)^42^{nm} |\mathcal{O}|^m|\mathcal{A}|^m)$.
\end{proof}

Then from this we can define our AIXIq
\begin{definition}
	Given the history $o_1r_1\ldots o_{k-1} r_{k-1}$ where $o_i$ is the observation at the $i$th time step, and $r_i$ is the reward at the $i$th time step, AIXIq takes action $a_k$ defined by \begin{equation} \label{aixiqeq}
	a_k := \arg \max_{a_k} \sum_{o_k r_k}\ldots \max_{a_m} \sum_{o_m r_m} [r_k+\ldots r_m] S_q(o_1r_1\ldots o_m r_m| a_1\ldots a_m)
\end{equation}
Here $S_q$ is the quasi-conditional speed prior computing on a quantum computer using Algorithm \ref{quasispdalg}, with $\epsilon = \frac{1}{nm2^{|\mathcal{O}||\mathcal{A}|(m-k)}}$ and $k=100$.
\end{definition}

 The $\epsilon$ is chosen this way to prevent the cumulation of errors, as we need to consider the errors of the quasi-conditional speed prior with every combination of action and observation. Since we suspect that the quasi speed prior is a close to the speed prior, and the $\epsilon$ is chosen so that the action $a_k$ chosen is sufficiently close to the action chosen by AIXI-Spd, then it should follow that AIXIq is a good approximation of AIXI-Spd.

\begin{conjecture}
	AIXIq is a good approximation of AIXI-Spd
\end{conjecture}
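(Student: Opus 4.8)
The plan is to decompose the discrepancy between AIXIq and AIXI-Spd into two controllable pieces: (i) the gap between the quasi-conditional speed prior $S'$ and the genuine conditional speed prior $S(\cdot\mid\cdot)$, and (ii) the sampling/phase-estimation error incurred when $S'$ is replaced by its quantum estimate $S_q$ produced by Algorithm \ref{quasispdalg}. Piece (ii) is already in hand: by the theorem bounding the quantum quasi-conditional speed prior algorithm (whose proof mirrors that of Theorem \ref{expspdthm}), each invocation with $k=100$ and $\epsilon=\frac{1}{nm2^{|\mathcal{O}||\mathcal{A}|(m-k)}}$ returns a value within $O(\epsilon)$ of $S'$ except with probability at most $2n^2e^{-200}$. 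A union bound over the $O(|\mathcal{O}|^m|\mathcal{A}|^m)$ history prefixes that occur in the expectimax expression \ref{aixiqeq} then shows that, with overwhelming probability, every quantity feeding into AIXIq's action rule differs from the corresponding $S'$-based quantity by at most the accumulated $\epsilon$, which the choice of $\epsilon$ makes smaller than any prescribed tolerance.

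Next I would reduce piece (i) to the earlier conjecture that the quasi-conditional speed prior approximates the conditional speed prior sufficiently well; granting that conjecture in the quantitative form $|S'(x,y)-S(x\mid y)|\le\delta$, I would propagate $\delta$ through the nested structure $\sum_{o_kr_k}\cdots\max_{a_m}\sum_{o_mr_m}[r_k+\cdots+r_m]\,(\cdot)$ of the AIXI-Spd rule. Since $\max$ and $\sum$ over a fixed finite index set are $1$-Lipschitz in the sup-norm, and $[r_k+\cdots+r_m]$ is bounded (say by $r_{\max}m$, as is standard for AIXI), replacing $S(\cdot\mid\cdot)$ by a $\delta$-close quantity perturbs the value assigned to any candidate action $a$ by at most $r_{\max}m\,|\mathcal{O}|^m|\mathcal{A}|^m\,\delta$. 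Combining with piece (ii) yields $\bigl|V_{\text{AIXIq}}(a)-V_{\text{AIXI-Spd}}(a)\bigr|\le\eta$ for every action $a$, with $\eta\to0$ as the free parameters ($k$, $\epsilon$, and the accuracy in the $S'\approx S$ conjecture) are tightened.

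The final step converts value-closeness into action-closeness. The honest statement is conditional: if at the current history the AIXI-Spd value function has a strictly positive gap $\gamma$ between its best and second-best action, then for $\eta<\gamma/2$ AIXIq selects exactly the AIXI-Spd action $a_k$; absent such a gap one can only conclude that AIXIq picks an action that is $2\eta$-optimal for AIXI-Spd's own criterion. I expect this last step — together with, upstream of it, discharging the unproven conjecture $S'\approx S(\cdot\mid\cdot)$ — to be the main obstacle. The quantum and sampling errors are genuinely benign, but the quasi-conditional prior changes the \emph{semantics} of conditioning (it runs the concatenated program $py$ under the FAST schedule rather than forming a quotient of two speed priors), so bounding $|S'(x,y)-S(x\mid y)|$ needs a real argument about how program prefixes interact with the PHASE-$i$ time budget; and the $\arg\max$ is simply not Lipschitz, so without a gap hypothesis the best available conclusion is approximate optimality rather than identical behaviour.
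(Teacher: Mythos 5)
First, note that the paper does not prove this statement at all: it is stated as a conjecture, supported only by an informal remark that $\epsilon$ in Algorithm \ref{quasispdalg} is chosen small enough to prevent accumulation of errors over all action/observation combinations, together with the suspicion that the quasi-conditional speed prior is close to the conditional one. Your two-piece decomposition is essentially a formalization of that informal reasoning, and piece (ii) — controlling the quantum estimation error of $S_q$ relative to $S'$ via the error/probability bounds of the quasi-conditional algorithm and a union bound over the $|\mathcal{O}|^m|\mathcal{A}|^m$ terms in the expectimax expression, then propagating through sums and maxima with the bounded-reward Lipschitz argument — is sound and is the part the paper's theorems genuinely support.

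The genuine gap is piece (i), and you have correctly located it but not closed it: bounding $|S'(x,y)-S(x\mid y)|$ is precisely the paper's other open conjecture, and your argument is explicitly conditional on granting it in a quantitative form $\le\delta$ that the paper nowhere supplies. Since $S'$ changes the semantics of conditioning (it counts prefix programs $p$ with $py\to_i yx$ under the FAST phase budget rather than forming the quotient $S(xy)/S(x)$), there is no obvious reason the two quantities are uniformly close — for instance, programs contributing to $S(xy)$ need not factor through any prefix of the form $py$, and the phase-$i$ time accounting treats $py$ differently from $p$ alone — so a real argument about program prefixes and the PHASE-$i$ schedule is required, and none exists here or in the paper. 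Two further points keep even the conditional statement short of a proof: the paper never defines what ``good approximation'' means for agents (value-closeness? action agreement? regret?), so there is no precise target to verify; and, as you note, the $\arg\max$ step only yields action agreement under a positive action-gap hypothesis, otherwise merely $2\eta$-optimality with respect to AIXI-Spd's criterion. So your proposal is a reasonable and honest reduction, but it leaves the conjecture exactly as open as the paper does.
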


Then the time taken for AIIXq is as follows

\begin{theorem}	
	The computational time taken for AIXIq is $O(|\mathcal{O}|^m|\mathcal{A}|^m (nm)^62^{2(|\mathcal{O}||\mathcal{A}|(m-k))} )$
	
	Where $|\mathcal{O}|$ is the size of the set of observations, $|\mathcal{A}|$ is the size of the set of actions, and each element of $\mathcal{O},\mathcal{A}$ is bounded in size by $n$.
\end{theorem}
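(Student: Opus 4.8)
The plan is to mirror the proof of the AIXI-Spd timing theorem, substituting the quantum quasi-conditional speed-prior subroutine (Algorithm \ref{quasispdalg}) for the classical speed prior and tracking how the prescribed parameters $\epsilon = 1/(nm\,2^{|\mathcal{O}||\mathcal{A}|(m-k)})$ and $k=100$ flow through the cost. The first step is to bound the cost of one invocation of $S_q$. The argument passed to the quasi-conditional speed prior in Equation \ref{aixiqeq} is the history $o_1 r_1\ldots o_m r_m$ conditioned on $a_1\ldots a_m$; since each observation, reward and action symbol has length at most $n$ and there are $O(m)$ stages, this string has length $O(nm)$. Feeding length $O(nm)$ into the timing bound $O(k\,\epsilon^{-2}\,(\text{length})^4)$ of the quasi-conditional speed-prior timing theorem (the analogue of Theorem \ref{expspdthm}, whose proof it shares) gives a per-call cost of $O(k\,\epsilon^{-2}\,(nm)^4)$.

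Next I would substitute the prescribed parameters. Since $\epsilon^{-2} = (nm)^2\,2^{2|\mathcal{O}||\mathcal{A}|(m-k)}$ and $k=100$ is constant, one evaluation of $S_q$ costs $O\!\left((nm)^6\,2^{2|\mathcal{O}||\mathcal{A}|(m-k)}\right)$. It then remains to count the evaluations. The nested arg-max, max and sum operators in Equation \ref{aixiqeq} range over the actions and observation–reward pairs of the remaining stages; bounding the number of stages by $m$ exactly as in the AIXI-Spd theorem, the resulting expectimax tree has $O(|\mathcal{O}|^m|\mathcal{A}|^m)$ leaves and $S_q$ is called once per leaf, while the additions, summations and maximisations at internal nodes contribute only lower-order overhead (each over a list of at most $|\mathcal{O}||\mathcal{A}|$ numbers). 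Multiplying the leaf count by the per-call cost gives
\[ O\!\left(|\mathcal{O}|^m|\mathcal{A}|^m\,(nm)^6\,2^{2|\mathcal{O}||\mathcal{A}|(m-k)}\right), \]
as claimed.

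I expect everything above to be routine arithmetic; the only place to be careful is the bookkeeping — the symbol "$n$" in the speed-prior timing bound must be re-read as the length $O(nm)$ of the entire interaction history, and the factor $2^{2|\mathcal{O}||\mathcal{A}|(m-k)}$ must be traced back to $\epsilon^{-2}$ rather than to the size of the search tree. I would also note, although it is not needed for the running-time statement, that the reason $\epsilon$ is taken this small is to keep the accumulated additive error over all $O(|\mathcal{O}|^m|\mathcal{A}|^m)$ calls to $S_q$ below the margin needed for the $\arg\max$ to coincide with that of AIXI-Spd; that correctness claim, unlike the timing claim proved here, depends on the conjecture that the quasi-conditional speed prior approximates the conditional speed prior well.
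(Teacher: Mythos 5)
Your proposal is correct and follows essentially the same route as the paper: bound the cost of a single call to $S_q$ on a history of length $O(nm)$ with the prescribed $\epsilon = 1/(nm\,2^{|\mathcal{O}||\mathcal{A}|(m-k)})$, obtaining $O((nm)^6 2^{2|\mathcal{O}||\mathcal{A}|(m-k)})$ per call, and multiply by the $O(|\mathcal{O}|^m|\mathcal{A}|^m)$ observation--reward sequences in the expectimax expression. Your version is in fact slightly more explicit than the paper's, which simply cites the quasi-conditional speed-prior timing result rather than spelling out the substitution of $\epsilon^{-2}$.
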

\begin{proof}
	The time taken to compute $S_q(o_1r_1\ldots o_m r_m| a_1\ldots a_m)$ is $(nm)^6 2^{2(|\mathcal{O}||\mathcal{A}|(m-k))}$ when $o_i,a_i$ is bounded in size by $n$ and $\epsilon = 1/(nm2^{|\mathcal{O}||\mathcal{A}|(m-k)})$, shown in section \ref{quasicondsec}. The number of possible observation reward pairs is $|\mathcal{O}|^m|\mathcal{A}|^m$, therefore the total time required to compute Equation \ref{aixiqeq} is $O(|\mathcal{O}|^m|\mathcal{A}|^m (nm)^6 2^{2(|\mathcal{O}||\mathcal{A}|(m-k))})$.
\end{proof}

Thus using quantum computing we can compute AIXI-Spd exponentially faster in $n$, although there is an exponential increase with respect to $|\mathcal{O}|$, $|\mathcal{A}|$ and $m$.

\section{Conclusion}
Quantum computing is an expanding field which has provided opportunities to perform computing in a completely different way to classical computing method. This has resulted in the study of the theoretical and practical aspects of quantum computing, we focused on the theoretical aspects.

We have provided a description of quantum computing, including the current state of quantum algorithms, a summary of the more popular quantum algorithms, including those used to gain speedups for classically hard problems, such as Shor's algorithm. Additionally we included evidence to suggest why it is unlikely that quantum computing could solve counting problems (exponentially) faster than classical computing. Counting is a major part of approximating the Speed prior thus we need to be able to show that  if we had a quantum algorithm which could approximate the Speed prior exponentially faster than classical methods then the polynomial hierarchy would collapse.

Additionally we presented two quantum algorithms to provide speedups for the problem of approximating the Speed prior. Firstly a quadratic speedup based on the quantum counting algorithm, and secondly an exponential speedup for a quasi Speed prior, since this was not the exact Speed prior it did not contradict the evidence mentioned previously, nor cause the polynomial hierarchy to collapse. Together we can use these algorithms to create an approximation of the super intelligent agent AIXI which is able to perform learning faster than a classical version.

There are many avenues to achieve Artificial General Intelligence, an approximation of the theoretically optimal agent AIXI is just one. All current approximations of AIXI are not viable. Our use of quantum computing techniques to construct a quantum algorithm to provide speedup over a classical algorithm is a way in which an approximation may become viable. The viability does then depend on the existence of reasonably-sized general purpose quantum computers, which at the time of writing do not exist.

We hope that future research may be able to come up algorithms which have improved computation time as well as physical quantum computing devices in which our quantum algorithms could be implemented.

\bibliographystyle{apalike}
\bibliography{TechReport_complete}

\section{List of Notation}
\begin{align*}
	\mathbb{N} &: \text{The set of Natural Numbers } \\
	\mathbb{Z} &: \text{The set of Integer Numbers }  \\
	\mathbb{Q} &: \text{The set of Rations Numbers }  \\
	\mathbb{R} &: \text{The set of Real Numbers }  \\
	\mathbb{C} &: \text{The set of Complex Numbers }  \\
	\tilde{\mathbb{C}} &: \text{The set of polynomial-time computable Complex Numbers } \\
	QTM &:\text{Quantum Turing Machine} \\
	\ket{0} &: \text{The vector } \begin{pmatrix}
		1 \\ 0
	\end{pmatrix} \\
	\ket{1} &: \text{The vector } \begin{pmatrix}
		0 \\ 1
	\end{pmatrix} \\
	\ket{a}\otimes\ket{b} &: \text{The tensor product of } \ket{a} \text{ and } \ket{b} \\
	a^{\dagger} &: \text{The conjugate transpose of } a \\
	H &: \text{The Hadamard gate} \\
	N &: \text{A natural number} \\
	\mathcal{QFT} &: \text{The Quantum Fourier Transform} \\
		\mathcal{QFT}^{-1} &: \text{The inverse Quantum Fourier Transform} \\
	\theta &: \text{An angle} \\
	\kappa &: \text{Condition number of a matrix} \\
	G &: \text{Grover iteration} \\
	\mathbf{P},\mathbf{NP},\mathbf{PP},\mathbf{BPP},\mathbf{EQP},\mathbf{BQP} &: \text{The complexity classes of the same names} \\
	perm &: \text{Permanent of a matrix} \\
	det &: \text{Determinant of a matrix} \\
	S &: \text{Speed prior} \\
	S' &: \text{Quasi-conditional Speed prior} \\
	K_T(x) &: \text{Kolmogorov complexity with Turing Machine } T \\
	\mathbb{E} &: \text{Expected value} 
\end{align*}

\begin{align*}
\ell &:\text{Length} \\
\mathbb{B} &: \text{The set } \{0,1\} \\
	\mathbb{B}^* &: \text{The set of all finite binary strings} \\
	\mathcal{P} &: \text{The power set} \\
	\mathcal{O} &: \text{The set of observations} \\
	\mathcal{A} &: \text{The set of actions} \\
	||\cdot ||_{tr} &: \text{The trace norm} \\
	Q &: \text{Set of Turing Machine states} \\
	L,R &: \text{Left and Right} \\
	\oplus &:\text{Addition modulo 2} \\
	\epsilon &: \text{Small positive real number usually denoting error} \\
	\delta &: \text{A real number, or the transition function} \\
	\mu &: \text{A computable measure} \\
	\phi &:\text{The phase} \\
	I &: \text{The identity matrix} \\
	c &: \text{Constant} \\
	M &:\text{Solomonoff Prior or a natural number} \\
	m &:\text{Number of bits of accuracy or the maximal time for AIXI} \\
	n &:\text{A number} \\
	k &:\text{A number} \\
	T &:\text{A number, usually denoting time} \\
	A &: \text{A matrix} \\
	L &: \text{A positive integer} \\
	x &:\text{A number or binary string} \\
	y &:\text{A number or binary string} \\
	z &:\text{A number or binary string} \\
	\rho &:\text{A number or binary string} \\
\end{align*}

\end{document}